\newcommand{\n}{\nabla}
\newcommand{\D}{\text{d}}
\DeclareMathOperator\arctanh{arctanh}
\definecolor{klgreen}{rgb}{0.0, 0.5, 0.0}
\definecolor{color1}{rgb}{1,0,0}
\definecolor{color2}{rgb}{0,0,1}
\definecolor{color3}{rgb}{0,1,0}
\definecolor{color4}{rgb}{0.6,0.4,0.2}
\definecolor{color5}{rgb}{1,0,1}
\definecolor{color6}{rgb}{1,0.5,0}
\definecolor{color7}{rgb}{0.666667,0.666667,1}
\definecolor{color8}{rgb}{0.5,0,0.5}
\definecolor{color9}{rgb}{0.666667,0.666667,0}
\definecolor{green2}{rgb}{0.0, 0.5, 0.0}
\newcommand{\exclude}[1]{}
\newcommand{\beq}{\begin{equation}}
\newcommand{\eeq}{\end{equation}}
\newcommand{\bea}{\begin{eqnarray}}
\newcommand{\eea}{\end{eqnarray}}
\long\def\/*#1*/{}
\newtheorem{theorem}{Thm.}[section]
\newtheorem{corollary}{Cor.}[section]
\newtheorem{definition}{Def.}[section]
\newcommand{\junk}[1]{}
\renewcommand{\Im}{\mathrm{Im}}
\renewcommand{\Re}{\mathrm{Re}}
\title{\Large Pseudospectra of Holographic Quasinormal Modes}
\author[1,2]{Daniel Are\'an,}
\author[1,2]{David García Fariña,}
\author[1]{Karl Landsteiner}
\affiliation[1]{Instituto de F\'isica Te\'orica UAM/CSIC, Calle Nicol\'as Cabrera 13-15, 28049 Madrid, Spain}
\affiliation[2]{Departamento de F\'isica Te\'orica, Universidad Aut{\'o}noma de Madrid, Campus de Cantoblanco, 28049 Madrid, Spain}
\preprint{IFT-UAM/CSIC-23-091}
\emailAdd{daniel.arean@uam.es}
\emailAdd{david.garciafarinna@estudiante.uam.es}
\emailAdd{karl.landsteiner@csic.es}
\abstract{

Quasinormal modes and frequencies are the eigenvectors and eigenvalues of a non-Hermitian differential operator. 
They hold crucial significance in the physics of black holes. The analysis of quasinormal modes of black holes in asymptotically Anti-de Sitter geometries plays also a key role in the study of strongly coupled quantum many-body systems via gauge/gravity duality.
In contrast to normal Sturm-Liouville operators, the spectrum of non-Hermitian (and non-normal) operators generally is unstable under small perturbations. This research focuses on the stability analysis of the spectrum of quasinormal frequencies pertaining to asymptotically planar AdS black holes, employing pseudospectrum analysis. Specifically, we concentrate on the pseudospectra of scalar and transverse gauge fields, shedding light on their relevance within the framework of gauge/gravity duality.

}
\begin{document}

\maketitle
\section{Introduction}

Quasinormal modes are the solutions of linear differential equations on an open domain. The openness is encoded in outgoing boundary conditions. Such boundary conditions arise naturally in many branches of physics from the study of black holes to optics. Quasinormal modes can be thought of as the eigenmodes of linear differential operators. Due to the open boundary conditions the differential operator is non-Hermitian and the eigenvalues are complex.

In gravitational physics, quasinormal modes are of utmost importance in the theory of black holes. Famously, a black hole horizon acts as a perfectly absorbing membrane. Therefore, solving a wave equation in a black hole background leads naturally to outgoing boundary conditions. In asymptotically flat space, radiation is also outgoing at infinity. 
The quasinormal modes that arise in asymptotically flat spacetimes are supposed to describe the late time ringdown of perturbed black holes. They appear for example in black hole mergers and are of highest interest for gravitational wave physics \cite{Nollert:1999ji,Kokkotas:1999bd,Berti:2009kk,Franchini:2023eda,Jaramillo:2022kuv}.

The importance of quasinormal modes in asymptotically Anti de-Sitter (AdS) spaces derives from gauge/gravity duality \cite{Maldacena:1997re,Aharony:1999ti,BigJ,zaanen2015holographic,Hartnoll:2018xxg}. Gauge/gravity duality is conjectured to describe certain strongly coupled quantum many-body systems. While up to date no concrete physical system has been found that indeed is modelled in all detail by gauge/gravity duality, it has lead to important insights into areas such as hydrodynamics and transport theory in the relativistic regime \cite{Policastro:2001yc, Baier:2007ix,Bhattacharyya:2007vjd}. Some key results are the extremely low specific shear viscosity of holographic models of the quark gluon plasma \cite{Kovtun:2004de} or phase transitions towards superconducting states \cite{Gubser:2008px,Hartnoll:2008vx} as well as strongly coupled quantum critical phases \cite{Herzog:2007ij}. Quasinormal modes are a key ingredient in this development. They can be understood as the poles of the retarded Green's functions of the strongly coupled quantum many-body system \cite{Horowitz:1999jd,Birmingham:2001pj,Kovtun:2005ev}. The shear viscosity can be read off from a quasinormal mode corresponding to momentum diffusion \cite{Policastro:2002se1} and second order phase transitions arise as a quasinormal frequency moves into the upper half plane thus indicating an instability of the ground state \cite{Herzog:2009ci,Amado:2009ts}.

In particular the modern way of understanding relativistic hydrodynamics is very much influenced by the properties of quasinormal modes in asymptotically AdS black holes. Relativistic hydrodynamics can be understood as an effective field theory organised in a derivative expansion \cite{Kovtun:2012rj}. As an effective field theory it has a limited range of validity.
This limit can be set as the wavelength at which the first non-hydrodynamic mode decays at a slower rate than the hydrodynamic mode \cite{Amado:2008ji}. More recently
it is conjectured that the limit is set by the radius in the complex plane at which the lowest hydrodynamic quasinormal mode collides with the first non-hydrodynamic mode \cite{Withers:2018srf,Grozdanov:2019,GrozdanovKovtun:2019}.

A question that so far has not been investigated in the context of gauge/gravity duality is the spectral stability of the quasinormal frequencies under small perturbations of the background geometry. We will refer to this spectral instability simply as instability. Since in this work we are not considering eigenvalues lying on the upper half of the complex plane (which would denote instability of the background), no confusion should arise from this wording.

In asymptotically flat space studies on the stability of quasinormal frequencies have a long history \cite{Nollert:1996rf,Nollert:1998ys,Aguirregabiria:1996} and recently the issue has been taken up in \cite{Jaramillo:2020tuu,Destounis:2021lum,Sarkar:2023rhp,Cheung:2021bol,Berti:2022xfj,Konoplya:2022pbc,Courty:2023rxk,Daghigh:2020jyk,Qian:2020cnz,alsheikh:tel-04116011,Destounis:2023ruj,Torres:2023nqg} (see also \cite{Boyanov:2022ark} for horizonless compact objects). The key new development of these recent studies is the application of the method of pseudospectra  \cite{Trefethen:2005,Sjostrand:2019,davies:2007}. 

The eigenvalues of Hermitian operators are stable in a technical sense. In fact, this stability is what makes perturbation theory successful in quantum physics.
On the other hand, the eigenvalues of non-Hermitian and non-normal operators generically are unstable. This means they can be displaced large distances by even small perturbations of the original operator. Here stability and smallness of a perturbation have precise meanings and we will give exact definitions later on.

In the case of quasinormal modes the spectral instability appears as a consequence of the non-conservative nature of the system. This non-conservative nature is associated with the damping of the fluctuations as they fall into the black hole and radiate away to infinity in asymptotically flat spacetime. AdS at large distances from the black hole horizon acts as a confining box but it is still true that fluctuations are damped as they fall into the black hole. 

Thus the damping of fluctuations in black hole geometries is independent of the asymptotics of the spacetime. It is then reasonable to expect that such quasinormal modes and frequencies are also inherently unstable as occurs in asymptotically flat space. The motivation for investigating  this question is twofold. Firstly, it would serve as increasing evidence of the independence of the instability on the asymptotic behavior of the spacetime. And secondly, it would shed a completely new light on the behavior of strongly coupled quantum many-body systems modelled by gauge/gravity duality. The instability of the quasinormal mode spectrum would indicate that also the excitation spectrum of the dual quantum field theories would be unstable under perturbations. In turn, this might raise questions about the robustness of conclusions for quantum many-body physics drawn from the behavior of quasinormal modes. In particular, transport properties in these systems could potentially be heavily model dependent, with a small deformation of the original theory leading to very different transport coefficients. Let us add here that despite the theoretically well established instability of the quasinormal mode spectrum in the asymptotically flat case, extraction of overtones from gravitational wave signals have been reported in \cite{Isi:2019aib,Giesler:2019uxc,Capano:2021etf,Capano:2022zqm}. 

To probe the stability of the quasinormal mode frequency (QNF) we follow the example of \cite{Jaramillo:2020tuu} and turn to pseudospectrum analysis \cite{Trefethen:2005}. Broadly speaking, pseudospectrum provides insight on how much perturbations of a given size can displace the spectrum of an operator. Accordingly, once the physically motivated notion of size is defined, pseudospectra serve as a powerful tool to discuss spectral stability or lack thereof. 
    
In this work we initiate a systematic study of the stability of quasinormal frequencies in AdS. We focus on two very simple cases: a real scalar field and a transverse gauge field in a Schwarzschild AdS$_{4+1}$ black brane (SAdS$_{4+1}$). 
By gauge/gravity duality these correspond to a scalar operator and a conserved $R$-current of the strongly coupled maximally symmetric $\mathcal{N}=4$ Yang-Mills theory \cite{Maldacena:1997re}. 
Our choice of fluctuations is motivated by two main reasons. Firstly, we consider the SAdS$_{4+1}$ background because it is the archetypical example for gauge/gravity duality. Secondly, we choose a real scalar and a transverse gauge field for their simplicity and their lack of a hydro mode. Hydro modes dominate the low energy spectrum of the dual QFT as their corresponding QNF goes to zero as the momentum goes to zero. Consequently, analyzing the stability of hydro modes is akin to studying the stability of the hydrodynamic description of the dual quantum field theory \cite{Kovtun:2005ev}. 
As the study of hydrodynamic modes involves additional technical complexities due to the presence of gauge symmetries and the corresponding constraints, we will leave this topic for a follow-up work.  

The organisation of this paper is as follows. Since the method of pseudospectra is relatively new and perhaps little known in the context of gauge/gravity duality we spend the first few sections reviewing some of the underlying mathematics. Here we draw heavily from \cite{Trefethen:2005} and also from a previous work on asymptotically flat space \cite{Jaramillo:2020tuu}. In particular we define the pseudospectrum and review its most important properties in section \ref{section:Pseudospectrum and Stability}.

In section \ref{section:Quasinormal Modes and Pseudospectra} we focus specifically on the application of pseudospectra to quasinormal modes. We point out the importance of choosing the proper coordinates. Following \cite{Warnick:2013hba} we chose a coordinate slice which we call {\em regular}. These coordinates have the important property that the equal-time slices are spacelike outside the horizon but coincide with infalling Eddington-Finkelstein coordinates at the horizon.\footnote{We note that infalling coordinates are very well suited to impose outgoing boundary conditions.} We discuss the importance of the choice of norm and give a physical reasoning why the chosen energy norm is the relevant one in the holographic context. Indeed, the energy in regular coordinates is not conserved due to outflow on the horizon. We conclude this section by discussing the holographic interpretation of the instability of the quasinormal modes.

In section \ref{section:Holographic Model} we construct the specific differential operators that describe the fluctuations of scalar and transverse gauge fields in an AdS black hole background. We show explicitly that the non-Hermiticity is concentrated on the horizon. As we discuss in detail this leads to a clear picture of the local origin of the non-Hermiticity and non-normality.  

In section \ref{section:Numerical method} we first outline our numerical methods which are based on pseudospectral methods \cite{Trefethen:2000,boyd}.
Next, we introduce the selective pseudospectrum which tests the stability under random local potential perturbations. And finally, to further explore the aforementioned stability, we construct specific potentials to test some relevant regimes.

Section \ref{section:Results} is the core of the paper and contains the results of our pseudospectrum calculations. We mostly concentrate on the results for the energy norm but, to illustrate the norm dependence, we also present the result for the $L^2$ norm in appendix \ref{app:l2norm}. Indeed it turns out that the pseudospectrum for both the scalar field and the transverse vector show the typical open contour lines indicating instability. 

In section \ref{section:Conclusions} we summarize our findings, present our conclusions and give an outlook on further studies in particular the highly interesting case of hydrodynamic modes.

Some of the more technical details on the calculation of the adjoint differential operators and the numerical methods are presented in the appendices 
\ref{app:Computation of Ldagger}, \ref{app:Extra Details on the Discretization.} and \ref{app:algorithm}. The numerical values of the quasinormal frequencies are presented in appendix \ref{app: Numerical Values of the QNFs}.

 \section{Pseudospectrum and Stability}\label{section:Pseudospectrum and Stability}

In this section we collect relevant theorems and definitions of the pseudospectrum of linear operators. We next specialize them to finite dimensional operators (matrices) and illustrate some key notions through the example of a $2\times 2$ matrix. For additional details, we refer the reader to \cite{Trefethen:2005}.

    Given a closed linear operator $L$ acting on a Hilbert space $\mathcal{H}$ with domain $\mathcal{D}(L)$, its spectrum  $\sigma\left(L\right)$ is defined as the set of points in the complex plane, $z$, where its resolvent:\footnote{Recall that a closed linear operator $L$ is a linear operator such that if a sequence $\{x_n\}$ in $\mathcal{H}$ converges to $x$, then the sequence $\{Lx_n\}$ converges to $y$, where $Lx=y$. From now on, when talking about operators, we implicitly assume that they are both closed and linear.}
    \begin{eqnarray}
        \mathcal{R}(z;L)=(L-z)^{-1}\,,
    \end{eqnarray}
    does not exist as a bounded operator on $\mathcal{H}$.
    
    Within $\sigma\left(L\right)$, we define the eigenvalues $\{\lambda_i\}$ through the eigenvalue equation:
    \begin{equation}
        L u_i = \lambda_i u_i\,,
    \end{equation}
    where $u_i\in \mathcal{D}(L)$ is the eigenvector corresponding to the eigenvalue $\lambda_i$.
Note that with our definitions, the eigenvalues correspond to isolated points in the spectrum (regions of dimension 0). Nonetheless, it is important to stress that, in general, the spectrum also contains regions of higher dimension, such as branch cuts (regions of dimension 1). That said, \cite{Warnick:2013hba} proved that in the case of Schwarzschild AdS black holes the spectrum of quasinormal frequencies indeed consists of a countable discrete subset of the complex plane.

    A particularly important property of eigenvalues is that as long as the operator is self-adjoint (or, in more physical terms, the system is conservative), the spectral theorem ensures that if we perturb the system with a bounded operator of size $\varepsilon$ the eigenvalues of the perturbed operator cannot suffer a displacement greater than $\varepsilon$ \cite{kato2013perturbation,Courant:1989}.\footnote{The notion of size here is quite nontrivial and we shall discuss it in greater detail later. In general, we will consider that the size is given by the norm of the operator, which is inherited from the function norm in $\mathcal{H}$.} More generally, this property holds for any normal operator $A$ satisfying $\left[A,A^\dagger\right]=0$, with $A^\dagger$ the adjoint of $A$. Physically, this ensures that any statement made relying on eigenvalue analysis is guaranteed to be right up to corrections of the same size as the error implicitly made when assuming a particular model. 
    
    However, many physical systems are inherently non-conservative. In these systems, normality of the relevant operators is not guaranteed and small perturbations could potentially alter the spectrum in a significant manner. As any description of a physical system is always a model and consequently has some inherent error associated with it, 
    one can conclude~\cite{Trefethen:2005} that eigenvalue analysis is insufficient when dealing with non-conservative systems defined through non-self-adjoint (and potentially non-normal) operators as eigenvalues may not be stable.
    
    In order to characterize the stability of eigenvalues we need to introduce the notion of $\varepsilon$-pseudospectrum, which can be defined in three mathematically equivalent ways \cite{Trefethen:2005}:
    \begin{definition}[Resolvent norm approach]
        \label{def:Pseudo Definition 1}
        Given a closed linear operator $L$ acting on a Hilbert space $\mathcal{H}$ with domain $\mathcal{D}(L)$, and $\varepsilon>0$, the $\varepsilon$-pseudospectrum $\sigma_\varepsilon(L)$ is
        \begin{equation}\label{eq:Pseudo Definition 1}
            \sigma_\varepsilon(L)=\{z\in \mathbb{C} : \lVert \mathcal{R}(z;L) \rVert>1/\varepsilon \}\,,
        \end{equation}
        with the convention $\lVert \mathcal{R}(z;L) \rVert=\infty$ for $z\in\sigma(L)$.
    \end{definition}

    \begin{definition}[Perturbative approach]
        \label{def:Pseudo Definition 2}
        Given a closed linear operator $L$ acting on a Hilbert space $\mathcal{H}$ with domain $\mathcal{D}(L)$, and $\varepsilon>0$, the $\varepsilon$-pseudospectrum $\sigma_\varepsilon(L)$ is
        \begin{equation}\label{eq:Pseudo Definition 2}
            \sigma_\varepsilon(L)=\{z\in \mathbb{C}, \exists V,\rVert V \lVert<\varepsilon : z\in\sigma(L+V) \}\,.
        \end{equation}
    \end{definition}
    
    \begin{definition}[Pseudoeigenvalue approach]
        \label{def:Pseudo Definition 3}
        Given a closed linear operator $L$ acting on a Hilbert space $\mathcal{H}$ with domain $\mathcal{D}(L)$, and $\varepsilon>0$, the $\varepsilon$-pseudospectrum $\sigma_\varepsilon(L)$ is
        \begin{equation}\label{eq:Pseudo Definition 3}
            \sigma_\varepsilon(L)=%
            \{z\in \mathbb{C},\exists u^\varepsilon \in \mathcal{D}(L) :\; \rVert (L-z ) u^\varepsilon\lVert<\varepsilon\lVert u^\varepsilon \rVert\}\,,
        \end{equation}
        where $u^\varepsilon$ is a $\varepsilon$-pseudoeigenvector with $\varepsilon$-pseudoeigenvalue $z$.
    \end{definition}
    Note that, contrary to the spectrum, the pseudospectrum depends on the operator norm as it needs a notion of what constitutes a large or small perturbation to quantify stability.
    
    \begin{definition}[Operator norm]
        \label{def:Operator norm Hilbert space}
        Given a bounded linear operator $V$ acting on a Hilbert space $\mathcal{H}$ equipped with a function norm $\lVert\cdot\rVert$, we define its norm $\lVert V \rVert$ as:
        \begin{equation}\label{eq:Operator norm Hilbert space}
            \norm{V}= \underset{u\in \mathcal{H}}{\max}\frac{\norm{Vu}}{\norm{u}}\,.
        \end{equation} 
    \end{definition}
    
    Definition \ref{def:Pseudo Definition 2} corresponds to the physical intuition we were seeking: the $\varepsilon$-pseudospectrum constitutes the maximal region containing all possible displacements of the eigenvalues under perturbations of size $\varepsilon$. It is then quite natural to represent the pseudospectrum as a contour map indicating the boundaries of these regions for multiple values of $\varepsilon$ (see \textit{e.g.} figure~\ref{fig:PseudospectrumExample} for one such a map corresponding to a $2\times2$ matrix).

    On the other hand, definition \ref{def:Pseudo Definition 1}, despite lacking such a clear physical interpretation, is significantly more powerful as it allows to establish a few relevant theorems whose proofs can be found in chapter 4 of \cite{Trefethen:2005}:
    \begin{theorem}
        Given a closed linear operator $L$ acting on a Hilbert space $\mathcal{H}$ with domain $\mathcal{D}(L)$, the $\varepsilon$-pseudospectrum $\sigma_\varepsilon(L)$ is a nonempty open set of $\,\mathbb{C}$, and any bounded connected component of $\sigma_\varepsilon(L)$ has a nonempty intersection with $\sigma(L)$.
    \end{theorem}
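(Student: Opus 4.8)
The plan is to work entirely from Definition \ref{def:Pseudo Definition 1}, so that the whole statement reduces to claims about the scalar function $z\mapsto\lVert\mathcal{R}(z;L)\rVert$ on the plane, with the convention that it equals $+\infty$ precisely on $\sigma(L)$. I would first record two structural facts. First, on the resolvent set $\rho(L)=\mathbb{C}\setminus\sigma(L)$ the map $z\mapsto\mathcal{R}(z;L)$ is operator-analytic, so $z\mapsto\lVert\mathcal{R}(z;L)\rVert$ is continuous there. Second, the lower bound $\lVert\mathcal{R}(z;L)\rVert\geq 1/\mathrm{dist}(z,\sigma(L))$ holds, since the spectral radius of $(L-z)^{-1}$ equals $1/\mathrm{dist}(z,\sigma(L))$ by the spectral mapping theorem and the norm dominates the spectral radius.

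For openness I would split $\sigma_\varepsilon(L)$ into its part in $\rho(L)$ and its part in $\sigma(L)$. On $\rho(L)$ the set $\{z:\lVert\mathcal{R}(z;L)\rVert>1/\varepsilon\}$ is the superlevel set of a continuous function, hence relatively open, and since $\rho(L)$ is itself open it is open in $\mathbb{C}$. For a point $z_0\in\sigma(L)$, the estimate $\lVert\mathcal{R}(z;L)\rVert\geq 1/\mathrm{dist}(z,\sigma(L))$ forces $\lVert\mathcal{R}(z;L)\rVert\to\infty$ as $z\to z_0$ through $\rho(L)$, so a punctured neighbourhood of $z_0$ lies in $\sigma_\varepsilon(L)$; together with $z_0$ itself this makes $z_0$ interior. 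Nonemptiness is then immediate from the convention, which gives $\sigma(L)\subseteq\sigma_\varepsilon(L)$, with $\sigma(L)$ nonempty for the operators at hand (for the quasinormal problem it is the discrete set established in \cite{Warnick:2013hba}); for the fully general statement, assuming $\sigma_\varepsilon(L)=\emptyset$ makes $\mathcal{R}(z;L)$ an entire, uniformly bounded operator-valued function, which a Liouville argument forces to be constant, contradicting $\tfrac{d}{dz}\mathcal{R}(z;L)=\mathcal{R}(z;L)^2\neq 0$.

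The substantive part is the claim about bounded connected components, which I would settle by a maximum principle. The key analytic input is that $z\mapsto\lVert\mathcal{R}(z;L)\rVert$ is subharmonic on $\rho(L)$: writing $\lVert\mathcal{R}(z;L)\rVert=\sup_{\lVert u\rVert=1}\lVert\mathcal{R}(z;L)u\rVert$ exhibits it as a supremum of the subharmonic functions $z\mapsto\lVert\mathcal{R}(z;L)u\rVert$, each the modulus of an analytic Hilbert-space-valued map, and a supremum of subharmonic functions is subharmonic once upper semicontinuity is checked, which follows from the continuity above. Now let $\Omega$ be a bounded connected component of $\sigma_\varepsilon(L)$ and suppose $\Omega\cap\sigma(L)=\emptyset$. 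Since $\sigma_\varepsilon(L)$ is open, $\partial\Omega$ avoids $\sigma_\varepsilon(L)$ and hence avoids $\sigma(L)$; with the assumption this yields $\overline{\Omega}\subseteq\rho(L)$, so the resolvent norm is subharmonic on $\Omega$ and continuous on $\overline{\Omega}$. By continuity the norm equals exactly $1/\varepsilon$ on $\partial\Omega$, so the maximum principle gives $\lVert\mathcal{R}(z;L)\rVert\leq 1/\varepsilon$ throughout $\Omega$, contradicting $\lVert\mathcal{R}(z;L)\rVert>1/\varepsilon$ on $\Omega\subseteq\sigma_\varepsilon(L)$. Hence every bounded component meets $\sigma(L)$.

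I expect the main obstacle to be the rigorous justification of subharmonicity of the resolvent norm together with the boundary bookkeeping: one must confirm that $\partial\Omega$ genuinely lies in $\rho(L)$, so that the norm is finite and continuous there and its boundary value is really $1/\varepsilon$, and that the supremum defining the operator norm inherits subharmonicity and upper semicontinuity rather than mere pointwise finiteness. Once those two points are secured the maximum principle closes the argument mechanically, while the openness and nonemptiness parts are comparatively routine consequences of analyticity of the resolvent and the distance estimate.
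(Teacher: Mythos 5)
Your proposal is correct, and it is essentially the proof the paper relies on: the paper does not prove this theorem itself but defers to chapter 4 of \cite{Trefethen:2005}, where the argument is exactly the one you reconstructed — openness and nonemptiness from analyticity of the resolvent plus the estimate $\lVert \mathcal{R}(z;L)\rVert \geq 1/\mathrm{dist}(z,\sigma(L))$ (with a Liouville argument covering the case of empty spectrum), and the statement about bounded components from subharmonicity of the resolvent norm together with the weak maximum principle. Your two flagged worries are both handled as you suspected: $\partial\Omega$ of a connected component of an open set is disjoint from that set, so it lies in $\rho(L)$ once $\Omega\cap\sigma(L)=\emptyset$ is assumed, and the supremum defining the operator norm is continuous on $\rho(L)$, hence upper semicontinuous, so it inherits subharmonicity.
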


    \begin{theorem}
        \label{th:Boundedness of pseudospectrum}
        Given a closed linear operator $L$ acting on a Hilbert space $\mathcal{H}$ with domain $\mathcal{D}(L)$, the pseudospectra are strictly nested supersets of the spectrum: $\cap_{\varepsilon>0}\,\sigma_\varepsilon(L)=\sigma(L)$, and conversely, for any $\delta>0$, $\sigma_{\varepsilon+\delta}(L)\supseteq\sigma_{\varepsilon}(L)+\Delta_\delta$ with $\Delta_\delta$ the open disk of radius $\delta$.
    \end{theorem}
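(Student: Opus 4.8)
The plan is to establish both assertions directly from the resolvent-norm characterisation (Definition \ref{def:Pseudo Definition 1}) together with the perturbative characterisation (Definition \ref{def:Pseudo Definition 2}), which we may use interchangeably since they define the same set. First I would dispose of the identity $\cap_{\varepsilon>0}\sigma_\varepsilon(L)=\sigma(L)$ by proving the two inclusions separately. The inclusion $\sigma(L)\subseteq\sigma_\varepsilon(L)$ for every $\varepsilon>0$ is immediate from the convention $\lVert\mathcal{R}(z;L)\rVert=\infty$ on $\sigma(L)$, so $\sigma(L)\subseteq\cap_{\varepsilon>0}\sigma_\varepsilon(L)$. For the reverse inclusion I argue by contraposition: if $z\notin\sigma(L)$ then $\mathcal{R}(z;L)$ is bounded, $\lVert\mathcal{R}(z;L)\rVert=M<\infty$, and choosing any $\varepsilon<1/M$ gives $\lVert\mathcal{R}(z;L)\rVert<1/\varepsilon$, whence $z\notin\sigma_\varepsilon(L)$ and therefore $z\notin\cap_{\varepsilon>0}\sigma_\varepsilon(L)$. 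The monotonicity underlying the nesting is equally direct: since $\varepsilon_1<\varepsilon_2$ forces $1/\varepsilon_1>1/\varepsilon_2$, any $z$ with $\lVert\mathcal{R}(z;L)\rVert>1/\varepsilon_1$ also satisfies $\lVert\mathcal{R}(z;L)\rVert>1/\varepsilon_2$, giving $\sigma_{\varepsilon_1}(L)\subseteq\sigma_{\varepsilon_2}(L)$.

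The core of the statement is the containment $\sigma_{\varepsilon+\delta}(L)\supseteq\sigma_\varepsilon(L)+\Delta_\delta$, which in turn yields the \emph{strictness} of the nesting. For this step I would use Definition \ref{def:Pseudo Definition 2}. Take $z\in\sigma_\varepsilon(L)$ and $w\in\mathbb{C}$ with $|w|<\delta$; the goal is $z+w\in\sigma_{\varepsilon+\delta}(L)$. By the perturbative definition there is a bounded $V$ with $\lVert V\rVert<\varepsilon$ such that $z\in\sigma(L+V)$. Since $wI$ is bounded with $\lVert wI\rVert=|w|$, the spectral translation $\sigma(A+wI)=\sigma(A)+w$ gives $z+w\in\sigma\big((L+V)+wI\big)=\sigma\big(L+(V+wI)\big)$. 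Setting $V'=V+wI$ we obtain $\lVert V'\rVert\le\lVert V\rVert+|w|<\varepsilon+\delta$, so $z+w\in\sigma(L+V')$ with a perturbation of size below $\varepsilon+\delta$, i.e. $z+w\in\sigma_{\varepsilon+\delta}(L)$. Because $\sigma_\varepsilon(L)+\Delta_\delta$ is an open $\delta$-neighbourhood of $\sigma_\varepsilon(L)$ and hence strictly larger, this also shows $\sigma_{\varepsilon+\delta}(L)\supsetneq\sigma_\varepsilon(L)$, completing the strict nesting.

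The argument is essentially routine once the three definitions are available; the one point requiring care — and where I would be most careful — is the use of spectral translation for the generally unbounded closed operator $L$. I would note that $V$ and $wI$ are bounded perturbations, so $L+V$ and $L+V+wI$ are closed on the common domain $\mathcal{D}(L)$, and $\sigma(A+wI)=\sigma(A)+w$ holds for any closed $A$ simply because $(A+wI)-(z+w)I=A-zI$, so the two resolvents exist simultaneously.

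As an alternative that avoids Definition \ref{def:Pseudo Definition 2} entirely, I could argue purely with the resolvent norm. The factorisation $(L-z)=(L-(z+w))\big(I+w\,\mathcal{R}(z+w;L)\big)$ together with a Neumann-series bound gives $\lVert\mathcal{R}(z;L)\rVert\le\lVert\mathcal{R}(z+w;L)\rVert/\big(1-|w|\,\lVert\mathcal{R}(z+w;L)\rVert\big)$ whenever $|w|\,\lVert\mathcal{R}(z+w;L)\rVert<1$; feeding in $\lVert\mathcal{R}(z;L)\rVert>1/\varepsilon$ then yields $\lVert\mathcal{R}(z+w;L)\rVert>1/(\varepsilon+|w|)>1/(\varepsilon+\delta)$, i.e. $z+w\in\sigma_{\varepsilon+\delta}(L)$, while the degenerate case $|w|\,\lVert\mathcal{R}(z+w;L)\rVert\ge1$ is trivial since then $\lVert\mathcal{R}(z+w;L)\rVert\ge1/|w|>1/(\varepsilon+\delta)$.
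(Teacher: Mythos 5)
Your proof is correct. Note that the paper does not actually prove this theorem internally---it defers to chapter 4 of \cite{Trefethen:2005}---and your argument is essentially the standard one found there: the intersection identity and nesting follow from monotonicity of the resolvent-norm condition in Definition~\ref{def:Pseudo Definition 1}, while the disk containment follows from Definition~\ref{def:Pseudo Definition 2} via the scalar shift $V'=V+wI$ together with $\sigma(A+wI)=\sigma(A)+w$. Your alternative argument via the factorisation $(L-z)=(L-(z+w))\bigl(I+w\,\mathcal{R}(z+w;L)\bigr)$ and a Neumann-series bound is a worthwhile addition: it keeps the entire proof inside the resolvent-norm definition, so it does not lean on the (nontrivial for unbounded operators) equivalence of Definitions~\ref{def:Pseudo Definition 1} and~\ref{def:Pseudo Definition 2}; the only glib step is the final remark that the containment forces \emph{strict} nesting, which additionally uses that $\sigma_\varepsilon(L)$ is open, nonempty, and not all of $\mathbb{C}$ (so that it has a boundary point, necessarily absorbed by the $\delta$-neighbourhood).
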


    \begin{theorem}
        \label{th:Pseudospectrum conjugation properties}
        Given a closed linear operator $L$ acting on a Hilbert space $\mathcal{H}$ with domain $\mathcal{D}(L)$, $z\in\mathbb{C}$ and $\varepsilon>0$, we have $\norm{\mathcal{R}(\overline{z};L^\dagger)}=\norm{\mathcal{R}(z;L)}$, $\sigma(L^\dagger)=\overline{\sigma(L)}$ and $\sigma_\varepsilon(L^\dagger)=\overline{\sigma_\varepsilon(L)}$.\footnote{Regarding notation, we denote the complex conjugate with a bar and complex conjugate transpose with an asterisk.}
    \end{theorem}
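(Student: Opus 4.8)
The plan is to derive all three identities from a single workhorse---the resolvent-norm equality $\norm{\mathcal{R}(\overline{z};L^\dagger)}=\norm{\mathcal{R}(z;L)}$---and to obtain that equality from one standard fact of operator theory: for a closed, densely defined operator $A$ (so that $A^\dagger$ exists), $A$ possesses a bounded inverse defined on all of $\mathcal{H}$ if and only if $A^\dagger$ does, in which case $(A^\dagger)^{-1}=(A^{-1})^\dagger$. I would take this as the only nontrivial analytic input, and assume throughout that $L$ is densely defined, which is implicit in speaking of $L^\dagger$.

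First I would record the algebraic identity $(L-z)^\dagger=L^\dagger-\overline{z}$. Because $zI$ is bounded and everywhere defined, the adjoint distributes over the sum $L+(-zI)$, and $(zI)^\dagger=\overline{z}\,I$ produces the shift by the conjugate. Applying the inverse-adjoint fact to $A=L-z$ then yields the core statement: $z\notin\sigma(L)$ holds exactly when $(L-z)^{-1}$ is bounded and everywhere defined, which by the fact is equivalent to $(L^\dagger-\overline{z})^{-1}$ being bounded and everywhere defined, i.e. $\overline{z}\notin\sigma(L^\dagger)$; and in that regime $\mathcal{R}(\overline{z};L^\dagger)=\big((L-z)^{-1}\big)^\dagger=\mathcal{R}(z;L)^\dagger$. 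Since adjunction is an isometry for bounded operators, $\norm{\mathcal{R}(\overline{z};L^\dagger)}=\norm{\mathcal{R}(z;L)^\dagger}=\norm{\mathcal{R}(z;L)}$. For $z\in\sigma(L)$ the same equivalence places $\overline{z}\in\sigma(L^\dagger)$, and both norms are $\infty$ by the convention of Definition \ref{def:Pseudo Definition 1}, so the equality holds on all of $\mathbb{C}$.

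The remaining two identities are then immediate. The equivalence $z\in\sigma(L)\Leftrightarrow\overline{z}\in\sigma(L^\dagger)$ established above is precisely $\sigma(L^\dagger)=\overline{\sigma(L)}$. For the pseudospectrum I would invoke Definition \ref{def:Pseudo Definition 1} directly: the chain $\overline{z}\in\sigma_\varepsilon(L^\dagger)\Leftrightarrow\norm{\mathcal{R}(\overline{z};L^\dagger)}>1/\varepsilon\Leftrightarrow\norm{\mathcal{R}(z;L)}>1/\varepsilon\Leftrightarrow z\in\sigma_\varepsilon(L)$ gives $\sigma_\varepsilon(L^\dagger)=\overline{\sigma_\varepsilon(L)}$ at once.

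The genuine obstacle lies entirely in the workhorse lemma for unbounded $L$: one must track domains carefully to justify both $(L-z)^\dagger=L^\dagger-\overline{z}$ and the two-sided relation $(A^\dagger)^{-1}=(A^{-1})^\dagger$. The cleanest route is to verify $(A^{-1})^\dagger A^\dagger\subseteq I$ and $A^\dagger(A^{-1})^\dagger\subseteq I$ using that $A^{-1}$ is bounded and everywhere defined (so the order-reversal rule $(BA)^\dagger=A^\dagger B^\dagger$ applies cleanly with a bounded factor), and then to check that these inclusions are equalities of everywhere-defined operators. In finite dimensions---the setting relevant to the matrix discretizations used later---all domain subtleties evaporate and the statement reduces to the elementary facts $(M^\dagger)^{-1}=(M^{-1})^\dagger$ and $\norm{M^\dagger}=\norm{M}$, so there the proof is a one-line computation.
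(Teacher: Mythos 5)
Your proof is correct, and there is nothing in the paper to compare it against: the paper does not prove this theorem at all, deferring instead to chapter 4 of the cited Trefethen--Embree book. What you give is precisely the standard argument found there -- the lemma that a closed, densely defined $A$ is boundedly invertible iff $A^\dagger$ is, with $(A^\dagger)^{-1}=(A^{-1})^\dagger$, applied to $A=L-z$ together with $(L-z)^\dagger=L^\dagger-\overline{z}$ and the isometry of adjunction on bounded operators -- and the dense-definedness hypothesis you flag is exactly what the paper implicitly assumes whenever it writes $L^\dagger$.
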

    
    Definition \ref{def:Pseudo Definition 3} hints at the numerical difficulties arising when obtaining eigenvalues and eigenvectors of non-normal operators. When operating at a precision of $\varepsilon$ we are unable to distinguish between an $\varepsilon$-pseudoeigenvalue and a genuine eigenvalue, which, by virtue of definition \ref{def:Pseudo Definition 2}, implies that we cannot obtain the spectrum of the desired operator but instead that of a perturbed one. Thus, in the absence of normality, one needs to be especially careful when employing numerical methods to prevent loss of predictivity. One practical consequence is that even numerical errors can lead to displacements and therefore it is often necessary in numerical approximations to compute quasinormal frequencies with (much) higher precision than the typical double precision floating-point machine numbers.

    Besides pseudospectra, the condition numbers $\{\kappa_i\}$ are another useful tool to study the stability of eigenvalues. They quantify the effect of perturbations of size $\varepsilon$ through the knowledge of the orthogonality between the eigenvectors of the unperturbed operator and its adjoint (often referred to as right- and left-eigenvectors, respectively). Heuristically, the condition numbers exploit the lack of orthogonality between left- and right-eigenvectors in non-normal operators to look for potential instabilities. This is formalized in the following definition and theorem \cite{Trefethen:2005}:
    \begin{definition}[Condition numbers]
        \label{def:Condition numbers}
        Given a closed linear operator $L$ acting on a Hilbert space $\mathcal{H}$ and its adjoint $L^\dagger$, we define the condition number $\kappa_i$ associated with the eigenvalue $\lambda_i$ of $L$ as:
        \begin{equation}
            \kappa_i=\frac{\norm{v_i}\norm{u_i}}{|\expval{v_i,u_i}|}\,,
        \end{equation}
        where $\expval{\cdot,\cdot}$ is the inner product associated with the norm $\norm{\cdot}$, $u_i$ is the right-eigenvector satisfying $L u_i=\lambda_i u_i$ and $v_i$ is the left-eigenvector satisfying $L^\dagger v_i=\overline{\lambda_i} v_i$
    \end{definition}
    
    \begin{theorem}[Stability and condition numbers]
        \label{th:Stability and condition numbers}
        Given a closed linear operator $L$ acting on a Hilbert space $\mathcal{H}$, whose spectrum contains a set of eigenvalues $\{\lambda_i\}$, and a bounded perturbation $V$ with $\norm{V}=\varepsilon$ we have:
        \begin{equation}
            |\lambda_i(\varepsilon)-\lambda_i|\leq \varepsilon  \kappa_i \,,
        \end{equation}
        where $\{\lambda_i(\varepsilon)\}$ are the eigenvalues of the perturbed operator $L(\varepsilon)=L+V$
    \end{theorem}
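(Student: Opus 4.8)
The statement is the standard first-order eigenvalue perturbation bound, so the plan is to compute the leading sensitivity of a simple eigenvalue and bound it by $\kappa_i$. First I would rescale the perturbation, writing $V=\varepsilon\,\hat V$ with $\lVert\hat V\rVert=1$, and introduce the one-parameter family $L(t)=L+t\,\hat V$, so that $L(0)=L$ and $L(\varepsilon)=L+V$. Assuming $\lambda_i$ is an algebraically simple, isolated eigenvalue, I would invoke Kato's analytic perturbation theory to guarantee that for $|t|$ small enough there is a branch $\lambda_i(t)$ with right-eigenvector $u_i(t)$, both differentiable at $t=0$ with $\lambda_i(0)=\lambda_i$ and $u_i(0)=u_i$; securing this differentiability is the genuine analytic input and, as noted below, the main obstacle.

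Granting differentiability, the next step is to differentiate the eigenvalue equation $L(t)\,u_i(t)=\lambda_i(t)\,u_i(t)$ at $t=0$. Writing $t$-derivatives with a dot, this yields
\begin{equation}
  \hat V u_i + L\,\dot u_i = \dot\lambda_i\,u_i + \lambda_i\,\dot u_i\,.
\end{equation}
The key move is to project onto the left-eigenvector $v_i$, which satisfies $L^\dagger v_i=\overline{\lambda_i}\,v_i$. Taking the inner product $\langle v_i,\,\cdot\,\rangle$ and using the defining property of the adjoint, $\langle v_i, L\,\dot u_i\rangle=\langle L^\dagger v_i,\dot u_i\rangle=\lambda_i\langle v_i,\dot u_i\rangle$, the two terms containing the unknown $\dot u_i$ cancel, leaving
\begin{equation}
  \dot\lambda_i=\frac{\langle v_i,\hat V u_i\rangle}{\langle v_i, u_i\rangle}\,,
\end{equation}
where $\langle v_i,u_i\rangle\neq 0$ precisely because $\lambda_i$ is simple.

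It then remains to bound this expression. By Cauchy--Schwarz and submultiplicativity of the operator norm of Definition~\ref{def:Operator norm Hilbert space}, I would estimate $|\langle v_i,\hat V u_i\rangle|\le\lVert v_i\rVert\,\lVert\hat V u_i\rVert\le\lVert v_i\rVert\,\lVert\hat V\rVert\,\lVert u_i\rVert=\lVert v_i\rVert\,\lVert u_i\rVert$, using $\lVert\hat V\rVert=1$. Hence $|\dot\lambda_i|\le \lVert v_i\rVert\lVert u_i\rVert/|\langle v_i,u_i\rangle|=\kappa_i$, and Taylor expanding $\lambda_i(\varepsilon)=\lambda_i+\varepsilon\,\dot\lambda_i+O(\varepsilon^2)$ gives $|\lambda_i(\varepsilon)-\lambda_i|\le\varepsilon\,\kappa_i+O(\varepsilon^2)$, the claimed bound to leading order.

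The hard part is the analytic input hidden in the first step rather than the algebra. The clean formula for $\dot\lambda_i$ presupposes that $\lambda_i$ is simple, that the perturbed branch stays isolated, and that $\langle v_i,u_i\rangle\neq0$; all of these can fail on defective (non-diagonalizable) parts of the spectrum, where the eigenvalue may instead split according to a fractional Puiseux expansion and the sensitivity is no longer captured by $\kappa_i$ alone. I would therefore read the inequality as the leading-order statement $|\lambda_i(\varepsilon)-\lambda_i|\le\varepsilon\,\kappa_i+O(\varepsilon^2)$, with the remainder controlled by the same Kato machinery that justifies the expansion.
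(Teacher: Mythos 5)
Your proposal is correct and follows essentially the same route as the paper: both compute the first-order shift of the eigenvalue by projecting the perturbed eigenvalue equation onto the left-eigenvector $v_i$, cancelling the unknown eigenvector correction via $L^\dagger v_i=\overline{\lambda_i}v_i$, and then bound the resulting formula $\langle v_i,Vu_i\rangle/\langle v_i,u_i\rangle$ by Cauchy--Schwarz and the operator norm. The only difference is presentational: you phrase the expansion as a $t$-derivative justified by Kato's theory and explicitly flag that the inequality holds to leading order with an $O(\varepsilon^2)$ remainder, whereas the paper performs the same expansion keeping ``only the leading order in $\varepsilon$'' and states the bound without that caveat.
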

    \begin{proof}
        Let us consider a closed operator $L$, its adjoint $L^\dagger$ and $u_i$, $v_i$ such that:
        \begin{equation}
            Lu_i=\lambda_i u_i\,,\qquad L^\dagger v_i=\overline{\lambda_i} v_i \,.
        \end{equation}
    
        Defining the perturbed operator $L(\varepsilon)=L+V$ with $\norm{V}=\varepsilon$, the perturbed right-eigenvector $u_i(\varepsilon)=u_i+\varepsilon\delta u_i+\mathcal{O}\left(\varepsilon^2\right)$ satisfies:
        \begin{equation}
            L(\varepsilon)u_i(\varepsilon)=\lambda_i(\varepsilon) u_i(\varepsilon)\,,
        \end{equation}
        with $\lambda_i(\varepsilon)$ the corresponding perturbed eigenvalue.      
        Then, keeping only the leading order in $\varepsilon$ we have:
        \begin{align}
            \lambda_i(\varepsilon)&=\frac{\expval{v_i,L(\varepsilon)u_i(\varepsilon)}}{\expval{v_i,u_i(\varepsilon)}}=\frac{\expval{v_i,Lu_i}+\expval{v_i,V u_i}+\varepsilon\expval{v_i,L\delta u_i}}{\expval{v_i,u_i}+\varepsilon\expval{v_i,\delta u_i}}\nonumber\\
        &=\lambda_i+\varepsilon\frac{\expval{L^\dagger v_i,\delta u_i}}{\expval{v_i,u_i}}-\varepsilon\lambda_i\frac{\expval{v_i,\delta u_i}}{\expval{v_i,u_i}}+\frac{\expval{ v_i,V u_i}}{\expval{v_i,u_i}} \nonumber\\
        &=\lambda_i+\varepsilon\lambda_i\frac{\expval{ v_i,\delta u_i}}{\expval{v_i,u_i}}-\varepsilon\lambda_i\frac{\expval{v_i,\delta u_i}}{\expval{v_i,u_i}}+\frac{\expval{ v_i,V u_i}}{\expval{v_i,u_i}}=\lambda_i+\frac{\expval{ v_i,V u_i}}{\expval{v_i,u_i}}\,,
        \end{align}
        from where we recover the result we wanted to prove:
        \begin{equation}
            |\lambda_i(\varepsilon)-\lambda_i|=\left|\frac{\expval{ v_i,V u_i}}{\expval{v_i,u_i}}\right|\leq \left|\frac{\norm{v_i}\norm{Vu_i}}{\expval{v_i,u_i}}\right|\leq \left|\frac{\norm{v_i}\norm{V}\norm{u_i}}{\expval{v_i,u_i}}\right|=\varepsilon \kappa_i\,.
        \end{equation}
    \end{proof}
    
    For a normal operator all eigenvalues are stable and have condition number 1.\footnote{For a normal operator $L$ and its adjoint $L^\dagger$ can be simultaneously diagonalized and thus $v_i=u_i$ and $\expval{v_i,u_i}=\norm{v_i}\norm{u_i}$.} Interestingly enough, even for non-normal operators some eigenvalues may have condition number 1 and thus be stable; these are referred to as normal eigenvalues. 

    Lastly, we note that throughout this section we have assumed $\varepsilon$ to be small, but we have yet to give a formal description of smallness. As indicated by definition \ref{def:Pseudo Definition 2}, pseudospectrum analysis is, in spirit, equivalent to a problem of perturbation theory in Quantum Mechanics and thus we decide to inherit the latter's definition of smallness:
    
    \begin{definition}[Smallness]
        Given a closed linear operator of size $\varepsilon$, we say that $\varepsilon$ is small if it is negligible when compared to the minimum distance $d_{\min}$ between disconnected regions of the spectrum, 
        $\varepsilon/d_{\min}\ll 1$.
    \end{definition}

	\subsection{Matrix Pseudospectrum}\label{subsection:Matrix Pseudospectrum}
    Throughout this work, we will limit ourselves to studying pseudospectra of matrices arising from the discretization of differential operators. Consequently, in this subsection, we present some relevant results specialized to matrices. Note that all previous theorems and definitions still hold since $N\times N$ matrices are bounded linear operators acting on an $N$ dimensional Hilbert space comprised of vectors with $N$ components. Again further details on the following theorems and definitions can be found on \cite{Trefethen:2005}.
    
    We begin by constructing a more practical definition of matrix norm using definition \ref{def:Operator norm Hilbert space}:
    \begin{theorem}[Matrix norm]
        \label{th:Matrix norm}
        Given an $N$-dimensional Hilbert space $\mathcal{V}$ equipped with a norm $\norm{\cdot}$ and an $N\times N$ matrix $M$; the norm of $M$ is:
        \begin{equation}
            \norm{M}=\sqrt{\rho\left(M^\dagger M\right)}\,,
        \end{equation}
        with $\rho$ the spectral radius:
        \begin{equation}
            \rho(A)=\underset{\lambda\in \sigma\left(A\right)}{\max}\left\{|\lambda|\right\}\,.
        \end{equation}
    \end{theorem}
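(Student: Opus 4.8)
The plan is to reduce the operator norm of Definition~\ref{def:Operator norm Hilbert space} to a variational problem for the Hermitian matrix $A=M^\dagger M$ and then invoke the spectral theorem. First I would square the definition and rewrite it in terms of the inner product $\expval{\cdot,\cdot}$ that induces the norm on $\mathcal{V}$,
\begin{equation}
\norm{M}^2=\max_{u\neq 0}\frac{\norm{Mu}^2}{\norm{u}^2}=\max_{u\neq 0}\frac{\expval{Mu,Mu}}{\expval{u,u}}=\max_{u\neq 0}\frac{\expval{u,M^\dagger M u}}{\expval{u,u}}\,,
\end{equation}
where the last step is just the defining property of the adjoint $M^\dagger$ with respect to $\expval{\cdot,\cdot}$. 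This identifies $\norm{M}^2$ with the maximum of the Rayleigh quotient of $A=M^\dagger M$, a maximum which is attained because $\mathcal{V}$ is finite dimensional.

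Next I would observe that $A=M^\dagger M$ is self-adjoint, $A^\dagger=A$, and positive semidefinite, since $\expval{u,Au}=\norm{Mu}^2\geq 0$ for all $u$. By the spectral theorem, $A$ then admits an orthonormal (with respect to $\expval{\cdot,\cdot}$) eigenbasis $\{e_k\}$ with real, nonnegative eigenvalues $\mu_k$. Expanding an arbitrary vector as $u=\sum_k c_k e_k$ gives $\expval{u,Au}=\sum_k \mu_k|c_k|^2$ and $\expval{u,u}=\sum_k|c_k|^2$, so the Rayleigh quotient is a convex weighted average of the $\mu_k$. It is therefore bounded above by $\mu_{\max}=\max_k\mu_k$, with equality when $u$ is the corresponding eigenvector; hence the maximum in the displayed equation equals $\mu_{\max}$. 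Since every $\mu_k\geq 0$, the largest eigenvalue coincides with the spectral radius $\rho(A)=\max_{\lambda\in\sigma(A)}|\lambda|$, and we conclude $\norm{M}^2=\rho(M^\dagger M)$, which is the claim.

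The step I expect to require the most care is not the algebra but the inner-product bookkeeping. The norm here is a general Hilbert-space norm (in our application the physically motivated energy norm), not the canonical Euclidean one, so $M^\dagger$ denotes the adjoint with respect to $\expval{\cdot,\cdot}$ rather than the bare conjugate transpose, and the orthonormality of the eigenbasis $\{e_k\}$ must likewise be understood in this inner product. The main point is to verify that $A=M^\dagger M$ is genuinely Hermitian in $\expval{\cdot,\cdot}$ so that the spectral theorem applies verbatim; once the problem is phrased abstractly in the given inner-product space, the variational characterization of the top eigenvalue goes through unchanged and the identification with $\sqrt{\rho(M^\dagger M)}$ is immediate.
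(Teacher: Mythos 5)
Your proof is correct and follows essentially the same route as the paper's: rewrite the operator norm as the Rayleigh quotient of $M^\dagger M$ using the adjoint with respect to the given inner product, then diagonalize this self-adjoint, positive semidefinite matrix in an orthonormal eigenbasis and bound the quotient by the largest eigenvalue, which equals $\rho(M^\dagger M)$ by positivity. Your version is slightly more careful in that you explicitly note the maximum is attained at the top eigenvector, a step the paper leaves implicit.
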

    \begin{proof}
        We begin by pointing out that, for matrices, all elements in the spectrum are eigenvalues. Then, by definition \ref{def:Operator norm Hilbert space}, we have:
        \begin{equation}
            \norm{M}= \underset{u\in \mathcal{V}}{\max}\frac{\norm{Mu}}{\norm{u}}=\underset{u\in \mathcal{V}}{\max}\frac{\sqrt{\expval{Mu,Mu}}}{\norm{u}}=\underset{u\in \mathcal{V}}{\max}\frac{\sqrt{\expval{u,M^\dagger Mu}}}{\norm{u}}=\underset{u\in \mathcal{V}}{\max}\frac{\sqrt{\expval{M^\dagger Mu,u}}}{\norm{u}}\,,
        \end{equation}  
        where $\expval{\cdot,\cdot}$ is the inner product associated with the norm and $M^\dagger$ the adjoint with respect to the aforementioned inner product.
        
        As $M^\dagger M$ is non-negative definite and self-adjoint, we can find an orthonormal basis $\{e_i\}$ satisfying $M^\dagger M e_i = \lambda_i e_i$ with $\lambda_i$ real positive eigenvalues. Choosing $\lambda_n=\max\{\lambda_i\}$, we then have:
        \begin{equation}
            \expval{u,M^\dagger Mu}=\sum_{i,j}\overline{u^i}u^j \expval{e_i,M^\dagger M e_j}=\sum_{i}\overline{u^i}u^i \lambda_i\leq \sum_{i}\lambda_n\,\overline{u^i}u^i = \lambda_n \norm{u}^2=|\lambda_n| \norm{u}^2\,,
        \end{equation}
        which implies:
         \begin{equation}
            \norm{M}=\underset{u\in \mathcal{V}}{\max}\frac{\sqrt{\expval{u,M^\dagger Mu}}}{\norm{u}}=\sqrt{|\lambda_n|}=\sqrt{\underset{\lambda\in \sigma\left(M^\dagger M\right)}{\max}\left\{|\lambda|\right\}}=\sqrt{\rho\left(M^\dagger M\right)}\,.
         \end{equation}
    \end{proof}
    Using this expression of the matrix norm, we can rewrite definition \ref{def:Pseudo Definition 1} in terms of the minimum generalized singular value. This statement is formalized in the following theorem:
    \begin{theorem}
        \label{th:Matrix pseudospectrum}
        Given an $N\times N$ matrix $M$ acting on an $N$-dimensional Hilbert space $\mathcal{V}$ equipped with a norm $\norm{\cdot}$, the $\varepsilon$-pseudospectrum $\sigma_{\varepsilon}(M)$ is
        \begin{equation}
            \sigma_{\varepsilon}(M)=\{z\in \mathbb{C}: s_{\min}(M-z)<\varepsilon\}\,,
        \end{equation}
        where $s_{\min}$ is the smallest generalized singular value:
        \begin{equation}
            s_{\min}(A)=\sqrt{\underset{\lambda\in\sigma(A^\dagger A)}{\min} {|\lambda|}}\,.
        \end{equation}
    \end{theorem}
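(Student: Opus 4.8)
The plan is to reduce the statement to Definition \ref{def:Pseudo Definition 1} (the resolvent-norm characterization) and then evaluate the resolvent norm explicitly using Theorem \ref{th:Matrix norm}. Writing $A=M-z$, Definition \ref{def:Pseudo Definition 1} says that $z\in\sigma_\varepsilon(M)$ iff $\norm{(M-z)^{-1}}>1/\varepsilon$, with the convention that the norm is $+\infty$ (hence the condition holds) precisely when $M-z$ is singular, i.e. $z\in\sigma(M)$. By Theorem \ref{th:Matrix norm} we have $\norm{(M-z)^{-1}}=\sqrt{\rho\big(((M-z)^{-1})^\dagger (M-z)^{-1}\big)}$, which is the largest generalized singular value of $(M-z)^{-1}$. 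Thus the whole problem collapses to showing that the largest singular value of $A^{-1}$ equals the reciprocal of the smallest singular value of $A$, namely $s_{\max}(A^{-1})=1/s_{\min}(A)$.

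First I would treat the generic case $z\notin\sigma(M)$, where $A=M-z$ is invertible. The key identity is $(A^{-1})^\dagger A^{-1}=(A^\dagger)^{-1}A^{-1}=(AA^\dagger)^{-1}$, using that taking the adjoint (with respect to the inner product inducing the norm) commutes with inversion. Since $AA^\dagger$ is self-adjoint and positive definite, it admits an orthonormal eigenbasis with strictly positive eigenvalues $\{\mu_i\}$, and the eigenvalues of its inverse are $\{1/\mu_i\}$; hence $\rho\big((A^{-1})^\dagger A^{-1}\big)=1/\min_i\mu_i$. Finally I would invoke the standard fact that the nonzero spectra of $AA^\dagger$ and $A^\dagger A$ coincide, so that $\min_i\mu_i=\min\sigma(A^\dagger A)=s_{\min}(A)^2$, giving $\norm{(M-z)^{-1}}=1/s_{\min}(M-z)$. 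The defining condition $\norm{(M-z)^{-1}}>1/\varepsilon$ then reads $s_{\min}(M-z)<\varepsilon$, exactly as claimed.

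To close the argument I would check the boundary case $z\in\sigma(M)$. For matrices every spectral point is an eigenvalue (as already noted in the proof of Theorem \ref{th:Matrix norm}), so $A=M-z$ has nontrivial kernel, meaning $A^\dagger A$ has $0$ as an eigenvalue and $s_{\min}(M-z)=0<\varepsilon$. This is precisely the set on which Definition \ref{def:Pseudo Definition 1} assigns $\norm{\mathcal{R}(z;M)}=\infty>1/\varepsilon$, so the two descriptions agree there as well; the convention is therefore consistent and the two sets coincide for every $z\in\mathbb{C}$.

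The main obstacle, and really the only nontrivial point, is establishing $s_{\max}(A^{-1})=1/s_{\min}(A)$ cleanly in the arbitrary inner product relevant to the energy norm, rather than in the Euclidean one. The subtlety is that the adjoint $\dagger$ is taken with respect to the chosen (possibly non-standard) inner product, so one cannot simply quote the Euclidean singular value decomposition; instead I would lean on the self-adjointness and positivity of $A^\dagger A$ and $AA^\dagger$ under that inner product, exactly as in the proof of Theorem \ref{th:Matrix norm}, where an orthonormal eigenbasis was constructed. Equivalently, one may phrase this through a generalized singular value decomposition adapted to the inner product, but the eigendecomposition route keeps everything self-contained and avoids introducing additional machinery.
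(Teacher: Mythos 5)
Your proof is correct and follows essentially the same route as the paper's: both reduce the statement to Definition \ref{def:Pseudo Definition 1} and identify $\norm{\mathcal{R}(z;M)}$ with $\left[s_{\min}(M-z)\right]^{-1}$, treating singular $M-z$ via the convention that the resolvent norm is infinite there. The only difference is that you spell out intermediate steps the paper leaves implicit, namely the identity $\left((M-z)^{-1}\right)^\dagger (M-z)^{-1}=\left((M-z)(M-z)^\dagger\right)^{-1}$, the coincidence of the nonzero spectra of $AA^\dagger$ and $A^\dagger A$, and the explicit check of the case $z\in\sigma(M)$.
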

    \begin{proof}
        First, we note that given an invertible matrix $A$, we have:
        \begin{equation}
            \underset{\lambda\in\sigma(A^{-1})}{\max} \{|\lambda|\}=\left(\underset{\lambda\in\sigma(A)}{\min} \{|\lambda|\}\right)^{-1}\,,
        \end{equation} 
        which follows trivially from the diagonalization of $A$. Then, the norm of the inverse can be written as follows:
        \begin{equation}\label{eq:Proof Matrix resolvent, aux eq 1}
            \norm{A^{-1}}=\left(\underset{\lambda\in\sigma(A)}{\min} \{|\lambda|\}\right)^{-1/2}=\left[s_{\min}(A)\right]^{-1}\,.
        \end{equation}
        Note that we can consistently define the norm of the inverse of a non-invertible matrix to be $\infty$, thereby extending this proof to arbitrary matrices. Now, specializing to the resolvent, we have:
        \begin{equation}
            \norm{\mathcal{R}(z;M)}=\left[s_{\min}(M-z)\right]^{-1} \Longrightarrow \norm{\mathcal{R}(z;M)}<1/\varepsilon \Leftrightarrow s_{\min}(M-z)>\varepsilon\,,
        \end{equation}
        which, applying definition \ref{def:Pseudo Definition 1}, implies that the theorem holds.
    \end{proof}
    
    Having expressed the pseudospectrum in terms of the generalized singular values, one can now establish the following corollary of theorem \ref{th:Boundedness of pseudospectrum}:
    \begin{corollary}
        \label{cor:Boundedness of pseudospectrum}
        Given an $N\times N$ normal matrix $M$ acting on an $N$-dimensional Hilbert space $\mathcal{V}$ equipped with a norm $\norm{\cdot}$, its $\varepsilon$-pseudospectrum $\sigma_{\varepsilon}(M)$ is
        \begin{equation}
            \sigma_{\varepsilon}(M)=\sigma(M)+\Delta_\varepsilon\,,
        \end{equation}
        with $\Delta_\varepsilon$ the open disk of radius $\varepsilon$.
    \end{corollary}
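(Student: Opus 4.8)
The plan is to feed the normality of $M$ into the singular-value characterization of the matrix pseudospectrum furnished by Theorem \ref{th:Matrix pseudospectrum}. By that theorem, $z \in \sigma_\varepsilon(M)$ if and only if $s_{\min}(M-z) < \varepsilon$, so the whole task reduces to evaluating the smallest generalized singular value of the shifted matrix $M-z$ for an arbitrary $z \in \mathbb{C}$.

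First I would observe that normality is preserved under scalar shifts: if $[M,M^\dagger]=0$, then $M-z$ and its adjoint $(M-z)^\dagger = M^\dagger - \overline{z}$ also commute, since the identity commutes with everything. Consequently $M-z$ is normal and hence unitarily diagonalizable: there is an orthonormal eigenbasis $\{e_i\}$ with $M e_i=\lambda_i e_i$ and $M^\dagger e_i = \overline{\lambda_i} e_i$, the $\lambda_i$ running over $\sigma(M)$. On this basis $(M-z)^\dagger(M-z)\, e_i = |\lambda_i - z|^2 e_i$, so the spectrum of $(M-z)^\dagger(M-z)$ is exactly $\{\,|\lambda_i - z|^2 : \lambda_i \in \sigma(M)\,\}$. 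Taking the minimum and the square root as in the definition of $s_{\min}$ then yields the clean geometric identity $s_{\min}(M-z) = \min_{\lambda_i \in \sigma(M)} |z-\lambda_i| = \operatorname{dist}\!\left(z,\sigma(M)\right)$.

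With this identity the conclusion is immediate: the condition $s_{\min}(M-z)<\varepsilon$ says precisely that $z$ lies within distance $\varepsilon$ of some eigenvalue, i.e.\ $z \in \bigcup_i \{\, w\in\mathbb{C} : |w-\lambda_i| < \varepsilon \,\}$, which is the Minkowski sum $\sigma(M)+\Delta_\varepsilon$ of the spectrum with the open disk of radius $\varepsilon$. I do not expect a genuine obstacle here; the only point requiring care is the step computing the eigenvalues of $(M-z)^\dagger(M-z)$, which relies on the simultaneous diagonalizability of $M$ and $M^\dagger$ guaranteed by normality. Without it one could not replace the singular values of $M-z$ by the moduli $|\lambda_i - z|$, and indeed this is exactly the step at which the general nested-superset behaviour of Theorem \ref{th:Boundedness of pseudospectrum} sharpens into an equality.
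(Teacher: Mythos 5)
Your proof is correct and follows essentially the same route as the paper: both reduce the claim to Theorem \ref{th:Matrix pseudospectrum} and compute the eigenvalues of $(M-z)^\dagger(M-z)$ via the simultaneous diagonalizability of $M$ and $M^\dagger$, obtaining $s_{\min}(M-z)=\min_i|z-\lambda_i|$. Your variant of acting directly on the orthonormal eigenbasis of the (still normal) shifted matrix $M-z$ is a slightly cleaner packaging of the paper's expansion of $(M-z)^\dagger(M-z)$, but it is the same argument.
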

    \begin{proof}
        Let us consider a normal matrix $M$ and its adjoint $M^\dagger$. As $M$ and $M^\dagger$ commute, they can be simultaneously diagonalized. Furthermore, theorem \ref{th:Pseudospectrum conjugation properties} implies that the eigenvalues of $M$ and $M^\dagger$ are complex conjugate to each other, \textit{i.e.}, $\lambda_i^\dagger=\overline{\lambda_i}$ for $\lambda_i\in\sigma(M)$ and $\lambda_i^\dagger\in\sigma(M^\dagger)$.  
    Then, the eigenvalues of $(M-z)^\dagger(M-z)$, $\left\{\tilde{\lambda}\right\}$ are given by:
        \begin{equation}
            \left\{\tilde{\lambda}\right\}=\sigma\left(M^\dagger M-zM^\dagger+\overline{z}M+|z|^2\right)=\left\{|\lambda|^2-z\overline{\lambda}-\overline{z}\lambda+|z|^2\right\}=\left\{|z-\lambda|^2\right\}\,,
        \end{equation}
        where $\left\{\lambda\right\}=\sigma(M)$; and, consequently, $s_{\min}(M-z)$ is:
        \begin{equation}
            s_{\min}(M-z)=|z-\lambda_i|\,,
        \end{equation}
        for some $\lambda_i\in\left\{\lambda\right\}$.
        
        Lastly, recalling theorem \ref{th:Matrix pseudospectrum}, we conclude that the $\varepsilon$-pseudospectrum is given by:
        \begin{equation}
            \sigma_\varepsilon(M-z)=\{z\in\mathbb{C},\lambda_i\in\sigma(M):|z-\lambda_i|<\varepsilon\}=\sigma(M)+\Delta_\varepsilon\,.
        \end{equation}
    \end{proof}

    Up to this point, we have been totally generic and have not assumed any particular norm. However, when working with $N\times N$ matrices it is particularly convenient to choose the $N\times N$ matrix $\ell^2$-norm ($\ell^2_N$-norm) defined as:
    \begin{equation}
        \norm{u}_2=\sqrt{\sum_{a=1}^{N} |u^a|^2}\,.
    \end{equation}

    As we will discuss later, the $\ell^2_N$ norm allows for significant optimization of the pseudospectrum algorithm, thus it is convenient to always employ it. However, we have a preferred norm, the one induced by the discretization of the original Hilbert space where the differential operator acts. Then, to exploit the advantages of the $\ell^2_N$-norm, we need to establish a connection between both norms.\footnote{Relating definitions in a generic norm to the $\ell^2_N$-norm is identical to performing a change of basis in the Hilbert space from a "coordinate" basis with nontrivial metric to an "orthogonal" basis with euclidean metric.}
    
    \begin{theorem}
        \label{th:psuedospectrum in different norms}
        Given the $\ell^2_N$-norm $\norm{\cdot}_2$ and generic $G$-norm $\norm{\cdot}_G$ such that:
        \begin{equation}
            \label{eq:G-norm in th:psuedospectrum in different norms }  
            \expval{v,u}_G= G_{ij} \overline{v^i} u^j\,,
        \end{equation}
        with $G=F^*F$ a symmetric positive definite $N\times N$ matrix.
        \begin{itemize}
            \item The $\varepsilon$-pseudospectrum of a matrix $M$ in the $G$-norm $\sigma_\varepsilon^G(M)$ satisfies
            \begin{equation}
            \label{eq:pseudo relation in th:psuedospectrum in different norms }      \sigma_\varepsilon^G(M)=\sigma_\varepsilon^{\ell^2_N}\left(FMF^{-1}\right)\,,
            \end{equation}
            with $\sigma_\varepsilon^{\ell^2_N}(M)$ the pseudospectrum in the $\ell^2_N$-norm.
            
            \item The condition number of the eigenvalue $\lambda_i$ of a matrix $M$ in the G-norm $\kappa_i^G$ satisfies:
            \begin{equation}
              \kappa_i^G=\frac{\norm{\tilde{v}_i}_2\norm{\tilde{u}_i}_2}{|\expval{\tilde{v}_i,\tilde{u}_i}_2|}\,,
            \end{equation}
            where $\tilde{u}_i$ and $\tilde{v}_i$ fulfill:
            \begin{equation}
                FMF^{-1}\tilde{u}_i=\lambda_i \tilde{u}_i\,,\qquad \left(FMF^{-1}\right)^*\tilde{v}_i=\overline{\lambda_i} \tilde{v}_i\,.
            \end{equation}
        \end{itemize}
        
    \end{theorem}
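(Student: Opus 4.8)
The plan is to realize the $G$-norm as the pullback of the $\ell^2_N$-norm under the invertible map $F$, so that both claims reduce to statements about the similarity transformation $M\mapsto FMF^{-1}$. The starting observation, immediate from $G=F^*F$, is that $F$ is an isometry between the two normed spaces: $\norm{u}_G^2=\expval{u,u}_G=u^*F^*Fu=\norm{Fu}_2^2$. From this I would derive the transformation law for the operator norm (Definition \ref{def:Operator norm Hilbert space}) by the substitution $w=Fu$, legitimate since positive-definiteness of $G$ makes $F$ invertible: $\norm{V}_G=\max_u \norm{Vu}_G/\norm{u}_G=\max_w \norm{FVF^{-1}w}_2/\norm{w}_2=\norm{FVF^{-1}}_2$. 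This single identity does most of the work.

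For the first bullet, I would apply the operator-norm law to the resolvent. Since $z$ is a scalar, $F(M-z)F^{-1}=FMF^{-1}-z$, hence $F\,\mathcal{R}(z;M)\,F^{-1}=\mathcal{R}(z;FMF^{-1})$ and therefore $\norm{\mathcal{R}(z;M)}_G=\norm{\mathcal{R}(z;FMF^{-1})}_2$ for every $z$ (with both sides $\infty$ on the common spectrum $\sigma(M)=\sigma(FMF^{-1})$). Feeding this equality of resolvent norms into Definition \ref{def:Pseudo Definition 1}, whose characterization is phrased purely in terms of $\norm{\mathcal{R}(z;\cdot)}$, yields $\sigma_\varepsilon^G(M)=\sigma_\varepsilon^{\ell^2_N}(FMF^{-1})$ directly.

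For the second bullet, the first step is to pin down the $G$-adjoint. Writing out $\expval{M^\dagger v,u}_G=\expval{v,Mu}_G$ in matrix form gives $M^\dagger=G^{-1}M^*G$. I would then check that $\tilde u_i=Fu_i$ and $\tilde v_i=Fv_i$ are the right- and left-eigenvectors of $FMF^{-1}$: the first is the trivial conjugation $FMF^{-1}(Fu_i)=F\lambda_i u_i=\lambda_i(Fu_i)$, while the second follows by rewriting $M^\dagger v_i=\overline{\lambda_i}v_i$ as $M^*Gv_i=\overline{\lambda_i}Gv_i$, left-multiplying by $(F^*)^{-1}$, and recognizing $(F^*)^{-1}M^*F^*=(FMF^{-1})^*$. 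Finally, the isometry converts every ingredient of Definition \ref{def:Condition numbers} at once: $\norm{u_i}_G=\norm{\tilde u_i}_2$, $\norm{v_i}_G=\norm{\tilde v_i}_2$, and $\expval{v_i,u_i}_G=\expval{Fv_i,Fu_i}_2=\expval{\tilde v_i,\tilde u_i}_2$, giving $\kappa_i^G=\norm{\tilde v_i}_2\norm{\tilde u_i}_2/|\expval{\tilde v_i,\tilde u_i}_2|$.

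The only genuinely delicate step is the adjoint bookkeeping in the second bullet: one must verify that the $G$-adjoint is $M^\dagger=G^{-1}M^*G$ rather than $M^*$, and that the left-eigenvector transforms as $Fv_i$ rather than, say, $(F^*)^{-1}v_i$. The identity $(F^*)^{-1}M^*F^*=(FMF^{-1})^*$ is exactly what makes the correct choice fall out. Everything else is a direct consequence of the single isometry relation $\norm{u}_G=\norm{Fu}_2$.
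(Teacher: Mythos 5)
Your proposal is correct and follows essentially the same route as the paper's proof: the isometry $\norm{u}_G=\norm{Fu}_2$ giving the operator-norm identity $\norm{V}_G=\norm{FVF^{-1}}_2$, its application to the resolvent via $F(M-z)F^{-1}=FMF^{-1}-z$ together with Definition \ref{def:Pseudo Definition 1}, and the $G$-adjoint $M^\dagger=G^{-1}M^*G=F^{-1}\left(FMF^{-1}\right)^*F$ to identify $\tilde{u}_i=Fu_i$, $\tilde{v}_i=Fv_i$ as the right- and left-eigenvectors of $FMF^{-1}$. The only difference is cosmetic ordering in the second bullet (you fix the adjoint first, the paper substitutes into the condition number first), so there is nothing to add.
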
 
    \begin{proof}
        First, let us assume an $N$-dimensional Hilbert space $\mathcal{V}$ equipped with the $G$-norm $\norm{\cdot}_G$ that induces the following inner product:
        \begin{equation}
            \expval{v,u}_G=v^* G u\,.
        \end{equation}
        For this to represent a well-defined inner product, the matrix $G$ has to be symmetric and positive definite; thus it can always be expressed as $G=F^*F$ using a Cholesky decomposition (for a proof see, for instance, ch. 4 of \cite{Horn:2017}).\footnote{In the language of differential geometry, performing a Cholesky decomposition is equivalent to choosing triangular vielbeins. For any metric, this can always be achieved by exploiting the internal O($N$) symmetry of the "orthogonal" coordinates.}
        \begin{itemize}
            \item Considering now an $N\times N$ matrix $A$, we have: 
            \begin{align}
                \norm{A}_G^2&=\underset{u\in \mathcal{V}}{\max} \frac{\expval{Au,Au}_G}{\expval{u,u}_G}=\underset{u\in \mathcal{V}}{\max} \frac{\left(Au\right)^*GAu }{u^*Gu}=\underset{u\in \mathcal{V}}{\max} \frac{u^* A^* F^* F A u }{u^*F^* Fu}\nonumber\\
                &=\underset{u\in \mathcal{V}}{\max} \frac{\expval{FAu,FAu}_2}{\expval{Fu, Fu}_2}=\underset{\tilde{u}\in \mathcal{V}}{\max} \frac{\expval{FAF^{-1}\tilde{u},FAF^{-1}\tilde{u}}_2}{\expval{\tilde{u}, \tilde{u}}_2}=\norm{FAF^{-1}}_2^2\,,
            \end{align}
            where we have introduced $\tilde{u}=Fu$. This relates the $G$-norm to the $\ell^2_N$ norm for any generic matrix. In particular, for the resolvent of a matrix $M$:
            \begin{align}
                \norm{\mathcal{R}(z;M)}_G&=\norm{F\mathcal{R}(z;M)F^{-1}}_2=\norm{F(M-z)^{-1}F^{-1}}_2\nonumber\\
                &=\norm{(FMF^{-1}-z)^{-1}}_2=\norm{\mathcal{R}(z;FMF^{-1})}_2\,,
            \end{align}
            and using definition \ref{def:Pseudo Definition 1} we trivially recover \eqref{eq:pseudo relation in th:psuedospectrum in different norms }.
            \item Now we assume an $N\times N$ matrix $M$ with left- and right-eigenvalues $v_i$ and $u_i$:
            \begin{equation}\label{eq:proof th psuedospectrum in different norms, def left right eigenvectors}
                M u_i=\lambda_i u_i\,,\qquad M^\dagger v_i=\overline{\lambda_i} v_i\,.
            \end{equation}
            The condition number $\kappa_i^G$ corresponding to the eigenvalue $\lambda_i$ is then given by:
            \begin{equation}
                \kappa_i^G=\frac{\sqrt{v_i^* G v_i}\sqrt{u_i^* G u_i}}{|v_i^* G u_i|}=\frac{\sqrt{(Fv_i)^* F v_i}\sqrt{(Fu_i)^*  F u_i}}{|(Fv_i)^* F u_i|}=\frac{\norm{\tilde{v}_i}_2\norm{\tilde{u}_i}_2}{|\expval{\tilde{v}_i,\tilde{u}_i}_2|}\,,
            \end{equation}
            where we have introduced $\tilde{u}_i=F u_i$ and $\tilde{v}_i=Fv_i$.

            Now, we note that $M^\dagger$ in the G norm is defined as:
            \begin{equation}
                M^\dagger=\left(GMG^{-1}\right)^*=F^{-1}\left(FMF^{-1}\right)^* F\,,
            \end{equation}
            which trivially follows from the definition of adjoint:
            \begin{equation}
                \expval{M^\dagger g,f}_G=\expval{ g,Mf}_G\Rightarrow g^*\left(M^\dagger\right)^* Gf=g^*GMf\Rightarrow \left(M^\dagger\right)^* G=GM\,,
            \end{equation}
            with $f$ and $g$ two arbitrary vectors.

            Then, we can rewrite \eqref{eq:proof th psuedospectrum in different norms, def left right eigenvectors} in terms of $\tilde{u}_i$ and $\tilde{v}_i$ as:
            \begin{align}
                &MF^{-1} \tilde{u}_i=F^{-1} \lambda_i\tilde{u}_i\Rightarrow FMF^{-1} \tilde{u}_i=\lambda_i\tilde{u}_i\,,\\
                &F^{-1}\left(FMF^{-1}\right)^* F F^{-1} \tilde{v}_i=F^{-1} \overline{\lambda_i} \tilde{v}_i\Rightarrow  \left(FM F^{-1}\right)^* \tilde{v}_i=F^{-1} \overline{\lambda_i}\tilde{v}_i\,,
            \end{align}
            thus concluding the proof.
        \end{itemize}
    \end{proof}

    Lastly, let us address a nontrivial subtlety of the discretization process. Definition \ref{def:Pseudo Definition 2} of the $\varepsilon$-pseudospectrum assumes bounded perturbations, or, in more physical terms, potential perturbations. However, in a discretized setting all operators become bounded and one cannot distinguish the discretized version of an originally bounded operator from that of an originally unbounded one. Then, in many cases it will be more interesting to probe a particular type of perturbations using definition \ref{def:Pseudo Definition 2}, either choosing a specific form of the perturbation matrix or generating random matrices satisfying some physically motivated constraints.

\subsubsection{Norm Dependence in a Simple Example}
    In order to make the formalism of this section a bit more transparent, we present the pseudospectrum of a $2\times2$ matrix $A$
    \begin{equation}\label{eq:A matrix Example}
        A=\begin{pmatrix}
            -1&0\\
            -50&-2
        \end{pmatrix}\,,
    \end{equation}
    with eigenvalues $\omega_1=-2$ and $\omega_2=-1$.
    
    Furthermore, to make the norm dependence explicit, we present the analysis in two norms, the $\ell^2_2$-norm and the $G$-norm:
    \begin{equation}
        \norm{u}_G= G_{ij} \overline{u^i} u^j\,,
    \end{equation}
    with $G$
    \begin{equation}
        G=\left(
        \begin{array}{cc}
            2\cdot10^4 & 50 \\
            50 & 1 \\
        \end{array}
        \right)=\left(
        \begin{array}{cc}
            100\sqrt{2} & 0\\[\medskipamount]
            \frac{1}{2\sqrt{2}} & \sqrt{\frac{7}{8}}
        \end{array}
        \right)
        \left(
        \begin{array}{cc}
            100\sqrt{2} & \frac{1}{2\sqrt{2}} \\[\medskipamount]
            0 & \sqrt{\frac{7}{8}}
        \end{array}
        \right)=F^* F\,.
    \end{equation}
    \begin{itemize}
        \item $\ell^2_2$-norm:
        
        In this case, the adjoint of $A$ is its complex conjugate transpose
        \begin{equation}
            A^\dagger=\begin{pmatrix}
                -1&-50\\
                0&-2
            \end{pmatrix}\,,
        \end{equation}
        and $A$ is not normal:
        \begin{equation}
            A^\dagger A-AA^\dagger=\left(
            \begin{array}{cc}
                -2500 & -50 \\
                -50 & 2500 \\
            \end{array}
            \right)\neq0\,.
        \end{equation}
        Then, one could potentially find instabilities associated with non-normality. A suspicion confirmed by the condition numbers
        \begin{equation}
            \kappa_1=\sqrt{2501}\approx 50\,, \qquad \kappa_2=\sqrt{2501}\approx 50\,,
        \end{equation}
        which differ greatly from 1, and the pseudospectrum of figure \ref{fig:PseudospectrumExampleNonNormal}, which is characterized by extended $\varepsilon$-pseudospectrum boundaries that eventually engulf both eigenvalues for small values of $\varepsilon$.
        
        \item $G$-norm:
        
        In this case, $A$ is self-adjoint:
        \begin{equation}
            A^\dagger=\overline{G^{-1}AG}=\begin{pmatrix}
                -1&0\\
                -50&-2
            \end{pmatrix}=A\,,
        \end{equation}
        and consequently its spectrum is stable.
        
        Here, the condition numbers are
        \begin{equation}
            \kappa_1=1\,, \qquad \kappa_2=1\,,
        \end{equation}
        and the pseudospectrum is characterized by concentric circles of radius $\varepsilon$ around the eigenvalues (figure \ref{fig:PseudospectrumExampleNormal}), both denoting stability.
    \end{itemize}

    As expected, the choice of norm is nontrivial; the notion of stability depends on the definition of small perturbations. It is also interesting to remark that figures \ref{fig:PseudospectrumExampleNonNormal} and \ref{fig:PseudospectrumExampleNormal} show very common features of the pseudospectrum of non-normal and normal operators, respectively. In general, the $\varepsilon$-pseudospectrum of a normal operator is a set disks of radius $\varepsilon$ centered on the eigenvalues (as indicated by corollary \ref{cor:Boundedness of pseudospectrum}); while for a non-normal operator it presents extended regions that may contain multiple eigenvalues.\footnote{Note that for normal operators the $\varepsilon$-pseudospectrum does engulf multiple eigenvalues if $\varepsilon$ is large. As expected, this behavior appears for $\varepsilon$ larger than half the distance between the two eigenvalues.} 

    \begin{figure}[h!]
        \centering
        \begin{subfigure}[b]{0.49\linewidth}
            \centering
            \includegraphics[width=\linewidth]{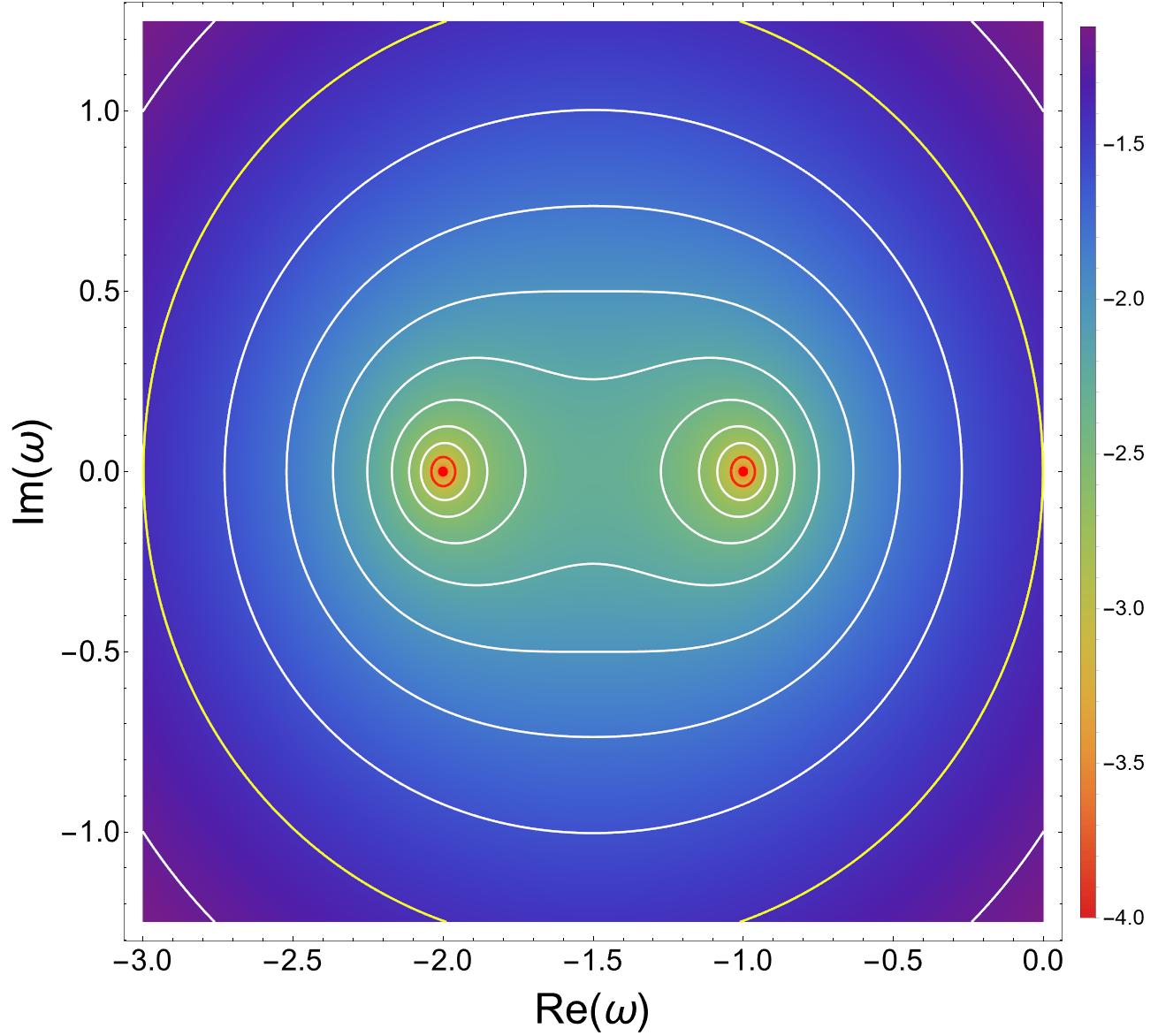}
            \captionsetup{justification=centering}
            \caption{$\ell^2_2$-norm}
            \label{fig:PseudospectrumExampleNonNormal}
        \end{subfigure}
        \begin{subfigure}[b]{0.49\linewidth}
            \centering
            \includegraphics[width=\linewidth]{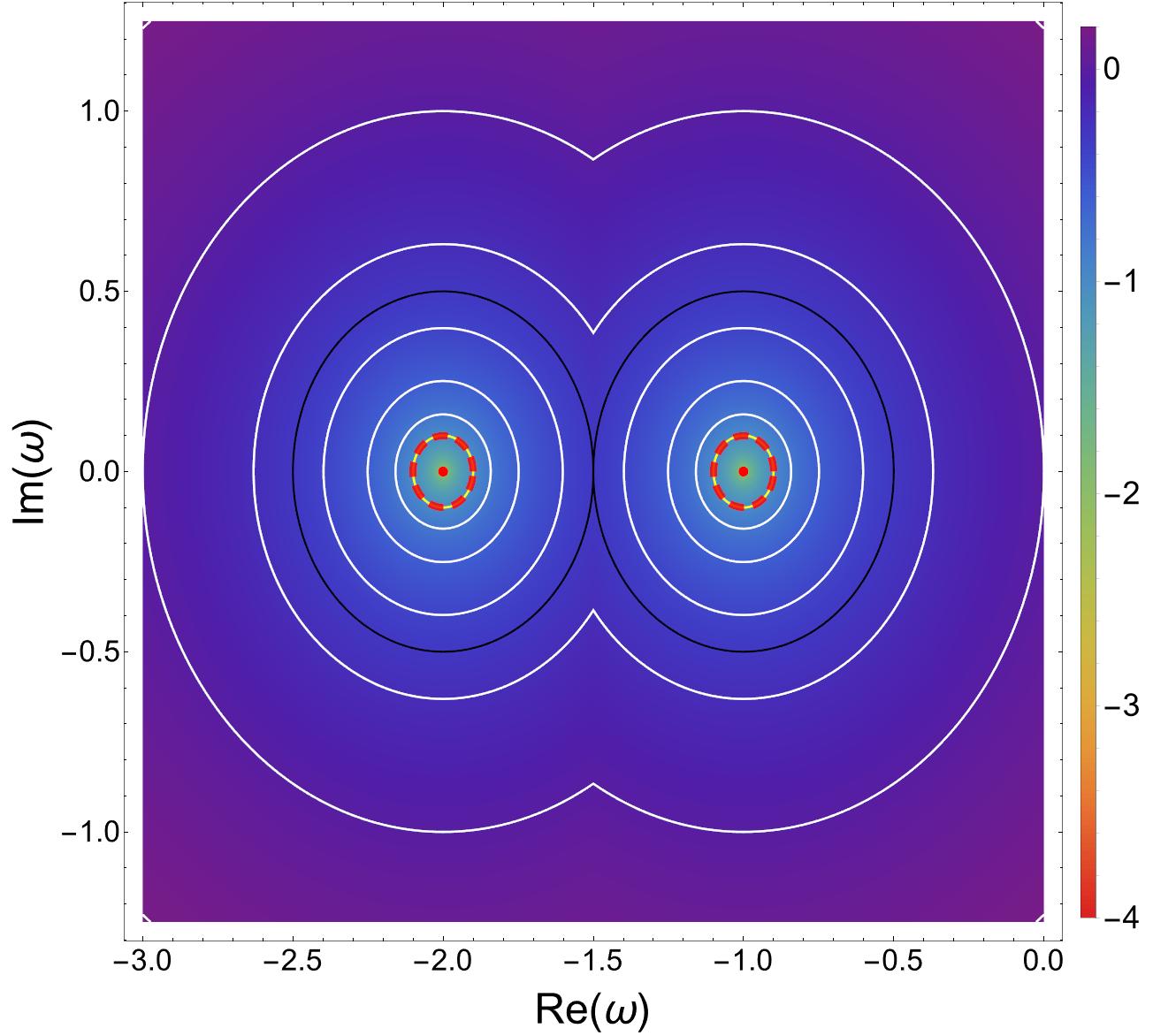}
            \captionsetup{justification=centering}
            \caption{$G$-norm}
            \label{fig:PseudospectrumExampleNormal}
        \end{subfigure}
        \caption{Pseudospectrum of the $A$ matrix \eqref{eq:A matrix Example} in two different norms. In both figures the red dots correspond to the eigenvalues, the white lines to boundaries of different $\varepsilon$-pseudospectra and the heat map to the logarithm of the inverse of the resolvent. In (a) the solid red and yellow lines represent the boundaries of $\sigma(A)+\Delta_{0.04}$ and $\sigma_{0.04}(A)$, respectively; note the great discrepancy associated with the non-normality. On the other hand, in (b) the dashed red line and the solid yellow line are superimposed and correspond to the boundaries of $\sigma(A)+\Delta_{0.1}$ and $\sigma_{0.1}(A)$, respectively. The solid black line in (b) represents the boundary of $\sigma_{0.5}(A)$.}
        \label{fig:PseudospectrumExample}
    \end{figure}

    \section{Quasinormal Modes and Pseudospectra}\label{section:Quasinormal Modes and Pseudospectra}
    Quasinormal modes (QNMs) in AdS are solutions to the linearized equations of motion in a black hole background with well-defined quasinormal frequencies (QNFs), subject to outgoing  boundary conditions on the horizon and normalizable boundary conditions on the AdS boundary \cite{Berti:2009kk}.\footnote{Imposing normalizable boundary conditions corresponds to choosing source-less excitations in the dual QFT or, equivalently, to selecting solutions where the leading term on the AdS boundary vanishes. Note that in some cases the subleading mode can also be identified with the source; this corresponds to a different choice of quantization scheme in the dual QFT \cite{Klebanov:1999tb}. For simplicity, in the present work we do not consider that possibility.} QNMs are small excitations of the black hole spacetime that eventually fall into the horizon and die out, provided the background is stable.\footnote{Here, stability of the background entails that perturbations decay as opposed to growing exponentially.} From the dual QFT perspective, they correspond to excitations over the thermal equilibrium that eventually thermalize. The decaying behavior is reflected in the associated QNFs, which are complex as the system is inherently non-conservative.

    As argued in the previous section, the spectra of non-conservative systems are potentially unstable.
    One thus needs to complement the study of QNFs with pseudospectrum analysis. 
    In order to achieve this, we 
    need to cast the problem of finding QNFs as an eigenvalue equation and define a physically motivated norm.
In this section we implement these two steps to determine the pseudospectrum of fluctuations in asymptotic AdS black hole geometries.
    
    Specifically, as a background we consider a static $(\D+1)$-dimensional AdS planar black hole (black brane) spacetime, whose dual is a d-dimensional strongly coupled QFT in thermal equilibrium at temperature given by the Hawking temperature of the black brane. The corresponding metric in Poincaré coordinates is given by:
    \begin{equation}\label{eq:Poincare Patch SAdS}
        ds^2 = \frac{r^2}{l^2}\left(-f(r) dt^2 + \delta_{ij} dx^i dx^j\right) + \frac{l^2}{r^2}\frac{dr^2}{f(r)}\,,
    \end{equation}
    where $l$ is the AdS radius and $f(r)$ is the blackening factor whose largest zero we denote by $r_h$. In these coordinates, the Hawking temperature $T$ is given by:
    \begin{equation}\label{eq:Temperature black brane}
        T=\frac{r_h^2}{4\pi l^2}\left.\frac{df(r)}{dr}\right|_{r=r_h}\,,
    \end{equation}
    and the AdS boundary and the horizon are located at $r=\infty$ and $r=r_h$, respectively.
    
    \subsection{Construction of the Eigenvalue Problem}\label{subsection:Construction of the Eigenvalue Problem}
    Our approach relies on utilizing regular coordinates, as introduced in \cite{Warnick:2013hba,Ficek:2023imn}, to express the linearized equations of motion as an eigenvalue problem while also implementing the outgoing boundary conditions as regularity conditions on the horizon.\footnote{Regular coordinates are the AdS equivalent to the hyperboloidal ones used in asymptotically flat spacetimes (see \cite{Schmidt:1993rcx,Zenginoglu:2011jz,Ansorg:2016ztf,PanossoMacedo:2018hab} for applications to QNMs and  \cite{Bizon:2020qnd,PanossoMacedo:2023qzp} for a more recent overview).} 
    
    For the sake of completeness, in this subsection we present the approach in a general setting, only assuming a black brane background.\footnote{Throughout this subsection we will not address the normalizable boundary conditions. As we will see later, these can be easily implemented as Dirichlet or Neumann boundary conditions with an adequate redefinition of the QNMs.}
    In \ref{subsubsection:Example: Scalar in SAdS5}, we specialize this discussion to a scalar in an AdS$_{4+1}$ Schwarzschild black brane (SAdS$_{4+1}$) background, hoping to shed more light on the nature of the approach.

    In general, we consider linearized equations of motion of the form:
    \begin{equation}\label{eq:General linearized eoms}
        \left[-\nabla_M \nabla^M +K^M\nabla_M +U\right]\Phi=0\,,
    \end{equation}
    with $\Phi$ a multi-component field, $K^M$ and $U$ functions and $\nabla_M$ the covariant derivative.
    
    Exploiting the symmetries of the black brane, we choose coordinates $\{\mathscr{t},\boldsymbol{\mathscr{x}},\mathscr{r}\}$ such that the linear operator of equation \eqref{eq:General linearized eoms} only depends on $\mathscr{r}$. Furthermore, in order to eliminate $\boldsymbol{\mathscr{x}}$ derivatives we express $\Phi$ as a superposition of Fourier modes: 
    \begin{equation}\label{eq:Fourier decomposition QNM general no frequency}
        \Phi(\mathscr{t},\boldsymbol{\mathscr{x}},\mathscr{r})=\int \frac{d^{\D-1}k}{(2\pi)^{\D-1}}\Phi(\mathscr{t},\mathbf{k},\mathscr{r}) e^{i\mathbf{k}\boldsymbol{\mathscr{x}}}\,,
    \end{equation}
    and solve for $\Phi(\mathscr{t},\mathbf{k},\mathscr{r})$ instead.
    We can then rewrite equation \eqref{eq:General linearized eoms} as
    \begin{equation}
        \label{eq:General linearized eoms as eigenvalue problem step 1}
        g^{\mathscr{t}\mathscr{t}}\partial_\mathscr{t}^2 \Phi(\mathscr{t},\mathbf{k},\mathscr{r})= F_1\left[\partial_\mathscr{r}^2,\partial_\mathscr{r};\mathbf{k},\mathscr{r}\right]\Phi(\mathscr{t},\mathbf{k},\mathscr{r})+F_2\left[\partial_\mathscr{r};\mathbf{k},\mathscr{r}\right]\partial_\mathscr{t}\Phi(\mathscr{t},\mathbf{k},\mathscr{r})\,,
    \end{equation}
    where $F_1\left[\partial_\mathscr{r}^2,\partial_\mathscr{r};\mathbf{k},\mathscr{r}\right]$ and $F_2\left[\partial_\mathscr{r};\mathbf{k},\mathscr{r}\right]$ are differential operators on $\mathscr{r}$  of second and first order, respectively.
As long as the $\mathscr{t}$-hypersurfaces are not null ($g^{\mathscr{t}\mathscr{t}}\neq0$), we can introduce the auxiliary variable $\Psi=\partial_\mathscr{t}\Phi$ and rewrite the equations of motion \eqref{eq:General linearized eoms as eigenvalue problem step 1} as:\footnote{For notational convenience, we refer to the $x^M=\text{const}$ hypersurface as $x^M$-hypersurface.}
    
    \begin{equation}
        \label{eq:General linearized eoms as eigenvalue problem step 2}
        \partial_\mathscr{t}\begin{pmatrix}
            \Phi(\mathscr{t},\mathbf{k},\mathscr{r})\\
            \Psi(\mathscr{t},\mathbf{k},\mathscr{r})
        \end{pmatrix} = 
        \begin{pmatrix}
            0&1\\
        L_1\left[\partial_\mathscr{r}^2,\partial_\mathscr{r};\mathbf{k},\mathscr{r}\right]&L_2\left[\partial_\mathscr{r};\mathbf{k},\mathscr{r}\right]   
        \end{pmatrix}\begin{pmatrix}
            \Phi(\mathscr{t},\mathbf{k},\mathscr{r})\\
            \Psi(\mathscr{t},\mathbf{k},\mathscr{r})
        \end{pmatrix}\,,
    \end{equation}
    where we have defined $L_i=\left(g^{\mathscr{t}\mathscr{t}}\right)^{-1}F_i$.

    Recalling that a QNM $\Phi(\mathscr{t},\mathbf{k},\mathscr{r})$ has a well-defined QNF $\omega$ or, equivalently, that it is an eigenfunction of the Killing vector associated with stationarity, $\mathfrak{t}=\partial_\mathscr{t}$:
    \begin{equation}
        \mathfrak{t}\begin{pmatrix}
            \Phi(\mathscr{t},\mathbf{k},\mathscr{r})\\
            \Psi(\mathscr{t},\mathbf{k},\mathscr{r})
        \end{pmatrix}=\partial_\mathscr{t}\begin{pmatrix}
            \Phi(\mathscr{t},\mathbf{k},\mathscr{r})\\
            \Psi(\mathscr{t},\mathbf{k},\mathscr{r})
        \end{pmatrix}=-i \omega \begin{pmatrix}
            \Phi(\mathscr{t},\mathbf{k},\mathscr{r})\\
            \Psi(\mathscr{t},\mathbf{k},\mathscr{r})
        \end{pmatrix}\,,
    \end{equation}
    we can cast expression \eqref{eq:General linearized eoms as eigenvalue problem step 2} as a standard eigenvalue problem:
    \begin{equation}
        \label{eq:General linearized eoms as eigenvalue problem step 3}
        \omega\begin{pmatrix}
            \Phi(\mathscr{t},\mathbf{k},\mathscr{r})\\
            \Psi(\mathscr{t},\mathbf{k},\mathscr{r})
        \end{pmatrix} = i
        \begin{pmatrix}
            0&1\\
        L_1\left[\partial_\mathscr{r}^2,\partial_\mathscr{r};\mathbf{k},\mathscr{r}\right]&L_2\left[\partial_\mathscr{r};\mathbf{k},\mathscr{r}\right]   
        \end{pmatrix}\begin{pmatrix}
            \Phi(\mathscr{t},\mathbf{k},\mathscr{r})\\
            \Psi(\mathscr{t},\mathbf{k},\mathscr{r})
        \end{pmatrix}\,.
    \end{equation}
    
    Note that to write the equations of motion as an eigenvalue problem it is fundamental that the $\mathscr{t}$-hypersurfaces are spacelike as opposed to null. Otherwise, we would obtain a generalized eigenvalue equation of the form:
   \begin{equation}
        \label{eq:General linearized eoms as eigenvalue problem null case}
        \omega F_2\left[\partial_\mathscr{r};\mathbf{k},\mathscr{r}\right]\Phi(\mathscr{t},\mathbf{k},\mathscr{r}) = -i F_1\left[\partial_\mathscr{r}^2,\partial_\mathscr{r};\mathbf{k},\mathscr{r}\right]\Phi(\mathscr{t},\mathbf{k},\mathscr{r})\,,
    \end{equation}
    instead of the standard eigenvalue problem discussed in section \ref{section:Pseudospectrum and Stability}.\footnote{Technically, one can extend the formalism presented in section \ref{section:Pseudospectrum and Stability} to also hold for generalized eigenvalue problems of the form $\left(L-\lambda A\right)u=0$. However, the adequate notion of $\varepsilon$-pseudospectrum becomes unclear as one has to choose whether to perturb $L$, $A$ or both (see Chapter 45 of \cite{Trefethen:2005} and references therein for an extensive discussion).}

    Regarding the boundary conditions, we seek to choose $\{\mathscr{t},\boldsymbol{\mathscr{x}},\mathscr{r} \}$ such that demanding regularity on the horizon is equivalent to imposing outgoing boundary conditions. This is typically achieved working with infalling Eddington-Finkelstein (IEF) coordinates $\{u,\tilde{\mathbf{x}},\tilde{r}\}$, which, in terms of the Poincaré coordinates $\{t,\mathbf{x},r\}$, are given by:
    \begin{equation}\label{eq:IEF coordinates}
        u=t+\int \frac{dr}{f(r)} \left(\frac{l}{r}\right)^2\,, \qquad \tilde{\mathbf{x}}=\mathbf{x}\,, \qquad \tilde{r}=r\,.
    \end{equation}
    However, writing the metric \eqref{eq:Poincare Patch SAdS} in IEF coordinates:
    \begin{equation}\label{eq:IEF metric}
        ds^2 = \frac{\tilde{r}^2}{l^2}\left(-f(\tilde{r}) du^2 + \delta_{ij} dx^i dx^j\right) + 2du  d\tilde{r}\,,
    \end{equation}
    we conclude that the $u$-hypersurfaces are null. Thus IEF coordinates do not result in an standard eigenvalue problem for the QNMs.

    Nonetheless, we can define regular coordinates that resemble the IEF ones near the horizon while having spacelike $\mathscr{t}$-hypersurfaces outside the horizon. More concretely, we construct regular coordinates such that the spacelike $\mathscr{t}$-hypersurfaces match the null IEF $u$-hypersurfaces exactly on the horizon \cite{Warnick:2013hba}. A particularly simple choice is given by:
    \begin{equation}\label{eq:Regular Coordinates}
        \mathscr{t}=u-\frac{l^2}{r_h}\left(1-\frac{r_h}{\tilde{r}}\right)=t-\frac{l^2}{r_h}\left(1-\frac{r_h}{r}\right)+\int \frac{dr}{f(r)}\left(\frac{l}{r}\right)^2\,, \qquad \boldsymbol{\mathscr{x}}=\tilde{\mathbf{x}}=\mathbf{x} \,, \qquad\mathscr{r}=\tilde{r}=r\,,
    \end{equation}
    in terms of which the metric \eqref{eq:Poincare Patch SAdS} is:
    \begin{equation}\label{eq:General Regular metric}
        ds^2 = \frac{\mathscr{r}^2}{l^2}\left(-f(\mathscr{r}) d\mathscr{t}^2 + \delta_{ij} d\mathscr{x}^i d\mathscr{x}^j\right) + 2\left(1-f(\mathscr{r})\right)d\mathscr{t}  d\mathscr{r}+\frac{l^2}{\mathscr{r}^2}\left(2-f(\mathscr{r})\right)d\mathscr{r}^2\,.
    \end{equation}

    In practice, for most numerical applications it is  more convenient to introduce a dimensionless compactified radial coordinate $\rho$:
    \begin{equation}
        \rho=1-\frac{r_h}{\mathscr{r}}\,,
    \end{equation}
    such that the horizon is mapped to $\rho=0$ and the AdS boundary to $\rho=1$. In terms of these compactified regular coordinates $\{\mathscr{t},\boldsymbol{\mathscr{x}},\rho \}$, the black brane metric \eqref{eq:Poincare Patch SAdS} takes the form:
    \begin{equation}
        \label{eq:General Regular metric compactified}
        ds^2 = \frac{l^2}{(1-\rho)^2}\left(-\mathscr{f}(\rho) d\mathscr{t}^2 + \delta_{ij} d\mathscr{x}^id\mathscr{x}^j +2\left(1-\mathscr{f}(\rho)\right)d\mathscr{t} d\rho +\left(2-\mathscr{f}(\rho)\right)z_h^2 d\rho^2\right) \,.
    \end{equation}
    where $\mathscr{f}(\rho)=f\left(r\left(\rho\right)\right)$ and $z_h=l^2/r_h$.

    It is worth stressing that the components of the Killing vector $\mathfrak{t}=\partial_{\mathscr{t}}=\partial_t$ are invariant under the change of coordinates from regular to Poincaré coordinates. Consequently, the QNFs defined as the eigenvalues of $\partial_t$ and $\partial_\mathscr{t}$ match. In gauge/gravity duality this is important as the holographic dictionary identifies the poles of the retarded propagators with the eigenvalues of $\partial_t$.\footnote{Here, by the holographic dictionary we refer to the standard one formulated in Poincaré coordinates. Note that one can always formulate a new holographic dictionary in a different coordinate set and identify the QNFs with the poles of the retarded propagators.}

    Before moving on to a concrete example, we summarize the core principles behind the regular coordinates: Given an AdS$_{\D+1}$ black brane spacetime whose metric in Poincaré coordinates is given by equation \eqref{eq:Poincare Patch SAdS} a regular coordinate set $\{\mathscr{t},\boldsymbol{\mathscr{x}},\mathscr{r}\}$ is one that fulfils the following three requirements: the metric components are functions  only of $\mathscr{r}$, $\mathscr{t}$-hypersurfaces are spacelike outside the horizon, and those
        $\mathscr{t}$-hypersurfaces
        match the IEF $u$-hypersurfaces on the horizon. Imposing outgoing boundary conditions in these coordinates  corresponds to demanding regularity on the horizon. Moreover, equations of motion
        for the QNMs of the form \eqref{eq:General linearized eoms} can be written as a standard eigenvalue problem.

    \subsubsection{Example: Real Scalar in \texorpdfstring{$\text{SAdS}_{4+1}$}{SAdS4+1}}\label{subsubsection:Example: Scalar in SAdS5}

    To better clarify the role played by the regular coordinates in the construction of the eigenvalue problem, here we specialize the discussion to a real scalar field $\phi$ with action
    \begin{equation}\label{eq:Action real scalar}
        \mathcal{S}[\phi]=-\frac{1}{2}\int d^{4+1}x \sqrt{-g}\,\, \left(\partial_M\phi\partial^M \phi+m^2 \phi^2 \right)\,,
    \end{equation}
    in a SAdS$_{4+1}$ background. The corresponding linearized equations of motion are given by:
    \begin{equation}\label{eq:EOM real scalar}
        \left[-\nabla_M \nabla^M + m^2\right]\phi=0\,.
    \end{equation}
    
    For better emphasis on the key aspects of our approach, we present this discussion in Poincaré, IEF and compactified regular coordinates.
    
   \hfill\break
        \textbf{Poincaré Coordinates}: the SAdS$_{4+1}$ metric in Poincaré coordinates is given by \eqref{eq:Poincare Patch SAdS} with $f(r)=1-\left(\frac{r_h}{r}\right)^4$ and $\{i,j\}=1,2,3$.
        In terms of the Fourier modes $\phi(t,\mathbf{k},r)$, the linearized equation of motion \eqref{eq:EOM real scalar} for a QNM trivially produces the following eigenvalue problem:
        \begin{equation}\label{eq:EOM real scalar Poincaré} 
            \omega^2\,\phi(t,\mathbf{k},r)=f(r) \left[ \mathbf{k}^2 +\frac{m^2 r^2 }{l^2} -\frac{\left(r^5  f(r)\right)' }{r l^4 } \partial_r -\frac{r^4 f(r)}{l^4}\partial_r^2\right]\phi(t,\mathbf{k},r)\,,
        \end{equation}
        where we have used that the QNM is an eigenfunction of $\partial_t$ with QNF $\omega$.
        
        In order to impose outgoing boundary conditions, we first analyze the behavior near the horizon $r=r_h$. Solving the equations of motion with a Frobenius series, we find:
        \begin{equation}
            \phi(t,\mathbf{k},r)= e^{-i\omega t}\left\{ c_1 \exp\left[-i\frac{\omega l^2}{4 r_h} \log(r-r_h)\right] \left(1+...\right) + c_2 \exp\left[i\frac{\omega l^2}{4 r_h} \log(r-r_h)\right] \left(1+...\right) \right\}\,.
        \end{equation}
        
        As the outgoing solution corresponds to fixing $c_2=0$, imposing outgoing boundary conditions correspond to fixing a near-horizon behavior:
        \begin{equation}\label{eq:outgoing behavior auxiliar}
            \phi(t,\mathbf{k},r\rightarrow r_h)\approx c_1\exp\left[-i\omega\left(t + \frac{l^2}{4 r_h} \log(r-r_h) \right)\right]\,.
        \end{equation}
        
        This choice for the outgoing solution matches our physical interpretation of a mode that falls into the horizon. Defining the tortoise coordinate $r_*$
        \begin{equation}
            r_*= \frac{l^2 }{2 r_h}\left[\arctan\left(\frac{r}{r_h}\right)-\arctanh\left(\frac{r_h}{r}\right)\right]=\frac{l^2}{4 r_h} \log(r-r_h)+ \mathcal{O}\left((r-r_h)^0\right)\,,
        \end{equation}
        the near-horizon behavior can be written as two modes, one falling towards the brane and one exiting it:
        \begin{equation}
            \phi(t,\mathbf{k},r\rightarrow r_h)\approx c_1\exp\left[-i\omega\left(t + r_* \right)\right] + c_2\exp\left[-i\omega\left(t - r_* \right)\right]\,,
        \end{equation}
        and choosing a QNM that falls into the horizon corresponds to discarding the ingoing mode, \textit{i.e.}, taking $c_2=0$.
        
       \hfill\break
       \textbf{IEF Coordinates}: the SAdS$_{4+1}$ metric in IEF coordinates is given by \eqref{eq:IEF metric} with $f(\tilde{r})=1-\left(\frac{r_h}{\tilde{r}}\right)^4$ and $\{i,j\}=1,2,3$.
        The linearized  equation of motion for a QNM \eqref{eq:EOM real scalar} reads:
        \begin{equation}\label{eq:EOM real scalar IEF} 
            \omega \left[\frac{3}{\tilde{r}}+2 \partial_{\tilde{r}}\right] \phi(u,\mathbf{k},\tilde{r})=i\left[ \frac{\mathbf{k}^2 l^2}{\tilde{r}^2} +m^2-\frac{\left(\tilde{r}^5f(\tilde{r})\right)' }{\tilde{r}^3 l^2}\partial_{\tilde{r}} - \frac{\tilde{r}^2 f(\tilde{r})}{l^2}\partial_{\tilde{r}}^2\right] \phi(u,\mathbf{k},\tilde{r})\,,
        \end{equation}
        and it cannot be cast as a standard eigenvalue problem due to the derivative term.  
        The near-horizon solution is given by:
        \begin{equation}
            \phi(u,\mathbf{k},\tilde{r})= e^{-i\omega u}\left\{ c_1 \left(1+...\right) + c_2 \exp\left[i\frac{\omega l^2}{2 r_h} \log(\tilde{r}-r_h)\right] \left(1+...\right) \right\}\,,
        \end{equation}
        with the outgoing solution corresponding to fixing $c_2=0$.\footnote{This follows from the Poincaré result noting that under the coordinate transformation \eqref{eq:IEF coordinates} we have:
        \begin{equation*}
            e^{-i\omega u}=\exp\left[-i\omega\left(t+\frac{l^2}{4r_h}\log\left(r-r_h\right)\right)\right]+ \mathcal{O}\left((r-r_h)^0\right)\,.
        \end{equation*}} Consequently, imposing outgoing boundary conditions corresponds to demanding regularity on the horizon:
        \begin{equation}
            \phi(t,\mathbf{k},\tilde{r}\rightarrow r_h)\approx c_1 e^{-i\omega u}\,.
        \end{equation}
        
        \hfill\break
        \textbf{Compactified Regular Coordinates}:
        the SAdS$_{4+1}$ metric in compactified regular coordinates $\{\mathscr{t},\boldsymbol{\mathscr{x}},\rho\}$ is given by \eqref{eq:General Regular metric compactified}  with $\mathscr{f}(\rho)=1-\left(1-\rho\right)^4$ and $\{i,j\}=1,2,3$.
        Then, the linearized equation of motion for a QNM \eqref{eq:EOM real scalar} is:
        \begin{align}\label{eq:EOM real scalar Regular} 
            z_h^2 \omega^2 \left[\mathscr{f}(\rho)-2\right]\phi(\mathscr{t},\mathbf{k},\mathscr{r})=iz_h\omega\left[(1-\rho)^3\left(\frac{\mathscr{f}(\rho)-1}{(1-\rho)^3}\right)' +2 \left(\mathscr{f}(\rho)-1\right)\partial_{\rho}  \right]\phi(\mathscr{t},\mathbf{k},\mathscr{r})\nonumber\\ 
            -\left[\frac{m^2 l^2}{(1-\rho)^2}+z_h^2\mathbf{k}^2-(1-\rho)^3\left(\frac{\mathscr{f}(\rho) }{(1-\rho)^3}\right)'\partial_{\rho} -\mathscr{f}(\rho)\partial_{\rho}^2 \right]\phi(\mathscr{t},\mathbf{k},\mathscr{r})\,,
        \end{align}
        and, introducing the dimensionless variables
        \begin{equation}
            \mathfrak{w}=z_h\omega=\frac{\omega}{\pi T}\,,\quad \boldsymbol{\mathfrak{q}}=z_h\mathbf{k}=\frac{\mathbf{k}}{\pi T}\,,
        \end{equation}
        and the auxiliary field $\psi(\mathscr{t},\mathbf{k},\mathscr{r})=z_h\partial_\mathscr{t}\phi(\mathscr{t},\mathbf{k},\mathscr{r})$, equation \eqref{eq:EOM real scalar Regular}  can be written as an eigenvalue problem
        \footnote{One could cast the eigenvalue problem without introducing the dimensionless variables $\mathfrak{w}$ and $\boldsymbol{\mathfrak{q}}$. However, it is convenient to use them as they allows us to remove all $z_h$ dependence and thus reduce the parameter space.}
        \begin{equation}\label{eq:Eigenvalue problem real scalar 1}
            \mathfrak{w}\begin{pmatrix}
            \phi(\mathscr{t},\mathbf{k},\rho)\\
            \psi(\mathscr{t},\mathbf{k},\rho)
            \end{pmatrix} = i
            \begin{pmatrix}
            0&1 \\
            L_1\left[\partial_\mathscr{\rho}^2,\partial_\mathscr{\rho};\boldsymbol{\mathfrak{q}},\rho\right]&L_2\left[\partial_\rho;\boldsymbol{\mathfrak{q}},\rho\right]   
            \end{pmatrix}\begin{pmatrix}
            \phi(\mathscr{t},\mathbf{k},\rho)\\
            \psi(\mathscr{t},\mathbf{k},\rho)
            \end{pmatrix}\,,
        \end{equation}
        where the differential operators
        $L_1$ and $L_2$ take the form:
        \begin{align}\label{eq:Eigenvalue problem real scalar 2}
            L_1\left[\partial_\rho^2,\partial_\rho;\boldsymbol{\mathfrak{q}},\rho\right]&=\left[\mathscr{f}(\rho)-2\right]^{-1}\left[\frac{m^2 l^2}{(1-\rho)^2}+\boldsymbol{\mathfrak{q}}^2-(1-\rho)^3\left(\frac{\mathscr{f}(\rho) }{(1-\rho)^3}\right)'\partial_{\rho} -\mathscr{f}(\rho)\partial_{\rho}^2 \right]\,,\\[\medskipamount]\label{eq:Eigenvalue problem real scalar 3}
            L_2\left[\partial_\rho;\boldsymbol{\mathfrak{q}},\rho\right]&=\left[\mathscr{f}(\rho)-2\right]^{-1}\left[(1-\rho)^3\left(\frac{\mathscr{f}(\rho)-1}{(1-\rho)^3}\right)' +2 \left(\mathscr{f}(\rho)-1\right)\partial_{\rho}  \right]\,.
        \end{align}
            
        With regard to the near-horizon behavior, we have:
        \begin{equation}
            \phi(\mathscr{t},\mathbf{k},\rho)= e^{-i\omega\mathscr{t}}\left\{ c_1 \left(1+...\right) + c_2 \exp\left[i\frac{z_h\omega}{2} \log(\rho)\right] \left(1+...\right) \right\}\,,
        \end{equation}
        with the outgoing solution corresponding to fixing $c_2=0$.\footnote{Once again, this follows from the Poincaré result as under the coordinate transformation \eqref{eq:Regular Coordinates} we have:
        \begin{equation*}
            e^{-i\omega\mathscr{t}}=\exp\left[-i\omega\left(t+\frac{l^2}{4r_h}\log\left(r-r_h\right)\right)\right]+ \mathcal{O}\left((r-r_h)^0\right)\,.
        \end{equation*}} Then, imposing outgoing boundary conditions corresponds to demanding regularity on the horizon:
        \begin{equation}
            \phi(\mathscr{t},\mathbf{k},\rho\rightarrow 0)\approx c_1e^{-i\omega\mathscr{t}}\,.
        \end{equation}

    Therefore, it is clear that regular coordinates constitute a sort of middle ground between IEF and Poincaré coordinates.
    They exploit the regular behavior on the horizon of the IEF coordinates while maintaining the spacelike character of the constant-time hypersurfaces found in Poincaré coordinates. In fact, with our particular choice of regular coordinates, the $\mathscr{t}$-hypersurfaces can be thought of as interpolating between $u$- and $t$-hypersurfaces (see figure \ref{fig:Penrose Diagram SAdS5}).

    \begin{figure}[h!]
        \centering
        \includegraphics[height=0.6\linewidth]{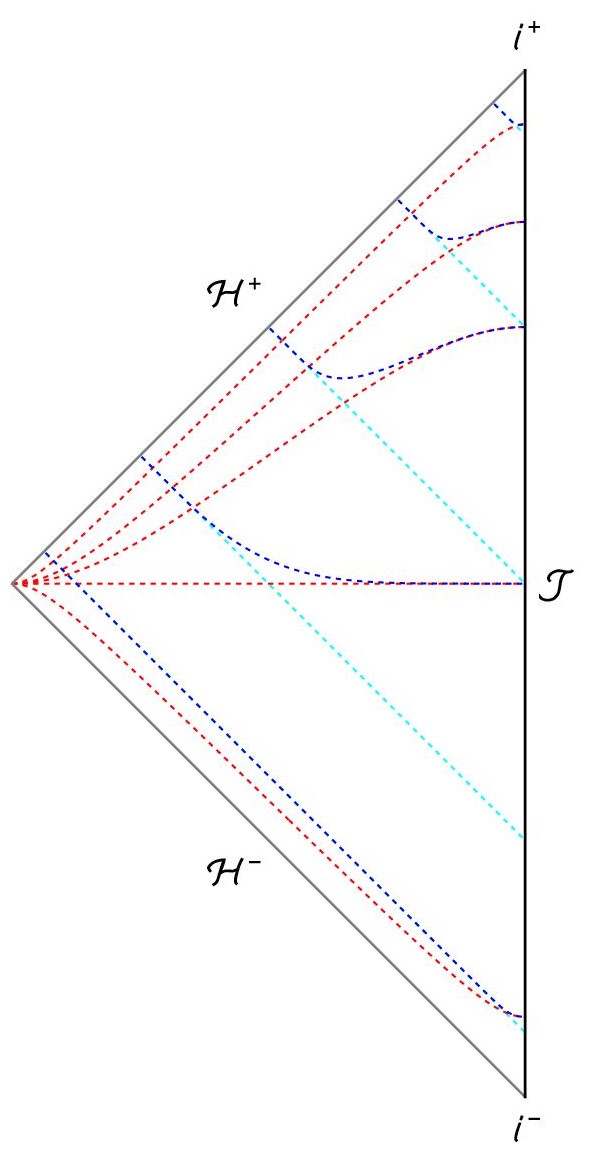}
        \caption{Penrose Diagram of SAdS$_{4+1}$. The AdS boundary is denoted by $\mathcal{J}$, $\mathcal{H}^{+}$ ($\mathcal{H}^{-}$) represents the future (past) horizon and $i^+$ ($i^-$) denotes the future (past) time-like infinity. The dashed blue lines correspond to regular $\mathscr{t}$-hypersurfaces and the red and cyan dashed lines represent Poincaré $t$ and IEF $u$-hypersurfaces, respectively.}
        \label{fig:Penrose Diagram SAdS5}
    \end{figure}
        
    \subsection{Choice of Norm}

    As discussed in section \ref{section:Pseudospectrum and Stability}, pseudospectra are norm dependent. Moreover, we concluded that the spectrum of an operator may be unstable in one norm but not in a different one (as illustrated in figure \ref{fig:PseudospectrumExample}), which shows the nontrivial role of the norm. Ideally, we want to find a physically motivated norm such that our physical notion of smallness matches the mathematical one. 

    Implicitly, we have been assuming that the QNMs are fluctuations small enough not to affect the background, or equivalently, that their backreaction is negligible.\footnote{This is the underlying assumption behind studying their linearized equations of motion on a fixed background.} Consequently, the physically motivated notion of size for the QNMs is their contribution to the energy-momentum tensor, which is directly related to their backreaction. Thus, the adequate norm is the energy norm $\norm{\cdot}_E$, introduced in \cite{Jaramillo:2020tuu,Gasperin:2021kfv} in the case of quasinormal modes and generally advocated for in \cite{Trefethen:2005}\footnote{For an application of pseudosepctra with energy norm in hydrodynamics see \cite{reddy1993pseudospectra}}. It defines the norm of a QNM $\Phi$ as its energy in a constant-time hypersurface:
    \begin{equation}\label{eq:Energy norm def}
        \norm{\Phi}_E^2=-\int_{\mathscr{t}=\text{const}} \star J\left[\Phi\right]\,,
    \end{equation}
    where $\star J\left[\Phi\right]$ is the Hodge dual of the current 1-form
    \begin{equation}\label{eq:current 1-form}
        J\left[\Phi\right]=\mathfrak{t}^M T_{MN}\left[\Phi\right] dx^N\,,
    \end{equation}
    with $T_{MN}\left[\Phi\right]$ the QNM's leading order contribution to the energy-momentum tensor. 
    
    Therefore, 
    the energy norm is generically given by:\footnote{Note that the leading order contribution of the fluctuations to the energy momentum tensor is quadratic. This occurs because the linear contribution vanishes as it is proportional to the equations of motion of the background.}
    \begin{align}\label{eq:Energy norm general} 
        \norm{\Phi}_E^2=\int_{\mathscr {t}=\text{const}} d^\D x\,\, \Phi^*(\mathscr{t},\boldsymbol{\mathscr{x}},\mathscr{r}) &\left[\overleftarrow{\partial}_M  H_1(\mathscr{r}) \overrightarrow{\partial}^M+H_M(\mathscr{r}) \overrightarrow{\partial}^M +\right.\nonumber\\ 
        &\left. +\overleftarrow{\partial}^M H_M^*(\mathscr{r})+ H_2(\mathscr{r})\right] \Phi(\mathscr{t},\boldsymbol{\mathscr{x}},\mathscr{r}) \,,
    \end{align}
    where $H_1(\mathscr{r}),H_2(\mathscr{r}),H_M(\mathscr{r})$ are functions respecting the black brane symmetries. Expression \eqref{eq:Energy norm general} can be further simplified decomposing into Fourier modes and introducing the auxiliary field  $\Psi(\mathscr{t},\mathbf{k},\mathscr{r})=\partial_\mathscr{t}\Phi(\mathscr{t},\mathbf{k},\mathscr{r})$, obtaining:
    \begin{equation}\label{eq:Energy norm general 2}
        \norm{\Phi}_E^2=\int_{\mathscr {t}=\text{const}} \frac{d^{\D-1}{k}}{(2\pi)^{\D-1}}\,d\mathscr{r}\,\, 
        \begin{pmatrix}
            \Phi^*(\mathscr{t},\mathbf{k},\mathscr{r}) &\Psi^*(\mathscr{t},\mathbf{k},\mathscr{r})
        \end{pmatrix}
        \mathcal{G}\left[\overleftarrow{\partial}_\mathscr{r},\overrightarrow{\partial}_\mathscr{r};\mathbf{k},\mathscr{r}\right]
        \begin{pmatrix}
            \Phi(\mathscr{t},\mathbf{k},\mathscr{r})\\[\medskipamount]
            \Psi(\mathscr{t},\mathbf{k},\mathscr{r})
        \end{pmatrix}\,,
    \end{equation}
    where we have defined 
    \begin{equation}\label{eq:G operator}
        \mathcal{G}\left[\overleftarrow{\partial}_\mathscr{r},\overrightarrow{\partial}_\mathscr{r};\mathbf{k},\mathscr{r}\right]=\begin{pmatrix}
        \mathcal{G}_{11}\left[\overleftarrow{\partial}_\mathscr{r},\overrightarrow{\partial}_\mathscr{r};\mathbf{k},\mathscr{r}\right]&\mathcal{G}_{12}\left[\overleftarrow{\partial}_\mathscr{r};\mathbf{k},\mathscr{r}\right]\\[\medskipamount]
            \mathcal{G}_{21}\left[\overrightarrow{\partial}_\mathscr{r};\mathbf{k},\mathscr{r}\right] & \mathcal{G}_{11}[\mathscr{r}]
        \end{pmatrix}\,,
    \end{equation}
    with $\mathcal{G}_{ab}$ differential operators on $\mathscr{r}$.
    
    As discussed in the previous section, we present the eigenvalue problem for the Fourier modes, \textit{i.e.}, we work out the spectral stability of the operator acting on the subspace of QNMs with well-defined momentum $\mathbf{k}$. Consequently, as we do not consider mixing between subspaces, we can drop the integral over $\mathbf{k}$ and introduce a constant prefactor $C$ instead. Furthermore, as we are mainly interested in the operator norm (which is independent of constant prefactors by virtue of definition \ref{def:Operator norm Hilbert space}), we can set $C=1$ without losing any generality. With all these considerations, the relevant inner product can be written as:
    \begin{equation}
        \expval{\Phi_1,\Phi_2}_E=\int_{\mathscr {t}=\text{const}}d\mathscr{r}\,\, 
        \begin{pmatrix}
            \Phi_1^*(\mathscr{t},\mathbf{k},\mathscr{r}) &\Psi_1^*(\mathscr{t},\mathbf{k},\mathscr{r})
        \end{pmatrix}
        \mathcal{G}\left[\overleftarrow{\partial}_\mathscr{r},\overrightarrow{\partial}_\mathscr{r};\mathbf{k},\mathscr{r}\right]
        \begin{pmatrix}
            \Phi_2(\mathscr{t},\mathbf{k},\mathscr{r})\\[\medskipamount]
            \Psi_2(\mathscr{t},\mathbf{k},\mathscr{r})
        \end{pmatrix}\,.
    \end{equation}

    Lastly, note that one needs to explicitly check that the energy norm is positive definite; otherwise, it would not be a well-defined norm. Whether this condition is satisfied depends on the particular choice of Hilbert space. Consequently, one must select a function space that enforces the desired asymptotic behavior for the QNMs while also ensuring that all functions residing in the aforementioned space have positive energy. In section \ref{section:Holographic Model} we discuss this topic in detail for our particular model.

    \subsubsection{The Nature of the Energy Norm}
 
    In this subsection we address the coordinate dependence of the energy norm. Definition \eqref{eq:Energy norm def} depends on the selection of constant-time hypersurfaces, and consequently, one could wonder if regular $\mathscr{t}$-hypersurfaces actually represent a well-motivated choice. 

    A physically well-defined energy should decay over time, as the fluctuations fall into the black brane. In order to showcase that the energy norm \eqref{eq:Energy norm def} satisfies this property, we consider the simple model introduced in subsection \ref{subsubsection:Example: Scalar in SAdS5}. In particular, to stress the importance of choosing regular $\mathscr{t}$-hypersurfaces, we compute the time derivative of energy for the real scalar field in Poincaré, IEF and compactified regular coordinates. Completely analogous considerations hold in the case of the transverse gauge field.
    
    \hfill\break
    \noindent\textbf{Poincaré Coordinates}: the energy of a QNM in Poincaré coordinates is given by
    \begin{equation}
        E[\phi]=\frac{1}{2l^5}\int_{t=\text{const}} d^3\mathbf{x}\, dr \,\,r^3\,\left[m^2l^2 \phi^2 +   \frac{l^4 }{r^2} \left(\partial_\mathbf{x}\phi\right)^2 + r^2f(r)\left(\partial_r\phi\right)^2+
         \frac{l^4}{r^2f(r)} \left(\partial_t \phi \right)^2\right]\,,
    \end{equation}
    where, for notational convenience, we omit the dependence of the field $\phi$ on the coordinates. Using the conservation equation of the current \eqref{eq:current 1-form}:
    \begin{equation}
        d \star J=0\,,
    \end{equation}
    we can conclude that the time derivative of the energy vanishes
    \begin{equation}
        \frac{d}{dt}E[\phi] = \frac{r_h^5 }{l^5} f(r_h)\int_{r=r_h} d^3\mathbf{x}\,\, \partial_t\phi \partial_r \phi = 0\,
    \end{equation}
    as it is proportional to the blackening factor evaluated at the horizon.\footnote{In general, we would also have a term arising from the AdS boundary. However, in all the coordinates studied in this section, this term vanishes due to the normalizable boundary conditions. Physically this behavior was to be expected, as the AdS boundary behaves as a wall for the QNMs.} A quasinormal mode however is an exponentially decaying solution and one would expect the energy to decay as well. The resolution to this puzzle is that in Poincar{\'e} coordinates a QNM is actually singular on the horizon and thus the energy is not well defined. We also note that the term proportional to the time derivatives in the energy is singular on the horizon. For these reasons the Poincar{\'e} coordinates are not suited to define a norm on which the pseudospectrum analysis could be based.
    \hfill\break
    
    \noindent\textbf{IEF Coordinates}: in IEF coordinates, the energy of a QNM is
    \begin{equation}
        E[\phi]=\frac{1}{2l^5}\int_{u=\text{const}} d^3\tilde{\bold{x}}\, d\tilde{r}\,\,\tilde{r}^3\,\left[m^2l^2 \phi^2 + \frac{l^4}{\tilde{r}^2} \left(\partial_{\tilde{\bold{x}}}\phi\right)^2 + \tilde{r}^2f(\tilde{r})\left(\partial_{\tilde{r}} \phi\right)^2\right]\,,
    \end{equation}
    and its time derivative is:
    \begin{equation}
        \frac{d}{du}E[\phi]  =- \frac{r_h^5}{l^5}\int_{\tilde{r}=r_h} d^3\tilde{\mathbf{x}}\,\, (\partial_u\phi)^2 < 0\,.
    \end{equation}

    Thus, in IEF the decay rate of the energy is a function of the time derivative of the QNMs. Indeed, this was the picture we expected, proving that the IEF coordinates are well suited to define a physically motivated energy. However, IEF coordinates do not generate a standard eigenvalue problem; and thus we do not use them. 
    \hfill\break
    
    \noindent\textbf{Compactified Regular Coordinates}: in compactified regular coordinates, the energy of a QNM associated with a real scalar with action \eqref{eq:Action real scalar} is
        \begin{align} E[\phi]=\frac{l^3}{2z_h^3}\int_{\mathscr{t}=\text{const}}d^3\boldsymbol{\mathscr{x}}\,\frac{d\rho}{(1-\rho)^3}\,\, \Biggl[&\left(\frac{m^2l^2}{z_h(1-\rho)^2}\phi^2+ z_h\left(\partial_{\boldsymbol{\mathscr{x}}}\phi \right)^2 \right) + \frac{\mathscr{f}(\rho)}{z_h}\left(\partial_\rho \phi\right)^2\nonumber\\
    &+z_h\left(2-\mathscr{f}(\rho)\right)\left(\partial_\mathscr{t}
        \phi\right)^2 \Biggr]\,,
    \end{align}
     and its time derivative is
     \begin{equation}
        \frac{d}{d\mathscr{t}}E[\phi] =
        -\frac{l^3}{z_h^3}\int_{\rho=0} d^3\boldsymbol{\mathscr{x}}\,\, 
        (\partial_\mathscr{t}\phi)^2 < 0\,,
    \end{equation}
    which, as IEF coordinates, does not vanish.
    Since the compactified regular coordinates also allow to compute the QNMs from a standard eigenvalue problem it is the best suited coordinate system for the pseudospectrum analysis.

    \subsubsection{Time Independence of the Operator Norm}

    Since we have found that the energy norm in regular coordinates is time dependent, one could ask what $\mathscr{t}$-hypersurface we should choose to define it. This is however not a relevant question for the 
    pseudospectrum analysis. 
    Pseudospectra depend on the norm \eqref{eq:Energy norm general} only trough the operator norm \eqref{eq:Operator norm Hilbert space}, which, by virtue of its definition, is time independent. 

    For functions belonging to the relevant function space defined by the boundary conditions, the energy norm indeed defines a positive definite inner product and thus promotes the function space to a Hilbert space. Functions change over time, moving around in the Hilbert space. However, the Hilbert space itself does not change, as it contains all possible function configurations. Consequently, noting that the operator norm is defined as the maximum value of the norm of the operator acting on a normalized member of the aforementioned Hilbert space, we can trivially conclude that the operator norm is time independent. Thus, we need not worry about choosing one particular $\mathscr{t}$-hypersurface or another.

    \subsection{Spectral and Background Instability}\label{ssec:Spectral and Spacetime Instability}
    Before moving on, this is a good point to discuss in detail the holographic perspective on the (in)stability probed through pseudospectra. From the gravitational side, pseudospectra probe how much the QNFs change if the background is slightly modified in some way (\textit{e.g.}, by a change of the geometry and/or the background value of the fields).
    Consequently, in the dual quantum field theory, pseudospectra help us estimate how much the poles of the retarded Green's functions might change if the theory is slightly perturbed. In both cases, these perturbations should be understood as perturbations to the Lagrangian, leading to a change in the spectrum of excitations. To stress this point, in this section we will explicitly refer to the (in)stability associated with these perturbations as spectral (in)stability.

    Spectral instability is in nature very different from the usual background instability which occurs when a linear fluctuation grows indefinitely. The latter signifies an instability of the background or equivalently the ground state of the dual field theory. In contrast, spectral instability denotes instability of the spectrum of the model under slight deformations. Heuristically, spectral instability implies that a small perturbation to the Lagrangian can have a large effect on the energy spectrum.

    Hence, spectral instability implies that the frequencies and damping rates computed through an idealized model will differ from those observed in nature. In the context of holography this suggests that the transport properties of a quantum many-body system are likely to differ significantly from those predicted by the holographic model.
    
    It is worth mentioning that in some cases spectral instability may be connected to background instability. A model with no QNFs in the upper half of the complex plane, \textit{i.e.}, with a stable background, could exhibit enough spectral instability so that a small perturbation could drive a QNF into the upper half of the complex plane. This would imply that when modelling a quantum many-body system through the aforementioned model, one may not even be able to accurately predict the equilibrium state. The small error made in the idealization of the system could lead to very different conclusions.\footnote{An example of this is the pseudospectrum of the Orr-Sommerfeld operator for Couette flow in hydrodynamics \cite{reddy1993pseudospectra,trefethen1993hydrodynamic}.}
    In holography this might become relevant in models with second order phase transitions such as the holographic superconductor \cite{Amado:2009ts,Gubser:2008px,Hartnoll:2008vx}. There, spectral instability might be of particular importance in the presence of superflow and for the validity of the Landau criterion \cite{Amado:2013aea,Gouteraux:2022kpo}. It is important to note, however, that we have not found evidence of this kind of instability for the scalar and transverse gauge fields studied in this paper.

    \section{Holographic Model}\label{section:Holographic Model}
    
    After presenting the general formalism, we now proceed to construct the specific fluctuations whose stability we are going to study. We seek to probe the spectral stability of a real scalar $\phi$ and of the transverse components of a $U(1)$ gauge field $A_M$ in a SAdS$_{4+1}$ background with compactified regular coordinates $\{\mathscr{t},\boldsymbol{\mathscr{x}},\rho\}$ and metric \eqref{eq:General Regular metric compactified}. Their respective actions are given by:
        \begin{equation}\label{eq:Action real scalar (bis)}
            \mathcal{S}[\phi]=-\frac{1}{2}\int d^{4+1}x\,\, \sqrt{-g} \left(\partial_M\phi\partial^M \phi+m^2 \phi^2 
            \right)
            \,,
        \end{equation}
        \begin{equation}\label{eq:Action gauge field}
            \mathcal{S}[A]=-\frac{1}{4}\int d^{4+1}x\,\,\sqrt{-g}\left(  F_{MN}F^{MN} %
            \right)\,,
        \end{equation}
    where $F_{MN}=\partial_M A_N-\partial_N A_M$ is the field strength tensor and $m$ is the scalar  mass, which we assume to be above the Breitenlohner-Freedman (BF) bound ($m^2l^2> -4$) \cite{Breitenlohner:1982bm}. This latter requirement ensures that the QNM is normalizable and that the dual QFT is unitary \cite{Aharony:1999ti}.

\subsection{Real Scalar}
\label{ssec:scalarmodel}
    The eigenvalue problem for a real scalar field with action \eqref{eq:Action real scalar (bis)} has already been presented in equations \eqref{eq:Eigenvalue problem real scalar 1}-\eqref{eq:Eigenvalue problem real scalar 3} of example \ref{subsubsection:Example: Scalar in SAdS5}; concluding that the relevant operator $L$, whose spectral stability we seek to study is:
    \begin{equation}
        L= i\begin{pmatrix}
        0&1 \\
        L_1\left[\partial_\mathscr{\rho}^2,\partial_\mathscr{\rho};\boldsymbol{\mathfrak{q}},\rho\right]&L_2\left[\partial_\rho;\boldsymbol{\mathfrak{q}},\rho\right]   
        \end{pmatrix}\,.
    \end{equation}

    With respect to the boundary conditions, we note that, as indicated in section \ref{subsection:Construction of the Eigenvalue Problem}, the outgoing behavior is automatically satisfied as long as we demand regularity on the horizon. On the other hand, normalizable boundary conditions have to be imposed manually, discarding the leading mode on the AdS boundary. In order to identify the aforementioned mode, we solve the equation of motion \eqref{eq:EOM real scalar} near the AdS boundary using a Frobenius series, obtaining the following solution:
    \begin{equation}\label{eq:Near boundary asymptotics scalar}
        \phi(\mathscr{t},\mathbf{k},\rho)=e^{-i\omega \mathscr{t}} \left\{ c_- (\rho-1)^{\Delta_-}(1+...) + c_+ (\rho-1)^{\Delta_+}(1+...)\right\}\,,
    \end{equation}
    with $\Delta_\pm=2\pm \sqrt{4+m^2l^2}$. Imposing normalizable boundary conditions corresponds to fixing $c_-=0$.
    
   Regarding the energy norm, we have the following expression for the inner product:\footnote{We have eliminated all constant prefactors from the energy norm \eqref{eq:scalar energy norm 1} as they play no role in the operator norm.}
    \begin{equation}\label{eq:scalar energy norm 1} 
        \expval{\phi_1,\phi_2}_E=\int_{\mathscr {t}=\text{const}}d\rho\,\, 
        \begin{pmatrix}
            \overline{\phi_1}(\mathscr{t},\mathbf{k},\rho) &\overline{\psi_1}(\mathscr{t},\mathbf{k},\rho)
        \end{pmatrix}
        \mathcal{G}\left[\overleftarrow{\partial}_\rho,\overrightarrow{\partial}_\rho;\boldsymbol{\mathfrak{q}},\rho\right]
        \begin{pmatrix}
            \phi_2(\mathscr{t},\mathbf{k},\rho)\\[\medskipamount]
            \psi_2(\mathscr{t},\mathbf{k},\rho)
        \end{pmatrix}\,,
    \end{equation}
    \begin{equation}\label{eq:scalar energy norm 2} 
        \mathcal{G}\left[\overleftarrow{\partial}_\rho,\overrightarrow{\partial}_\rho;\boldsymbol{\mathfrak{q}},\rho\right]=
        \begin{pmatrix}
            \frac{m^2l^2}{(1-\rho)^5}+
            \frac{\boldsymbol{\mathfrak{q}}^2}{(1-\rho)^3}+
        \overleftarrow{\partial}_\rho
        \frac{\mathscr{f}(\rho)}{(1-\rho)^3}
        \overrightarrow{\partial}_\rho&0\\[\medskipamount]
            0 & \frac{2-\mathscr{f}(\rho)}{(1-\rho)^3}
        \end{pmatrix}\,,
    \end{equation}
    derived from the subsequent energy-momentum tensor: 
    \begin{equation}
        T_{MN}=\nabla_M\phi \nabla_N \phi-\frac{1}{2}g_{MN}\left[\nabla_S\phi \nabla^S\phi+ m^2\phi^2\right] \,.
    \end{equation}

    In order to ensure that the energy norm \eqref{eq:scalar energy norm 1} is positive definite, while also enforcing the adequate asymptotic behavior for the QNMs, we chose to work in the space of regular functions with the following behavior on the AdS boundary:
    \begin{equation}\label{eq:function space scalar}
        \phi(\mathscr{t},\mathbf{k},\rho\rightarrow 1) = A(\mathscr{t},\mathbf{k})\, (\rho-1)^{n}\,, \qquad n>2\,.
    \end{equation}
    With this choice, for masses above the BF bound, the leading mode in \eqref{eq:Near boundary asymptotics scalar} is automatically discarded and the energy norm is positive definite \cite{Breitenlohner:1982bm}.
    It is worth pointing out that the chosen function space, besides being mathematically convenient, is also the physically relevant one as it contains all possible asymptotic behaviors for scalar QNMs with masses above the BF bound in the standard quantization.
    
     In view of the above, in order to select functions belonging to the chosen Hilbert space, it is convenient to work with the rescaled fields
    \begin{equation}\label{eq:rescaled scalar}
        \begin{pmatrix}
            \tilde{\phi}\\
            \tilde{\psi}
        \end{pmatrix}=(1-\rho)^{-2}\begin{pmatrix}
            \phi\\
            \psi
        \end{pmatrix}\,,
    \end{equation}
    for which imposing the asymptotic behavior \eqref{eq:function space scalar} amounts to fixing Dirichlet boundary conditions on the AdS boundary. 
    
    Lastly, note that, given the expression of the adjoint of $L$ with respect to the energy norm $L^\dagger$:\footnote{The explicit computation of $L^\dagger$ can be found in appendix \ref{subannex:Computation of Ldagger scalar}.}
    \begin{equation}\label{eq:Ldag Scalar}
        L^\dagger=L+\begin{pmatrix}
            0&0\\[\medskipamount]
            0&-2i\,\delta(\rho)\,\frac{\mathscr{f}(\rho)-1}{\mathscr{f}(\rho)-2}
        \end{pmatrix}\,,
    \end{equation}
    we can conclude that $L$ is non-normal and that this non-normality is associated with the existence of a horizon. This matches our initial physical intuition, the system is non-conservative because the fluctuations eventually fall into the horizon and die out.
    
    \subsection{Transverse Gauge Field}
    \label{ssec:gaugefield}

    The linearized equations of motion for a gauge field with action \eqref{eq:Action gauge field} are:
    \begin{equation}\label{eq:eoms Gauge Field}
        \nabla^M F_{MN}=0\,.
    \end{equation}

    Decomposing into Fourier modes $A_\mu(\mathscr{t},\mathbf{k},\rho)$ and assuming the momentum $\mathbf{k}$ to be oriented in the $\mathscr{x}_3$ direction, the equations of motion decouple into two sectors: transverse $\{A_1,A_2\}$ and longitudinal $\{A_\mathscr{t},A_3,A_\mathscr{r}\}$. The members of the transverse channel transform as vectors under the unbroken $O(2)$ symmetry, while the components of the longitudinal sector are invariants \cite{Kovtun:2005ev}.
    
    In the present work, we only study the transverse sector, \textit{i.e.}, we consider QNMs $a(\mathscr{t},\mathbf{k},\rho)$ satisfying the following eigenvalue problem:
    \begin{align}\label{eq:Eigenvalue problem Transverse Guage field 1}
        \mathfrak{w}\begin{pmatrix}
        a(\mathscr{t},\mathbf{k},\rho)\\
        \alpha(\mathscr{t},\mathbf{k},\rho)
        \end{pmatrix} &= i
        \begin{pmatrix}
        0&1 \\
        L_1\left[\partial_\mathscr{\rho}^2,\partial_\mathscr{\rho};\boldsymbol{\mathfrak{q}},\rho\right]&L_2\left[\partial_\rho;\boldsymbol{\mathfrak{q}},\rho\right]   
        \end{pmatrix}\begin{pmatrix}
        a(\mathscr{t},\mathbf{k},\rho)\\
        \alpha(\mathscr{t},\mathbf{k},\rho)
        \end{pmatrix}\,,\\[\medskipamount]\label{eq:Eigenvalue problem Transverse Guage field 2}
        L_1\left[\partial_\rho^2,\partial_\rho;\boldsymbol{\mathfrak{q}},\rho\right]&=\left[\mathscr{f}(\rho)-2\right]^{-1}\left[\boldsymbol{\mathfrak{q}}^2-(1-\rho)\left(\frac{\mathscr{f}(\rho) }{1-\rho}\right)'\partial_{\rho} -\mathscr{f}(\rho)\partial_{\rho}^2 \right]\,,\\[\medskipamount]\label{eq:Eigenvalue problem Transverse Guage field 3}
        L_2\left[\partial_\rho;\boldsymbol{\mathfrak{q}},\rho\right]&=\left[\mathscr{f}(\rho)-2\right]^{-1}\left[(1-\rho)\left(\frac{\mathscr{f}(\rho)-1}{(1-\rho)}\right)' +2 \left(\mathscr{f}(\rho)-1\right)\partial_{\rho}  \right]\,,
    \end{align}
    where we used again the dimensionless variables $\mathfrak{w}=z_h\omega$, $\boldsymbol{\mathfrak{q}}=z_h\mathbf{k}$ and introduced the auxiliary field $\alpha(\mathscr{t},\mathbf{k},\rho)=z_h\partial_\mathscr{t}a(\mathscr{t},\mathbf{k},\rho)=-i\mathfrak{w} a(\mathscr{t},\mathbf{k},\rho)$.
    
    It is important to stress that the transverse sector is gauge invariant, as we can eliminate the $\{\mathscr{x}_1,\mathscr{x}_2\}$ dependence of all functions given the symmetries of the problem. Explicitly, we have that the gauge transformation of the transverse gauge field is the following:
    \begin{equation}
        a\rightarrow a+d \chi=a
    \end{equation}
    where, as indicated, we disregard any possible dependence of $\mathscr{\chi}$ on $\{\mathscr{x}_1,\mathscr{x}_2\}$. This ensures that the pseudospectrum and the condition numbers only probe stability under gauge-invariant perturbations.
    In fact, the gauge invariant electric field is simply $E_{1,2} = i \mathfrak{w} A_{1,2}$.

    As with the scalar, outgoing boundary conditions are automatically satisfied demanding regularity on the horizon, while normalizable boundary conditions have to be imposed explicitly. Solving the equations of motion \eqref{eq:eoms Gauge Field} with a Frobenius series, we obtain the following near-AdS boundary behavior:
    \begin{equation}
        a(\mathscr{t},\mathbf{k},\rho)=e^{-i\omega \mathscr{t}}\left\{c_- (1+...)+c_+(1-\rho)^{2}(1+...)\right\}\,,
    \end{equation}
    and thus, imposing normalizable boundary conditions corresponds to fixing $c_-=0$.
    
    With regard to the energy norm, we recall that the energy-momentum tensor for a gauge field with action \eqref{eq:Action gauge field} is given by:
    \begin{equation}
        T_{MN}=F_{MS} \tensor{F}{_N^S}-\frac{1}{4}g_{MN}F_{SP} \tensor{F}{^S^P}\,,
    \end{equation}
    which yields the following inner product:
    \begin{equation}\label{eq:Transverse GF energy norm 1} 
        \expval{a_1,a_2}_E=\int_{\mathscr {t}=\text{const}}d\rho\,\, 
        \begin{pmatrix}
            \overline{a_1}(\mathscr{t},\mathbf{k},\rho) &\overline{\alpha_1}(\mathscr{t},\mathbf{k},\rho)
        \end{pmatrix}
        \mathcal{G}\left[\overleftarrow{\partial}_\rho,\overrightarrow{\partial}_\rho;\boldsymbol{\mathfrak{q}},\rho\right]
        \begin{pmatrix}
            a_2(\mathscr{t},\mathbf{k},\rho)\\[\medskipamount]
            \alpha_2(\mathscr{t},\mathbf{k},\rho)
        \end{pmatrix}\,,
    \end{equation}
    \begin{equation}\label{eq:Transverse GF energy norm 2} 
        \mathcal{G}\left[\overleftarrow{\partial}_\rho,\overrightarrow{\partial}_\rho;\boldsymbol{\mathfrak{q}},\rho\right]=\begin{pmatrix}
            \frac{\boldsymbol{\mathfrak{q}}^2}{1-\rho}+\overleftarrow{\partial}_\rho\frac{\mathscr{f}(\rho)}{1-\rho}\overrightarrow{\partial}_\rho&0\\[\medskipamount]
            0 & \frac{2-\mathscr{f}(\rho)}{1-\rho}
        \end{pmatrix}\,.
    \end{equation}

    In this case, as we have no mass term, the energy norm is always positive definite. Thus, we consider the function space comprised of functions satisfying outgoing boundary conditions on the horizon with the following behavior on the AdS boundary: 
    \begin{equation}\label{eq:function space GF}
        a(\mathscr{t},\mathbf{k},\rho\rightarrow 1) = A(\mathscr{t},\mathbf{k}) (\rho-1)^{n}\,, \qquad n>1\,.
        \end{equation}
    This ensures the desired asymptotic behavior for the QNMs while also guaranteeing that the norm is non-divergent. Similarly to the case of the scalar, we find it more convenient to work with the rescaled fields:
    \begin{equation}\label{eq:rescaled gauge field}
        \begin{pmatrix}
            \tilde{a}\\
            \tilde{\alpha}
        \end{pmatrix}=(1-\rho)^{-1}\begin{pmatrix}
            a\\
            \alpha
        \end{pmatrix}\,,
    \end{equation}
    for which imposing the asymptotic behavior \eqref{eq:function space GF} amounts to fixing Dirichlet boundary conditions on the AdS boundary. 
    
    To conclude, we note that, identically to what we found for the scalar field, $L$ is non-normal with respect to the energy norm. Its adjoint with respect to the aforementioned norm is given by
    \begin{equation}\label{eq:Ldag GF}
        L^\dagger=L+\begin{pmatrix}
            0&0\\[\medskipamount]
            0&-2i\delta(\rho)\frac{\mathscr{f}(\rho)-1}{\mathscr{f}(\rho)-2}
        \end{pmatrix}\,,
    \end{equation}
    which once again, relates the non-normality to the existence of a horizon.\footnote{The explicit computation of $L^\dagger$ can be found in appendix \ref{subannex:Computation of Ldagger GF}.}
        
    \section{Numerical method}\label{section:Numerical method}

    We approach the stability analysis numerically; discretizing the radial coordinate $\rho$ in a Chebyshev grid with points:
    \begin{equation}\label{eq:Chebyshev grid}
        \rho_j=\frac{1}{2}\left[1-\cos\left(\frac{j\pi}{N}\right)\right]\,,\qquad j=0,1,...,N\,,
    \end{equation}
    which in turn allows us to discretize the differential operators using the corresponding Chebyshev differentiation matrices \cite{Trefethen:2000}. This choice is equivalent to approximating the QNMs by a series of Chebyshev polynomials $T_n$ with $n=\{0,1,...,N\}$. As a rule of thumb, this allows us to approximate at most $N/2$ QNFs \cite{Trefethen:2000,boyd}.

    Regarding the energy norm we proceed as indicated in \cite{Jaramillo:2020tuu} and construct a $G_E$ matrix defined as the discretized version of the original energy norm. Labelling by $u$ the rescaled scalar doublet $\left(\tilde{\phi},\,\tilde{\psi}\right)^T$ and the rescaled gauge field doublet $\left(\tilde{a},\,\tilde{\alpha}
    \right)^T$,  we construct $G_E$ such that:
    \begin{eqnarray}
        \lim_{N\rightarrow\infty}u_N^* G_E u_N=\expval{u,u}_E\,,
    \end{eqnarray}
    where $u_N$ is the vector arising form the discretization of $u$.\footnote{There is an important subtlety in the construction of $G_E$ that should be addressed. One needs to construct $G_E$ in a grid of at least twice the size of the original one and, at the end, interpolate back. This ensures consistency with the discretization process: when working in a grid with $N+1$ points the maximum resolution is given by polynomials of degree $N+1$. Thus, for the discretized norm to be exact for polynomials of such degree, one needs to construct it on a grid with at least $2(N+1)$ points. A detailed discussion on the construction of the $G_E$ matrix can be found in appendix~\ref{app:Extra Details on the Discretization.}.}

    To conclude the discretization process, we need to numerically impose the function spaces introduced in the previous section. Note that, in any numerical method, we can only get regular solutions. Consequently, the discretization immediately selects the space of regular functions, and thus we only need to ensure the adequate behavior on the AdS boundary. This corresponds to imposing Dirichlet boundary conditions for the rescaled scalar and gauge fields, which can be achieved by removing the rows and columns corresponding to the AdS boundary from all discretized operators, including the $G_E$ matrix.\footnote{This is equivalent to reducing the space in which the matrices act to that of vectors vanishing on the AdS boundary.}

    With the original system fully discretized, we can proceed to study the spectral stability. We use \textit{Wolfram Engine} to compute condition numbers and pseudospectra as indicated in theorem \ref{th:psuedospectrum in different norms}. It is particularly convenient to rewrite the problem in terms of the matrix $\ell^2$-norm as it reduces the computation of the pseudospectrum to obtaining the smallest eigenvalue of a Hermitian matrix; which we locate using Arnoldi iteration (see \textit{e.g.} ch. 28 of \cite{Trefethen:2005}). With this procedure, we achieve $\mathcal{O}\left(N^2\right)$ runtime for each point $z$ in the complex plane where we compute the norm of the resolvent.

    In order to gain greater insight into the nature of the (in)stability, we also explore the selective pseudospectra associated with local potential perturbations to the original equations of motion. Concretely, we consider perturbations to equations \eqref{eq:EOM real scalar} and \eqref{eq:eoms Gauge Field} of the form:
    \begin{equation}\label{eq:eoms perturbed}
       \left[-\nabla_M \nabla^M + m^2 + \frac{V(\rho)}{l^2}\right]\phi=0\,,\qquad \nabla^M F_{MN}-\frac{V(\rho)}{l^2}A_N=0\,,
    \end{equation}
    where, in order to preserve the asymptotic behavior on the AdS boundary, we choose potentials vanishing on the aforementioned boundary $V(1)=0$. Note that the potential term added to the gauge field dynamics does, in general, break gauge invariance. Nonetheless, as a transverse gauge field $A_M$ is gauge invariant, the chosen potential term preserves the gauge symmetry.

    The main difference between full and selective pseudospectra lies in the kind of stability each of them probes. The former explores the stability under generic bounded perturbations, while the latter only considers local perturbations. Physically, these local perturbations can be interpreted as effective interactions arising from small deviations from the perfect SAdS$_{4+1}$ background. Therefore, probing the stability under them is akin to determining the underlying model dependence of the system.\footnote{In general, we consider that corrections arising from deformations of the perfect SAdS$_{4+1}$ background (associated, for instance, with the existence of extra fields or non-minimal couplings in the action) appear as non-local interactions. Nonetheless, we assume that the leading order behavior can be modelled with a local potential (a very common picture in effective field theories). Under this interpretation, and recalling that we have no control over the nature of the perturbations entering the full pseudospectrum, we consider the selective pseudospectrum to be slightly more physically motivated.}
           
    The selective pseudospectrum is computed using definition \ref{def:Pseudo Definition 2} with randomly generated potential perturbations constructed as diagonal matrices with random entries and normalized to a given size. This analysis is complemented with the computation of the QNFs for the perturbed system with the following deterministic potentials:
 \begin{subequations}
 \begin{align}
        V_1(\rho)&=A_1(1-\rho)\cos(2\pi\rho)\label{eq:determinisiticV1}\,,\\
        V_2(\rho)&=A_2(1-\rho)\cos(90\pi\rho)\label{eq:determinisiticV2}\,, \\
        V_3(\rho)&=A_3(1-\rho)\left\{1-\tanh\left[20\rho\right]\right\}\label{eq:determinisiticV3}\,,\\
        V_4(\rho)&=A_4(1-\rho)\left\{1-\tanh\left[20(1-\rho)\right]\right\}\label{eq:determinisiticV4}\,,
\end{align}\label{eq:deterministicpots}%
\end{subequations}  
    which shed light on a few interesting regimes. With $V_1$ and $V_2$, we probe the effect of long and short $\rho$-wavelength (wavelength in the $\rho$ direction) perturbations; while with $V_3$ and $V_4$, we analyze the stability under localized perturbations near the horizon (IR of the dual QFT) and the boundary (UV of the dual QFT). The $\{A_i\}$ are normalization constants to fix the magnitude of the perturbation.
    We plot these potentials in figure~\ref{fig:Deterministic Potentials plot}.
    \begin{figure}[htb!]
        \centering
            \includegraphics[width=0.6\linewidth]{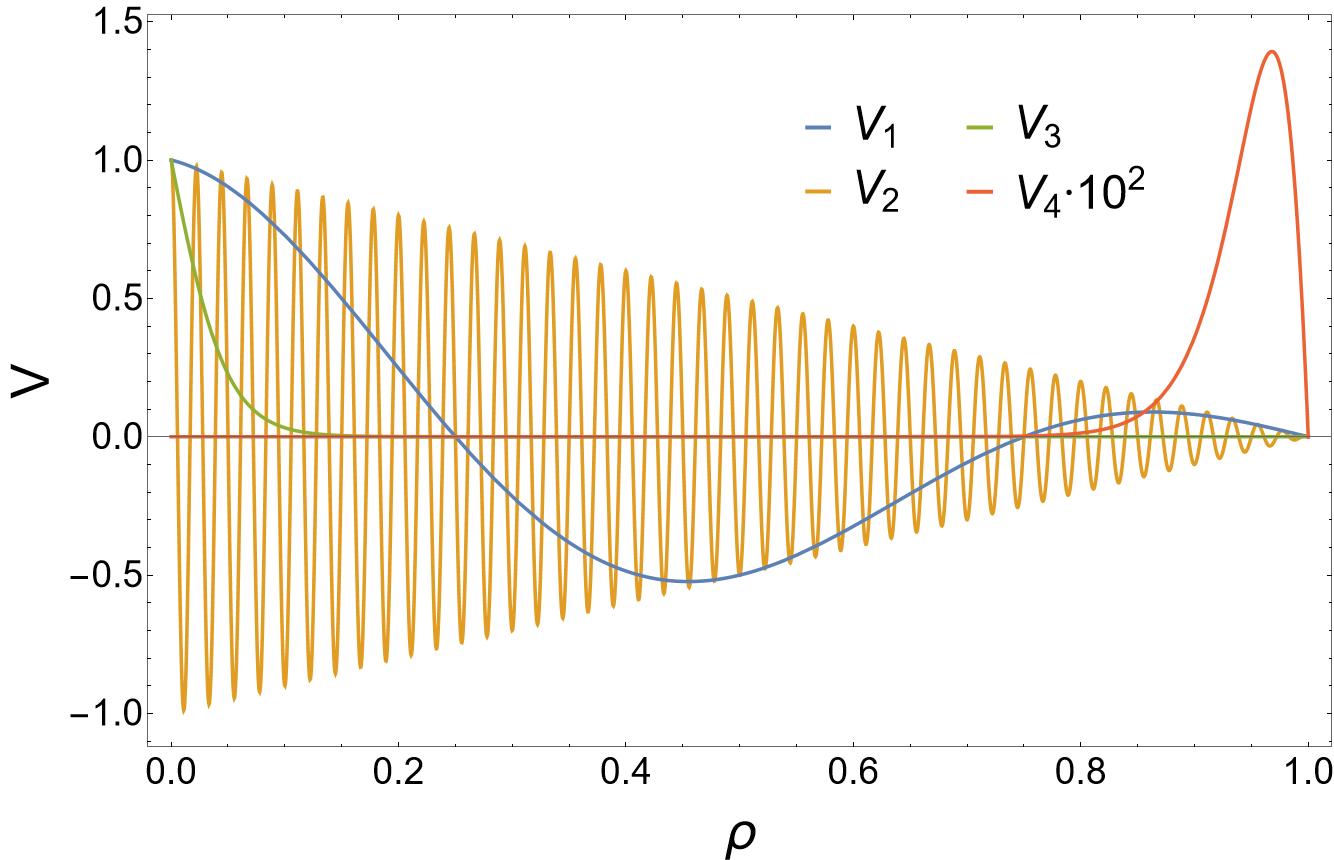}
          \caption{Deterministic potentials
          \eqref{eq:deterministicpots}
          with $A_i=1$.
          Recall that the horizon is at $\rho=0$ and the boundary at $\rho=1$.}
            \label{fig:Deterministic Potentials plot}
    \end{figure}

   The grid establishes a cutoff in the $\rho$-wavelength of the perturbations. Thus, as a byproduct of the discretization, we are only truly sensitive to the stability under perturbations above the cutoff.

    Finally, to stress the nontrivial nature of the norm, we also consider the pseudospectrum in the $L^2$-norm
    \begin{equation}
        \norm{u}_{L^2}=\int d\rho\,\, u^*(\mathscr{t},\mathbf{k},\rho)u(\mathscr{t},\mathbf{k},\rho)\,.
    \end{equation}
    
    Although  the physical relevance of this norm for the present problem is less clear, it might be informative to compare the general features of the pseudospectra in the energy norm to those in the $L^2$-norm. We direct the interested reader to Appendix~\ref{app:l2norm} where we present the pseudospectrum analysis in the $L^2$-norm.

    \section{Results}\label{section:Results}
    Here, we present the results of the analysis described in the previous sections. Our numerical simulations are performed in a grid of 120 points with a precision $5\times$MachinePrecision.
    
        \begin{figure}[h]
        \centering
        \begin{subfigure}[b]{0.46\linewidth}
            \centering
            \includegraphics[width=.935\linewidth]{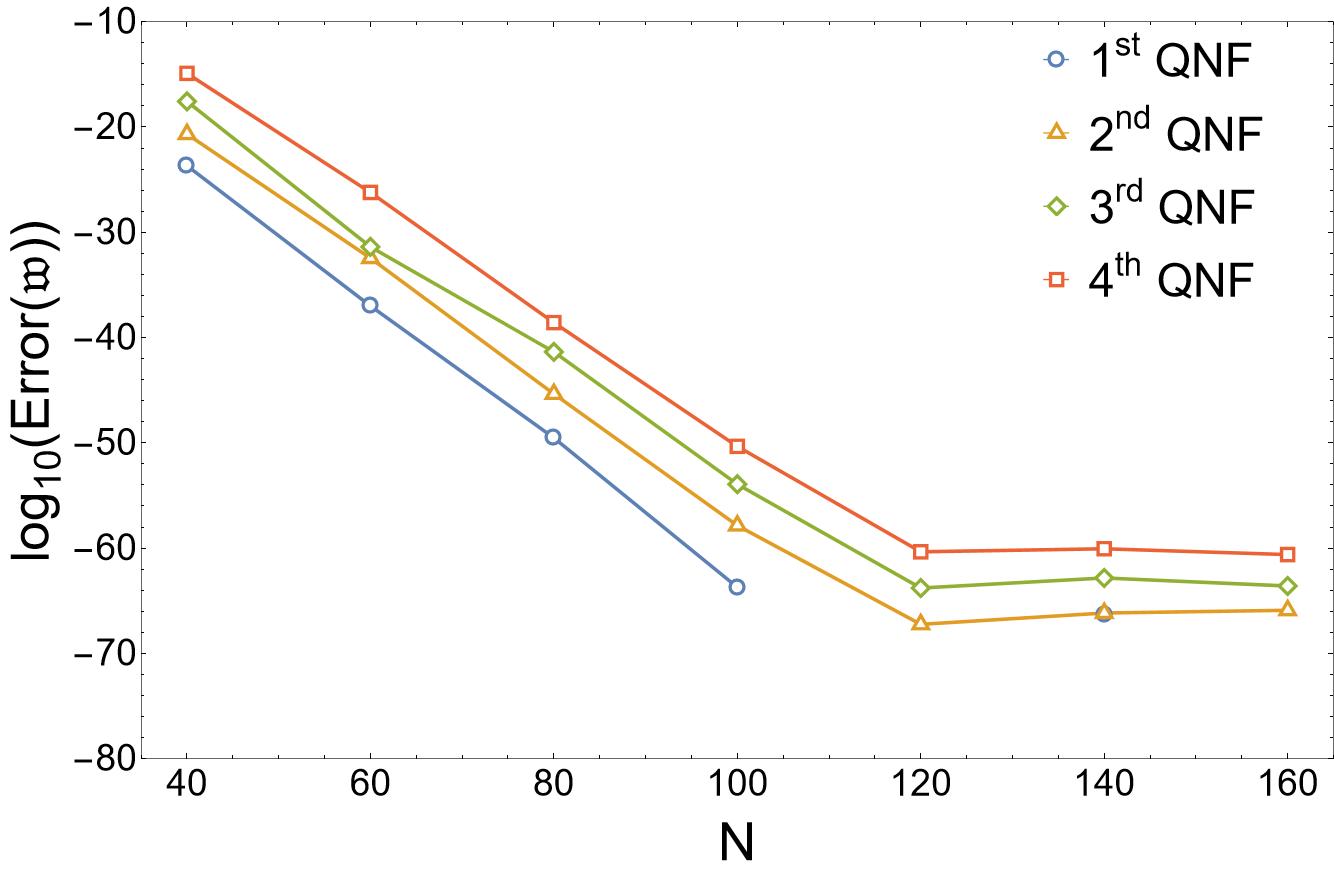}
            \captionsetup{justification=centering}
            \caption{Scalar with $m^2l^2=0$ and $\boldsymbol{\mathfrak{q}}^2=0$.}
            \label{fig:ErrorMasslessScalar5}
        \end{subfigure}\hfill
        \begin{subfigure}[b]{0.46\linewidth}
            \centering
            \includegraphics[width=.935\linewidth]{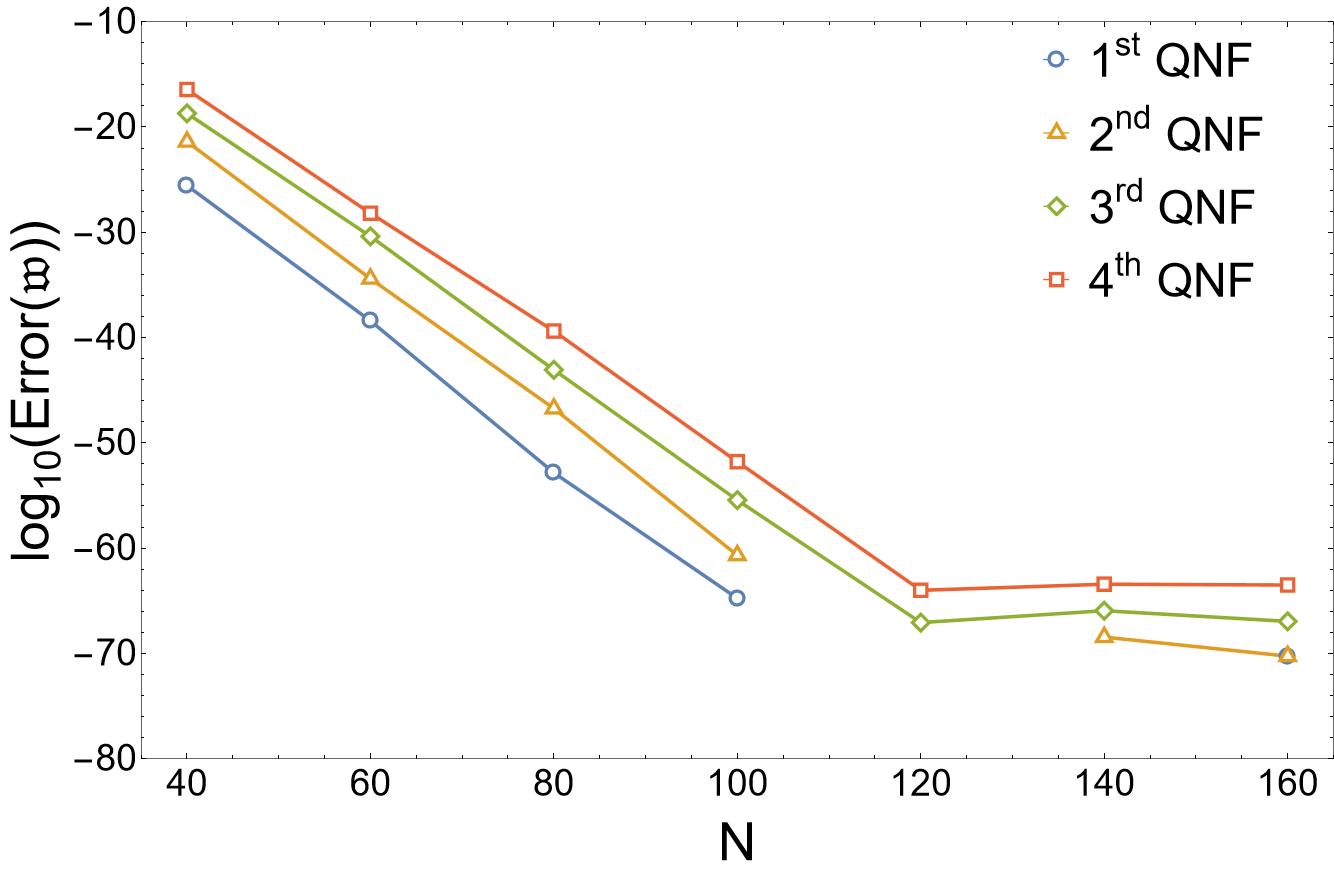}
            \captionsetup{justification=centering}
            \caption{Transverse Gauge Field with $\boldsymbol{\mathfrak{q}}^2=0$.}
            \label{fig:ErrorTransvGaugeField}
        \end{subfigure}
        \caption{Convergence test for QNFs computed with $5\times$MachinePrecision. The error is defined as $\left|1-\frac{|\mathfrak{w}|}{|\mathfrak{w}_{\text{ref}}|}\right|$, with $\mathfrak{w}_{\text{ref}}$ the reference value obtained with $10\times$MachinePrecision in grid of $400$ points. %
        The missing points are such that their error is below the numerical accuracy.}
        \label{fig:Error}
    \end{figure}

    \begin{figure}[h]
        \centering
        \begin{subfigure}[b]{0.46\textwidth}
            \centering
            \includegraphics[width=.935\linewidth]{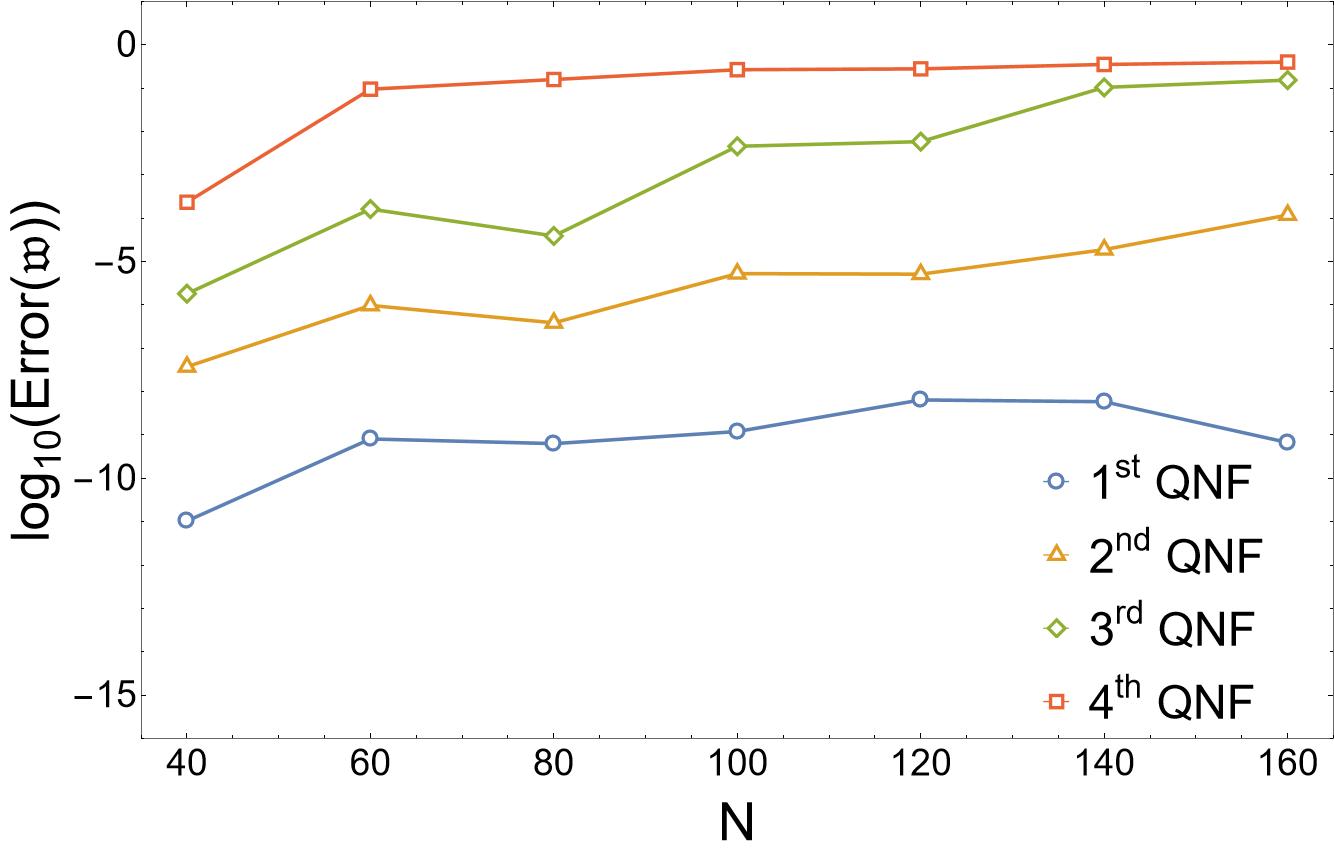}
            \captionsetup{justification=centering}
            \caption{Scalar with $m^2l^2=0$ and $\boldsymbol{\mathfrak{q}}^2=0$.}
            \label{fig:ErrorMasslessScalarMP}
        \end{subfigure}\hfill
        \begin{subfigure}[b]{0.46\textwidth}
            \centering
            \includegraphics[width=.935\linewidth]{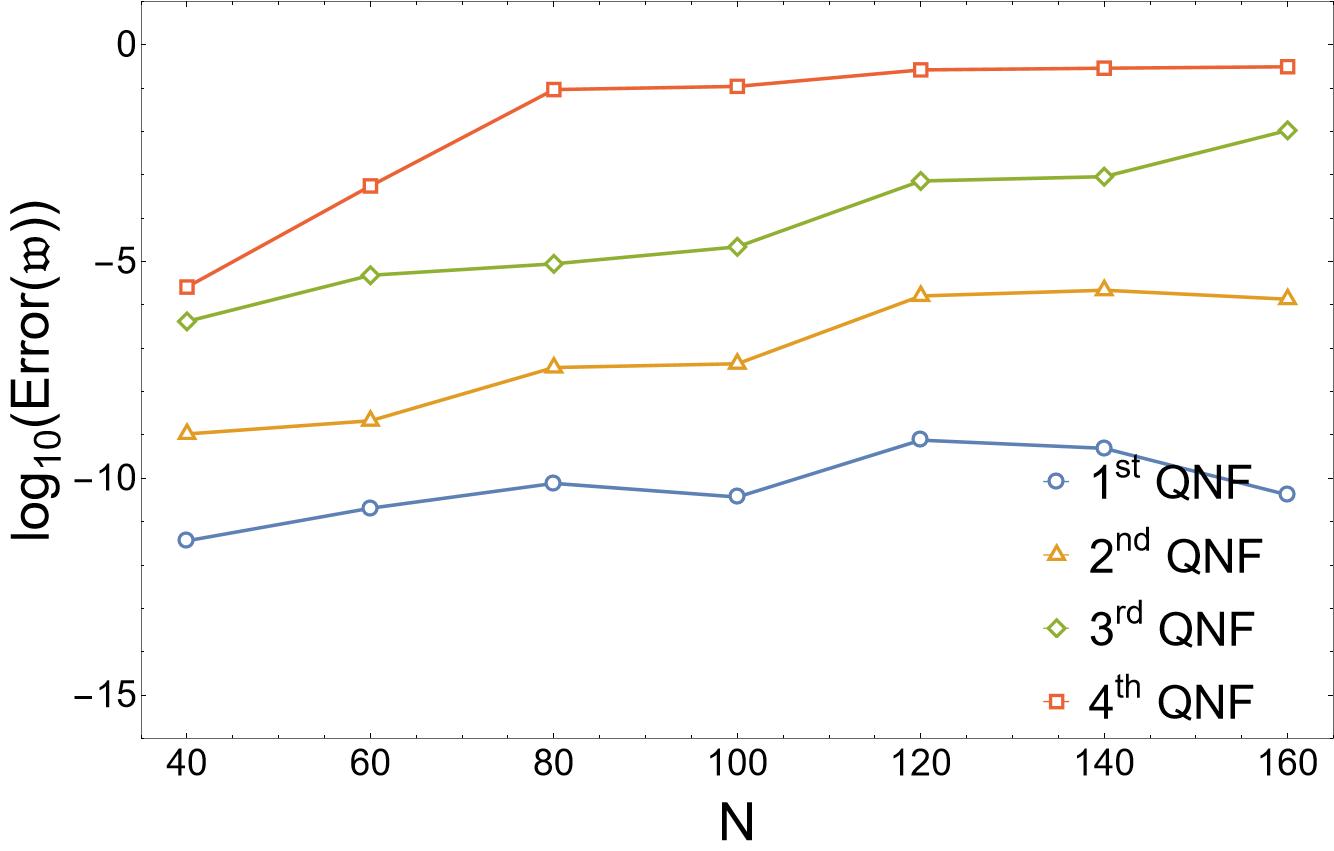}
            \captionsetup{justification=centering}
            \caption{Transverse Gauge Field with $\boldsymbol{\mathfrak{q}}^2=0$.}
            \label{fig:ErrorTransvGaugeFieldMP}
        \end{subfigure}
        \caption{Convergence test for QNFs computed with MachinePrecision. The error is defined as $\left|1-\frac{|\mathfrak{w}|}{|\mathfrak{w}_{\text{ref}}|}\right|$, with $\mathfrak{w}_{\text{ref}}$ the reference value obtained with $10\times$MachinePrecision in grid of $400$ points. The large error of the fourth QNF hints towards spectral instability.}
        \label{fig:ErrorMP}
    \end{figure}

   We require large grids and high precision to ensure that the numerical procedure does not heavily affect our results. Physically, the discretization and numerical round-offs can be understood as perturbations to the original problem and, as we want to analyze the stability of the latter, we need to be especially careful and avoid undesired effects associated with them. 
    As we are mainly interested in the stability of the three or four lowest-lying QNFs, in figure \ref{fig:Error} we test the validity of our numerics by computing those QNFs in our setup and comparing them to the ones obtained using $10\times$MachinePrecision in a grid of 400 points.\footnote{Henceforth, we abuse notation and denote by QNF the pair $\{\mathfrak{w},-\mathfrak{w}^*\}$. As the equations of motion are invariant under simultaneous conjugation and multiplication by $-1$, the spectrum (and pseudospectrum) is symmetric under reflections along the imaginary axis and consequently the eigenvalues always appear in pairs $\{\mathfrak{w},-\mathfrak{w}^*\}$. Furthermore, we choose to order the QNFs by their imaginary part such the first QNF is the one with the largest imaginary part.} Thus, we conclude that with our choice of parameters we are ensuring that the effect of the numerics on the first four QNFs is smaller than $10^{-50}$ in all cases. It is interesting to note that the need for such large grids and precisions is a good indicator of the underlying instability of the problem. Small perturbations associated with the numerics have effects much larger than their typical scale (see figure \ref{fig:ErrorMP}).

    \subsection{Real Scalar in \texorpdfstring{$\text{SAdS}_{4+1}$}{SAdS4+1}}

    \begin{figure}[h]
        \centering
        \begin{subfigure}[b]{0.49\linewidth}
            \centering
            \includegraphics[width=\linewidth]{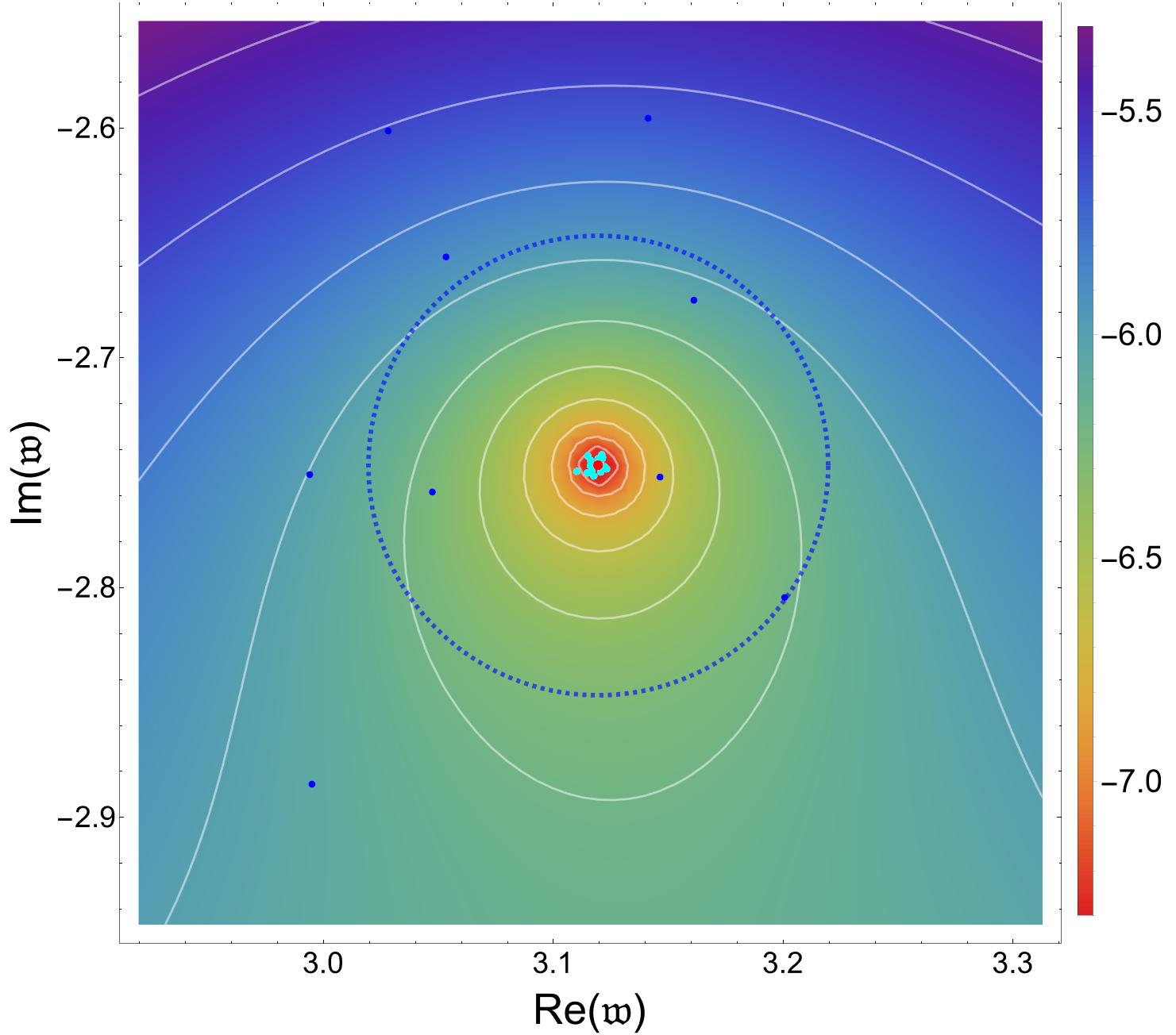}
            \captionsetup{justification=centering}
            \caption{$m^2l^2=0$, $\mathfrak{q}=0$.}
            \label{fig:CloseupPseudoMasslessScalarq0}
        \end{subfigure}\hfill
        \begin{subfigure}[b]{0.49\linewidth}
            \centering
            \includegraphics[width=\linewidth]{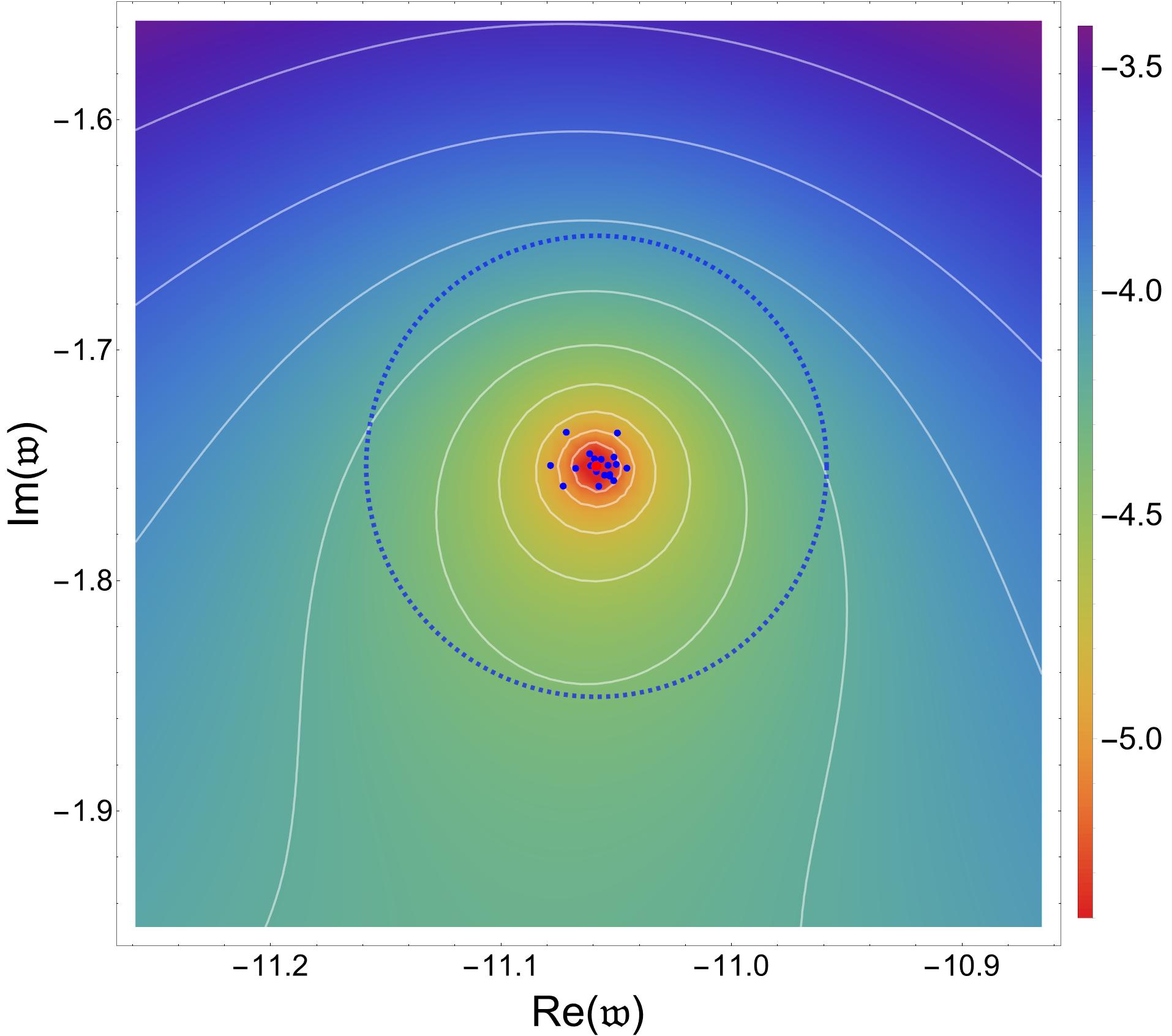}
            \captionsetup{justification=centering}
            \caption{$m^2l^2=0$, $\mathfrak{q}=10$.}
            \label{fig:CloseupPseudoMasslessScalarq10}
        \end{subfigure}
        \begin{subfigure}[b]{0.49\linewidth}
            \centering
            \includegraphics[width=\linewidth]{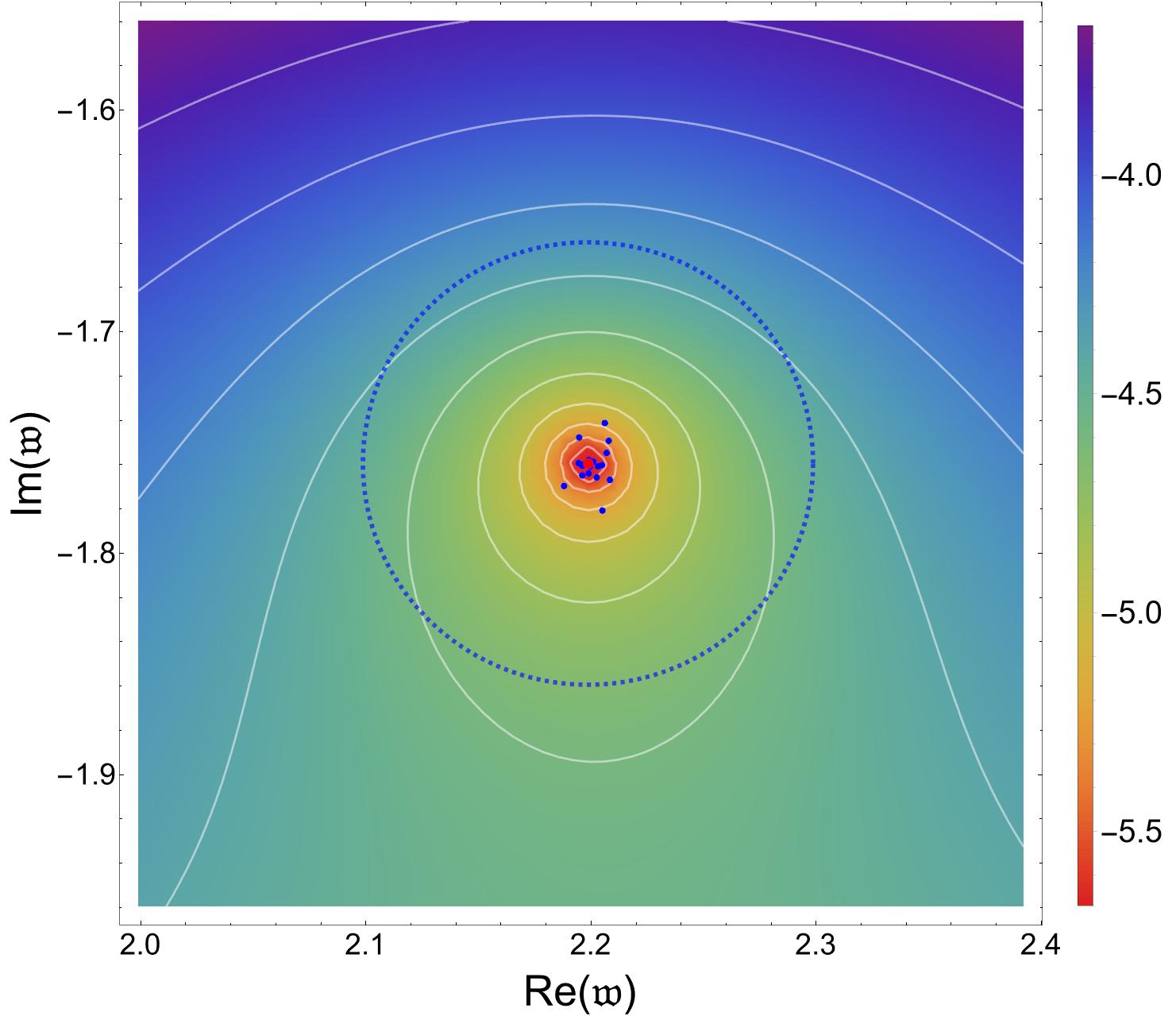}
            \captionsetup{justification=centering}
            \caption{$m^2l^2=-3$, $\mathfrak{q}=0$.}
            \label{fig:CloseupPseudoMassiveScalarq0}
        \end{subfigure}\hfill
        \begin{subfigure}[b]{0.49\linewidth}
            \centering
            \includegraphics[width=\linewidth]{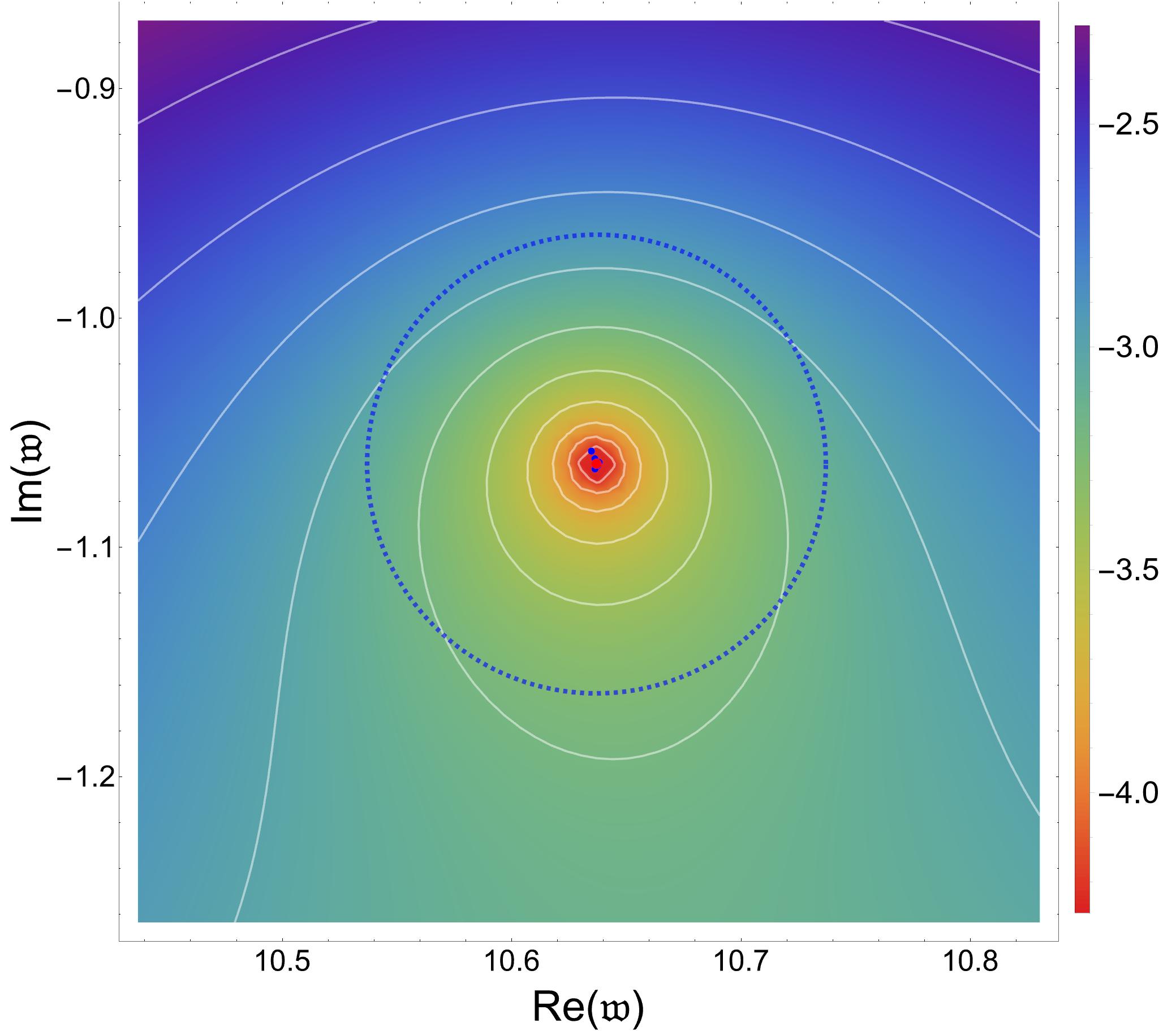}
            \captionsetup{justification=centering}
            \caption{$m^2l^2=-3$, $\mathfrak{q}=10$.}
            \label{fig:CloseupPseudoMassiveScalarq10}
        \end{subfigure}
        \caption{Close-up of the scalar  pseudospectrum in the energy norm around the first QNF for different values of $\mathfrak{q}$ and $m^2l^2$. The red dot corresponds to the QNF, the white lines represent the boundaries of various full $\varepsilon$-pseudospectra, and the dashed blue circle symbolizes a circle with a radius of $10^{-1}$ centered on the QNF. The heat map corresponds to the logarithm in base 10 of the inverse of the resolvent, while the blue and cyan dots indicate selective $\varepsilon$-pseudospectra computed with random local potential perturbations of size $10^{-1}$ and $10^{-3}$, respectively. Remarkably, only in (a) do we observe instability of the first QNF under local potential perturbations.}
        \label{fig:CloseupPseudospectraScalar}
    \end{figure}

  In figures \ref{fig:CloseupPseudospectraScalar} and \ref{fig:LargePseudospectraScalar} we present full and selective pseudospectra and the corresponding condition numbers in the energy norm for different values of $m^2l^2$ and $\mathfrak{q}=\sqrt{\boldsymbol{\mathfrak{q}}^2}$.
  Recall that throughout this section we are working in units of $z_h=(\pi\,T)^{-1}$.
  The $\varepsilon$-pseudospectra exhibit extended open regions denoting instability.\footnote{Note that all QNFs are contained within some region of the $\varepsilon$-pseudospectra. However, due to limited resolution in the plots, not all regions of the $\varepsilon$-pseudospectrum are observable. For example, in the full pseudospectrum shown in figure \ref{fig:LargePseudoMasslessScalarq0}, we cannot appreciate the $10^{-7}$-pseudospectrum around the first QNF observable in figure \ref{fig:CloseupPseudoMasslessScalarq0}. The same limitation applies to the selective pseudospectra, where small regions are covered by the red dots representing the QNFs (\textit{e.g.}, the selective $10^{-3}$-pseudospectrum observed around the first QNF in figure \ref{fig:CloseupPseudoMasslessScalarq0} is covered by a red dot in figure \ref{fig:LargePseudoMasslessScalarq0}).}
As observed in asymptotically flat \cite{Jaramillo:2020tuu,Destounis:2021lum} and de Sitter spacetimes \cite{Sarkar:2023rhp}, the instability increases the further away the QNFs are from the real axis.
In the dual quantum field theory this implies that excitations are increasingly spectrally unstable the more short-lived they are.

Interestingly, we find that one needs perturbations of size $\sim 10^{-0.4}\sim0.4$ to drive the QNFs to the upper half of the complex plane, as indicated by the full pseudospectra in figure \ref{fig:LargePseudospectraScalar}. In our units the typical distance between eigenvalues is $\sim2$, this implies that all backgrounds arising as a small deviation from SAdS$_{4+1}$ are stable, \textit{i.e.}, they do not have exponentially growing QNMs. 

\begin{figure}[htb!]
        \centering
        \begin{subfigure}[b]{0.46\linewidth}
            \centering
            \includegraphics[width=\linewidth]{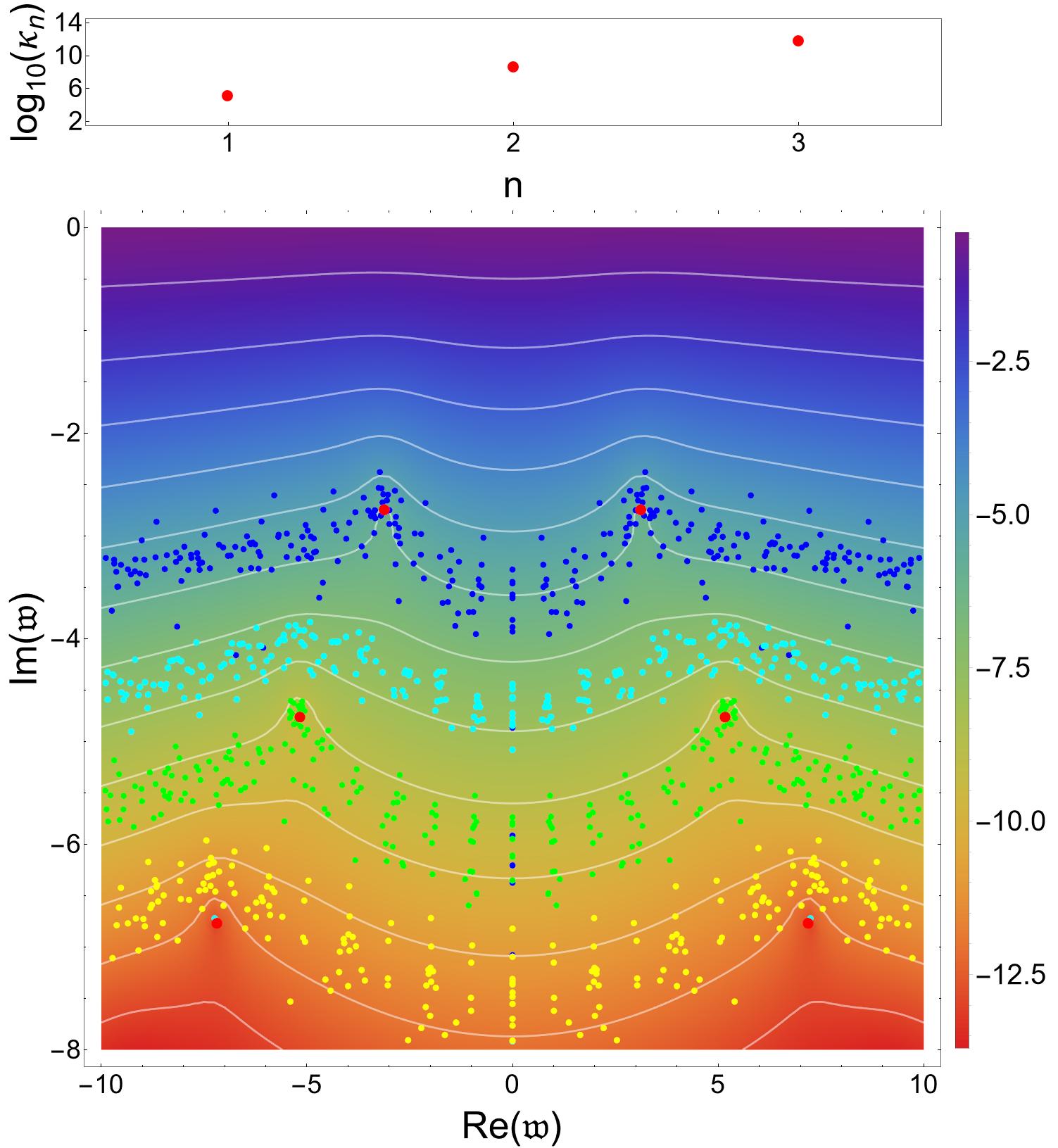}
            \captionsetup{justification=centering}
            \caption{$m^2l^2=0$, $\mathfrak{q}=0$.}
            \label{fig:LargePseudoMasslessScalarq0}
        \end{subfigure}\hfill
        \begin{subfigure}[b]{0.46\linewidth}
            \centering
            \includegraphics[width=\linewidth]{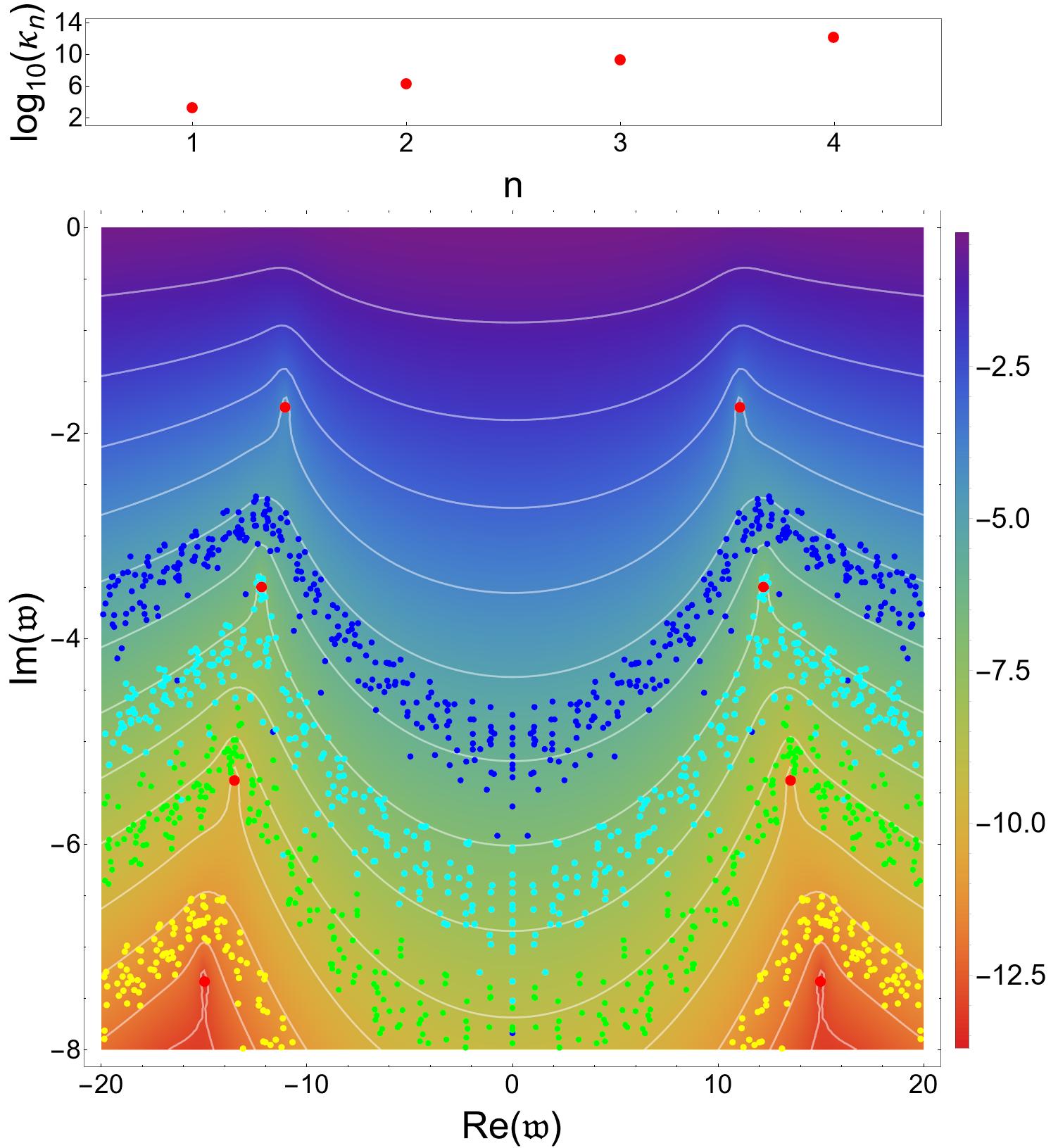}
            \captionsetup{justification=centering}
            \caption{$m^2l^2=0$, $\mathfrak{q}=10$.}
            \label{fig:LargePseudoMasslessScalarq10}
        \end{subfigure}
        \begin{subfigure}[b]{0.46\linewidth}
            \centering
            \includegraphics[width=\linewidth]{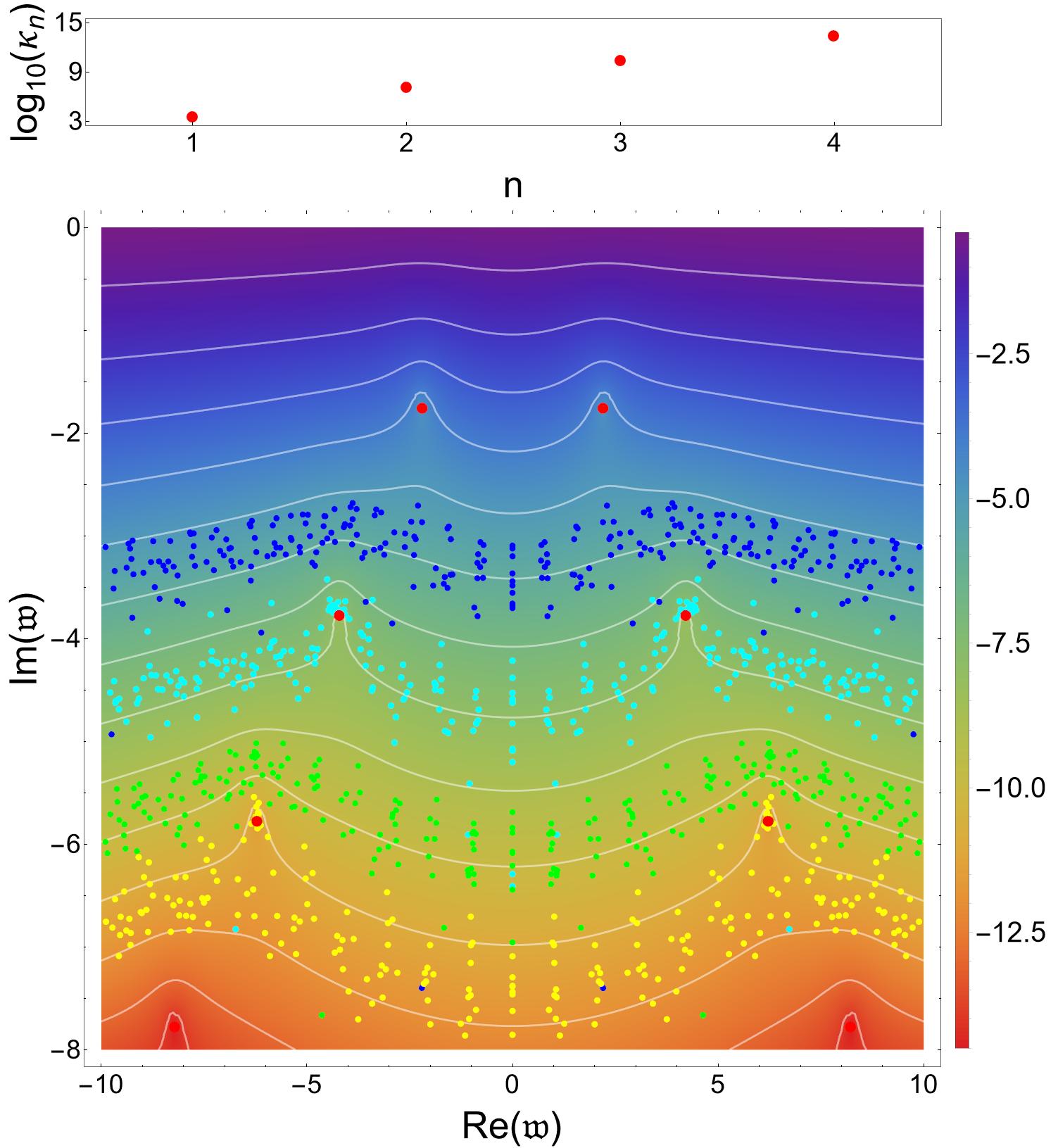}
            \captionsetup{justification=centering}
            \caption{$m^2l^2=-3$, $\mathfrak{q}=0$.}
            \label{fig:LargePseudoMassiveScalarq0}
        \end{subfigure}\hfill
        \begin{subfigure}[b]{0.46\linewidth}
            \centering
            \includegraphics[width=\linewidth]{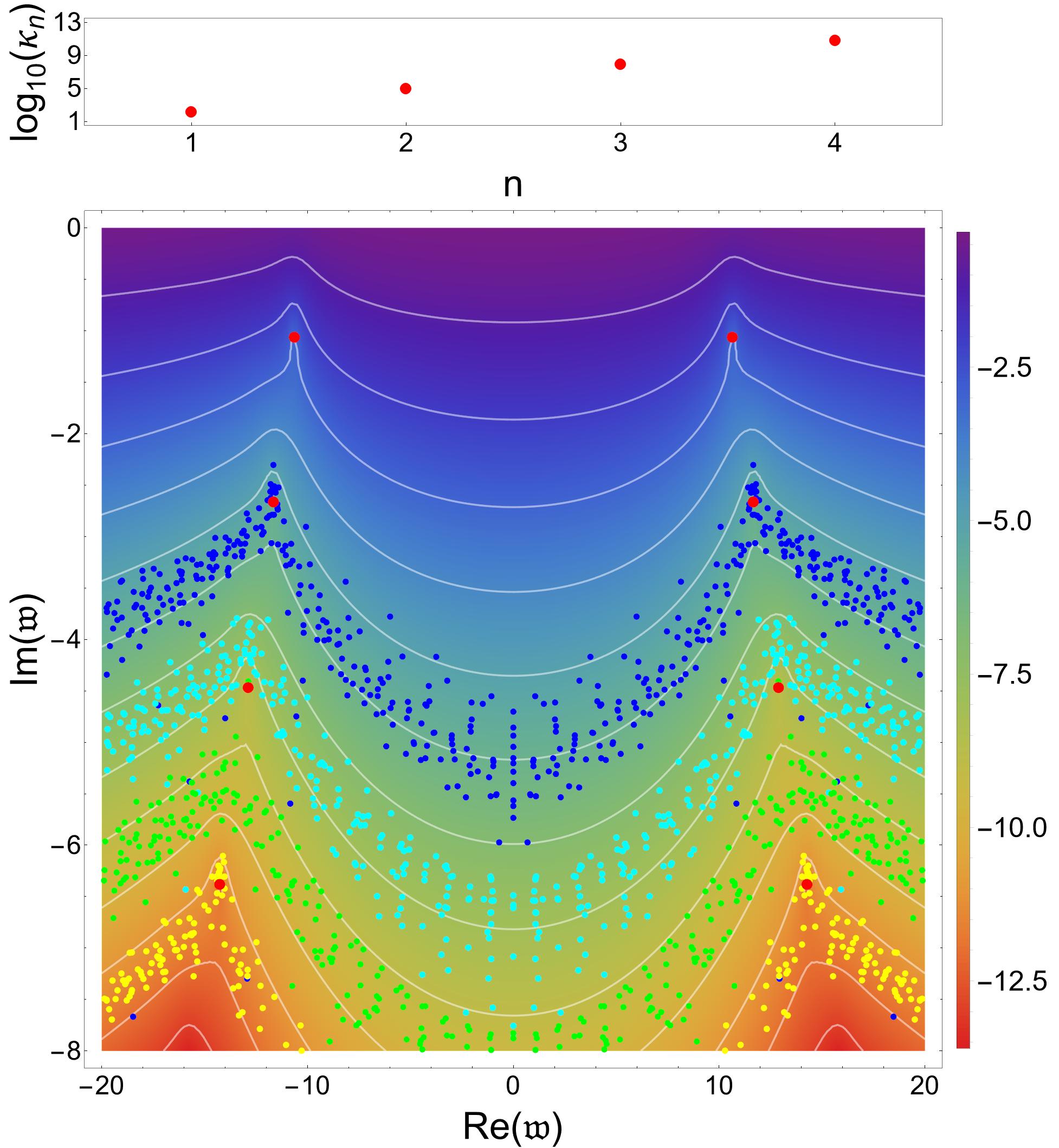}
            \captionsetup{justification=centering}
            \caption{$m^2l^2=-3$, $\mathfrak{q}=10$.}
            \label{fig:LargePseudoMassiveScalarq10}
        \end{subfigure}
        \caption{Scalar  pseudospectrum in the energy norm for different values of $\mathfrak{q}$ and $m^2l^2$. In the lower panels, we present selective and full pseudospectra. The red dots represent the (unperturbed) QNFs in the typical ``Christmas Tree'' configuration.
        The white lines denote the boundaries of different full $\varepsilon$-pseudospectra. The heat map corresponds to the logarithm in base 10 of the inverse of the resolvent, while the blue, cyan, green, and yellow dots indicate different selective $\varepsilon$-pseudospectra computed with random local potential perturbations of size $10^{-1}$, $10^{-3}$, $10^{-5}$, and $10^{-7}$; respectively. In the upper panels, we represent the condition numbers. Most notably, for small values of $\varepsilon$, the full $\varepsilon$-pseudospectra present open regions containing multiple QNFs, which signals spectral instability.}
        \label{fig:LargePseudospectraScalar}
    \end{figure}

           \begin{figure}[htb!]
        \centering
        \begin{subfigure}[b]{0.48\linewidth}
            \centering
            \includegraphics[width=.9\linewidth]{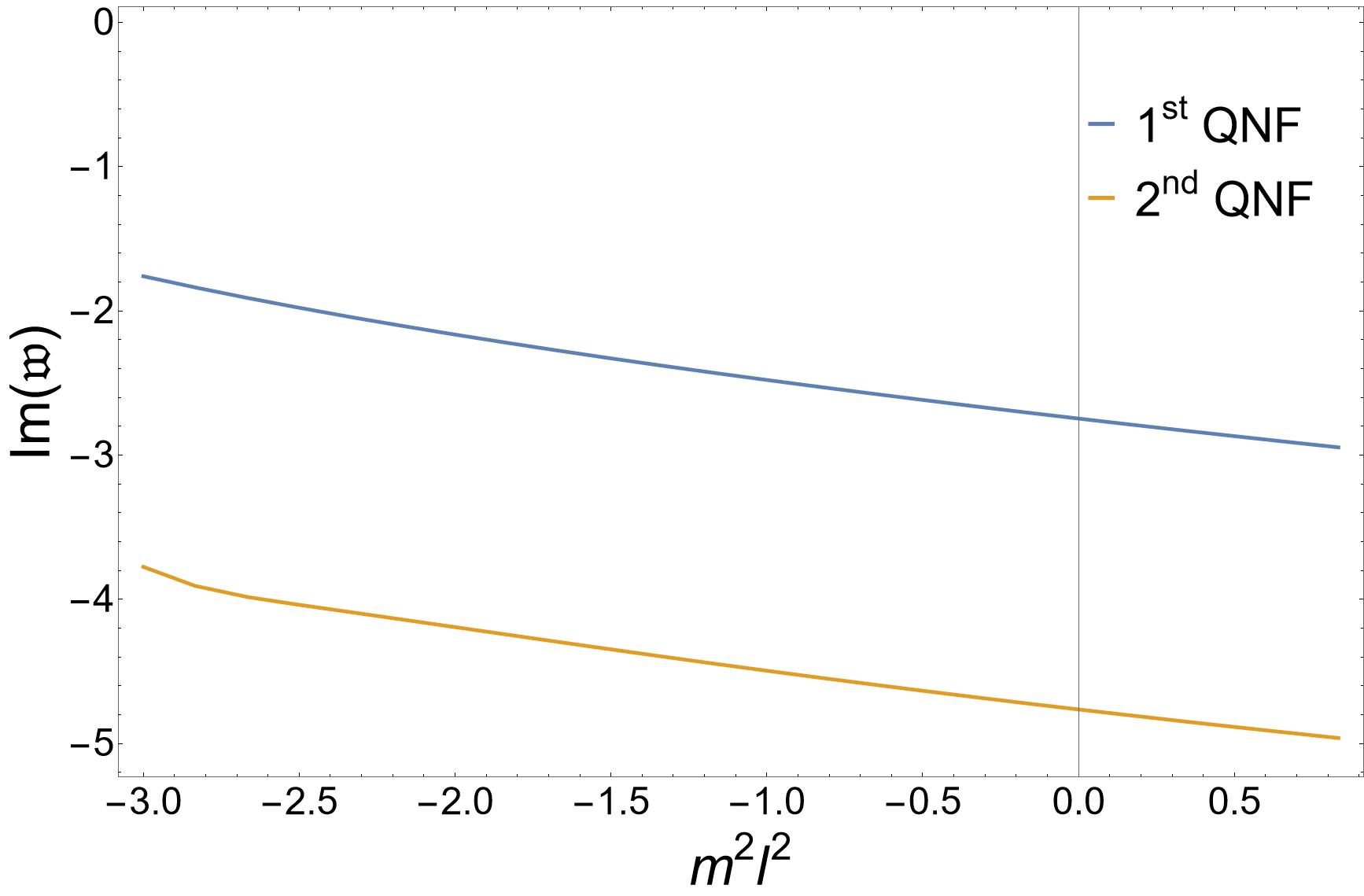}
            \captionsetup{justification=centering}
            \caption{Imaginary part of the QNFs for $\mathfrak{q}=0$.}
            \label{fig:ImQNFScalarq0}
        \end{subfigure}\hfill
        \begin{subfigure}[b]{0.48\linewidth}
            \centering
            \includegraphics[width=.9\linewidth]{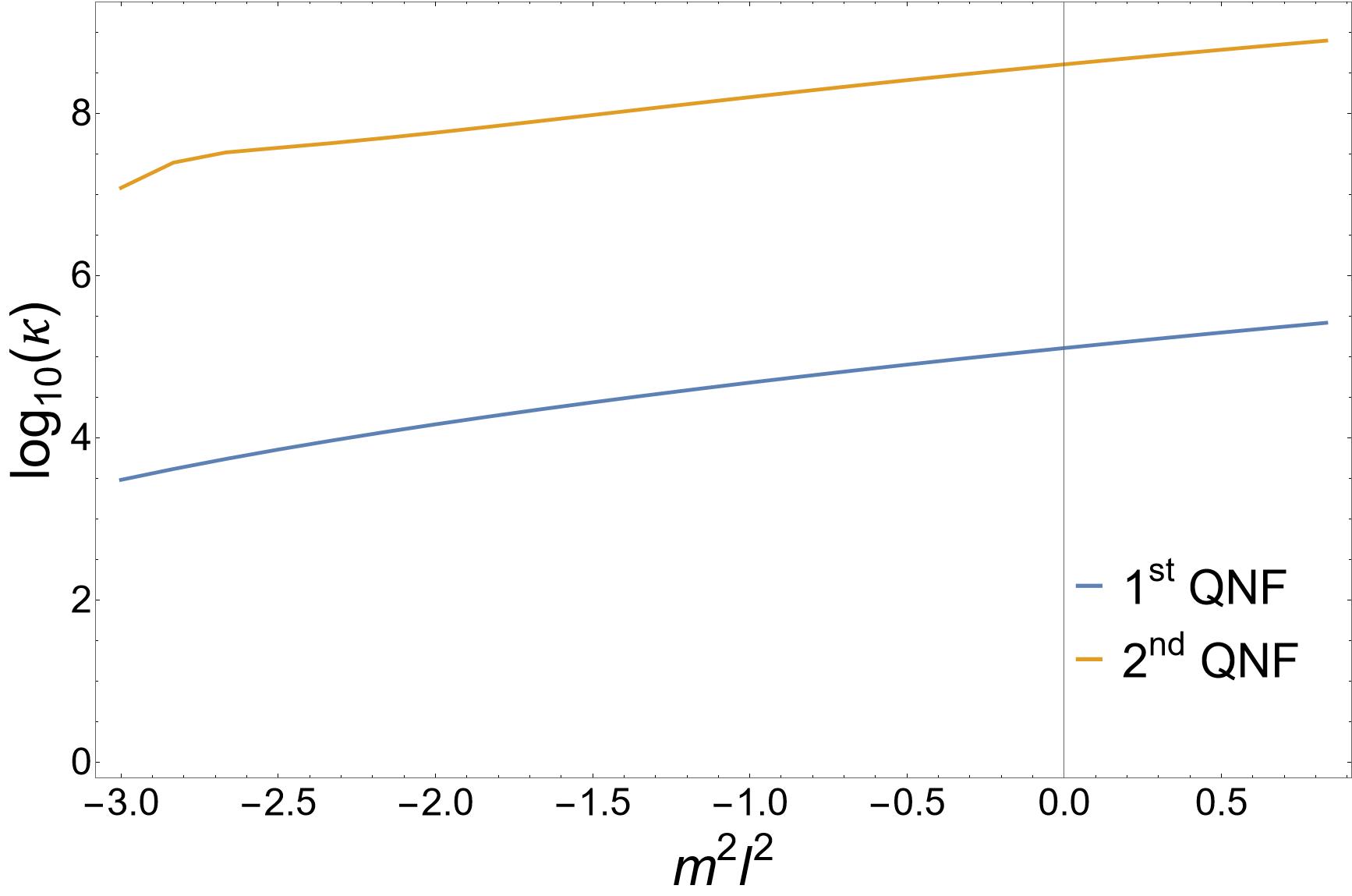}
            \captionsetup{justification=centering}
            \caption{Condition numbers for $\mathfrak{q}=0$.}
            \label{fig:ConditionNumbersScalarq0}
        \end{subfigure}
    \begin{subfigure}[b]{0.48\linewidth}
            \centering
            \includegraphics[width=.9\linewidth]{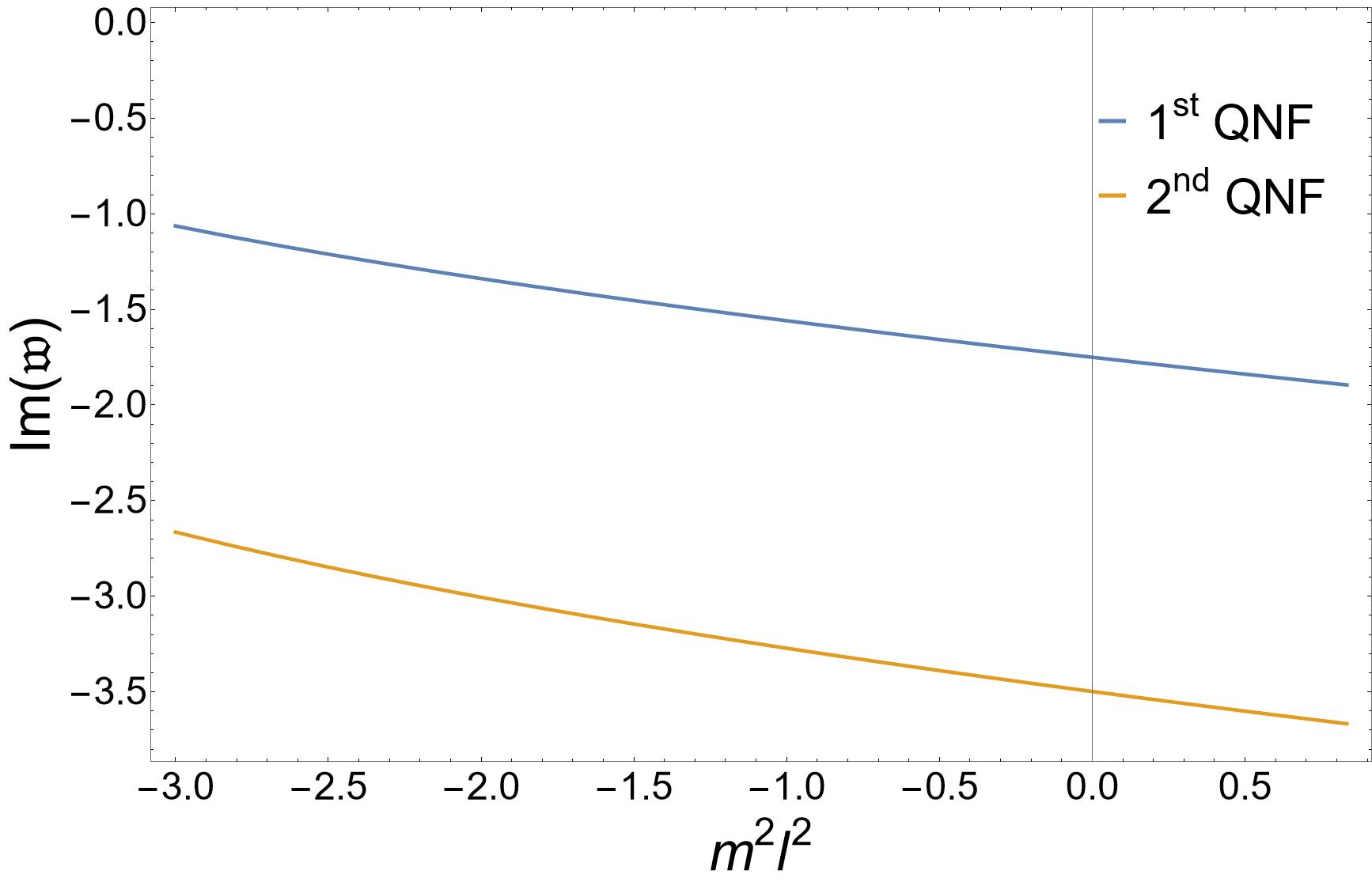}
            \captionsetup{justification=centering}
            \caption{Imaginary part of the QNFs for $\mathfrak{q}=10$.}
            \label{fig:ImQNFScalarq10}
        \end{subfigure}\hfill
        \begin{subfigure}[b]{0.48\linewidth}
            \centering
            \includegraphics[width=.9\linewidth]{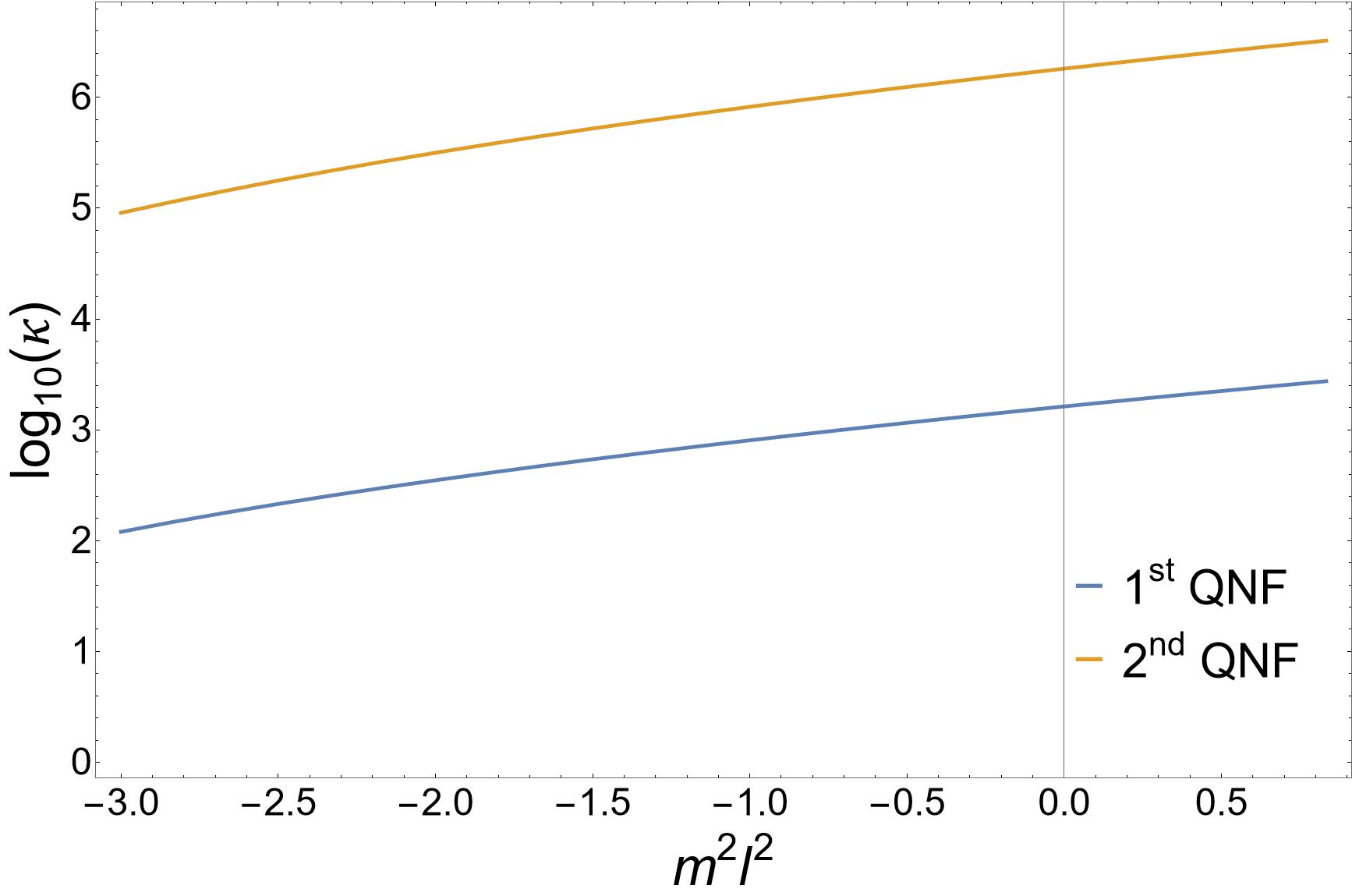}
            \captionsetup{justification=centering}
            \caption{Condition numbers for $\mathfrak{q}=10$.}
            \label{fig:ConditionNumbersScalarq10}
        \end{subfigure}
        \caption{Mass dependence in the energy norm for the scalar fluctuations.}
        \label{fig:massDependenceScalar}
    \end{figure}

    \begin{figure}[htb!]
        \centering
        \begin{subfigure}[b]{0.48\linewidth}
            \centering
            \includegraphics[width=.9\linewidth]{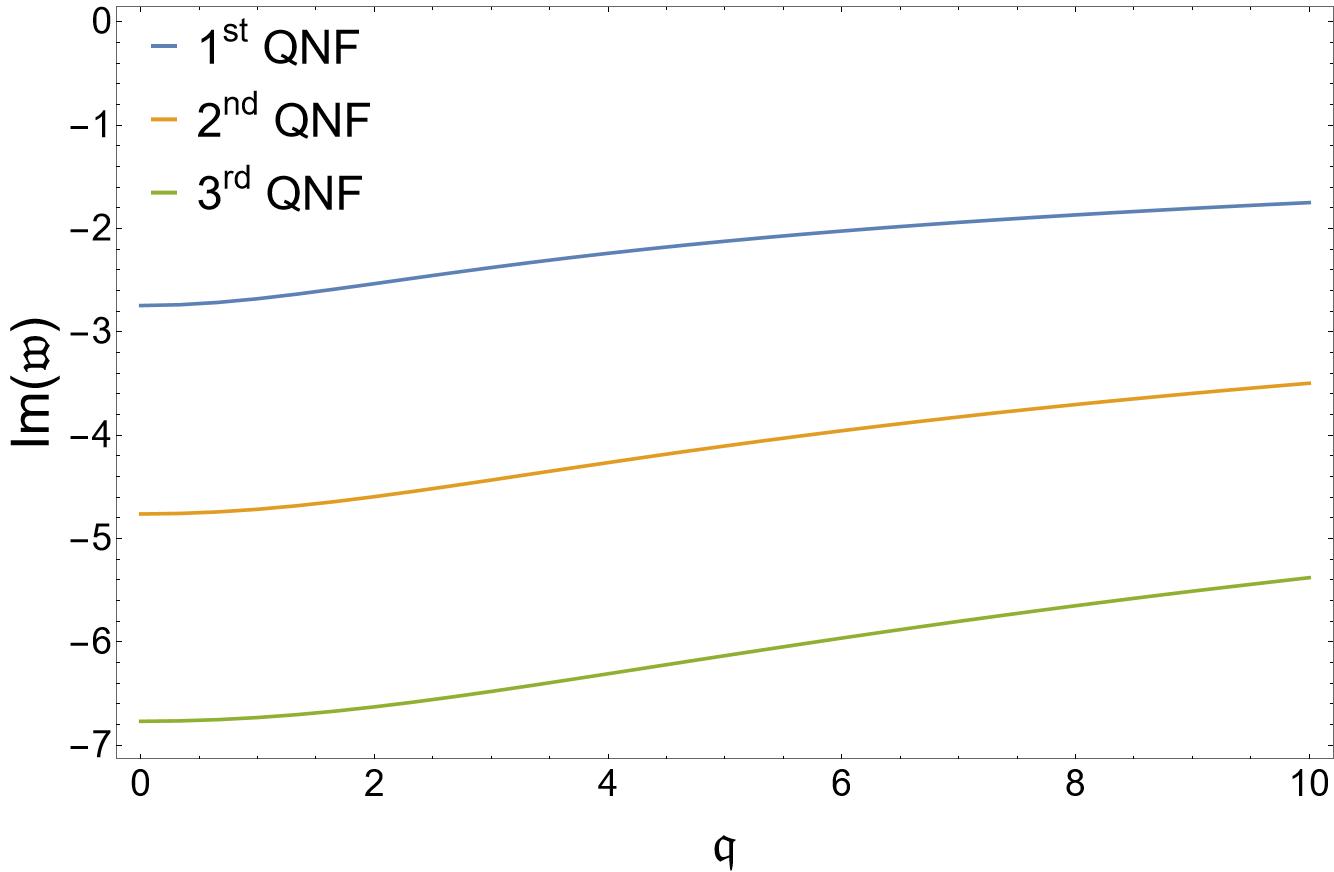}
            \captionsetup{justification=centering}
            \caption{Imaginary part of the QNFs for $m^2l^2=0$.}
            \label{fig:ImQNFMasslessScalar}
        \end{subfigure}\hfill
        \begin{subfigure}[b]{0.48\linewidth}
            \centering
            \includegraphics[width=.9\linewidth]{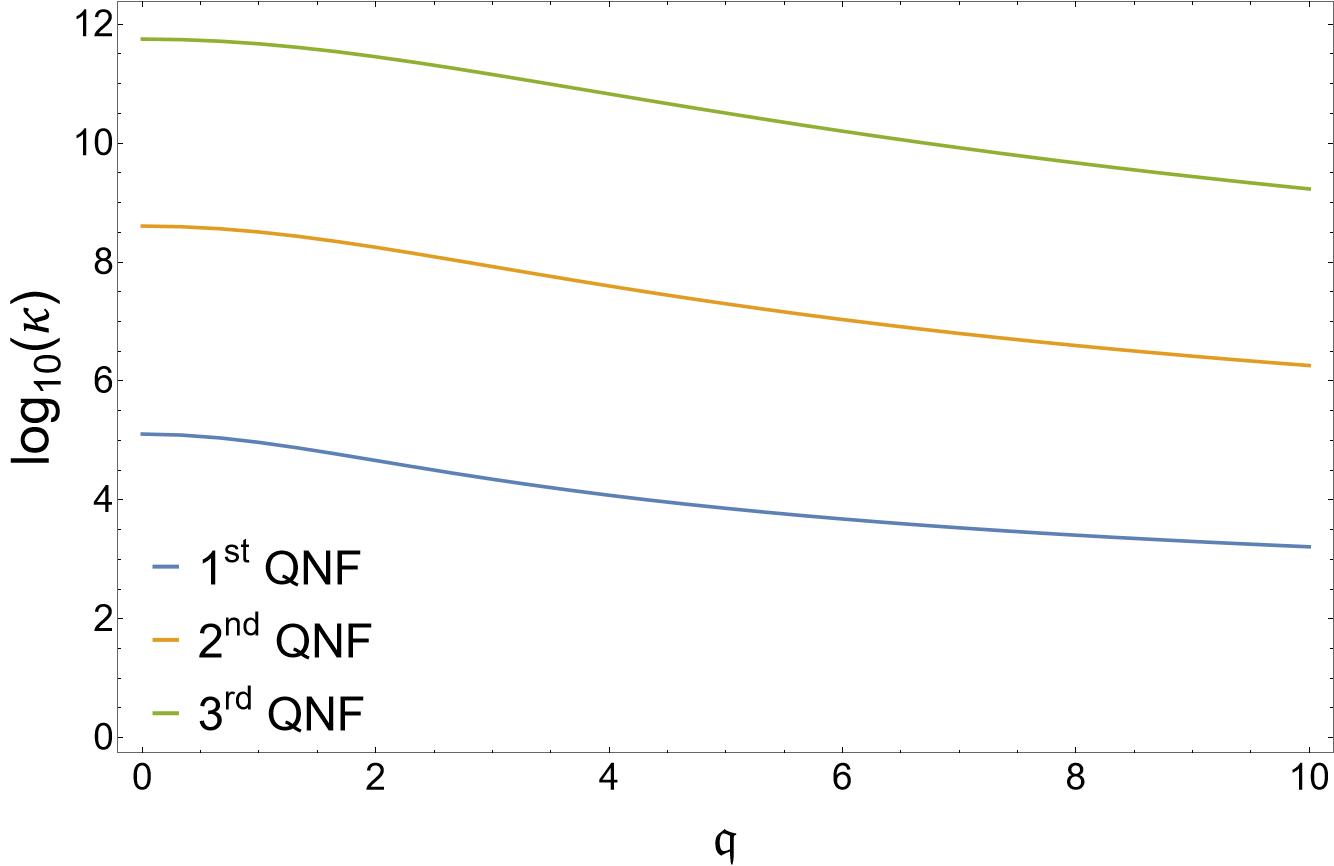}
            \captionsetup{justification=centering}
            \caption{Condition numbers for $m^2l^2=0$.}
            \label{fig:ConditionNumbersMasslessScalar}
        \end{subfigure}
        \begin{subfigure}[b]{0.48\linewidth}
            \centering
            \includegraphics[width=.9\linewidth]{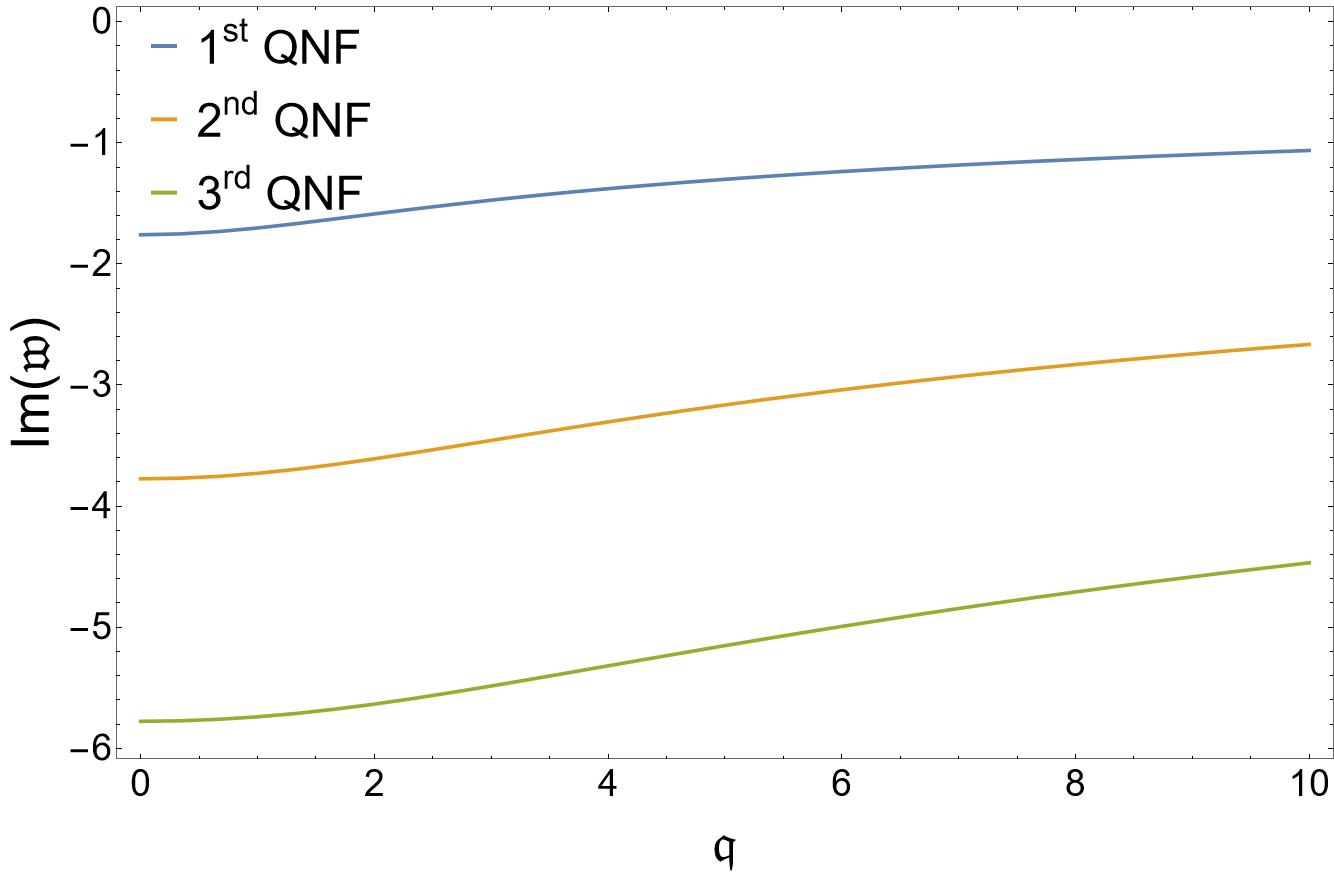}
            \captionsetup{justification=centering}
            \caption{Imaginary part of the QNFs for $m^2l^2=-3$.}
            \label{fig:ImQNFMassiveScalar}
        \end{subfigure}\hfill
        \begin{subfigure}[b]{0.48\linewidth}
            \centering
            \includegraphics[width=.9\linewidth]{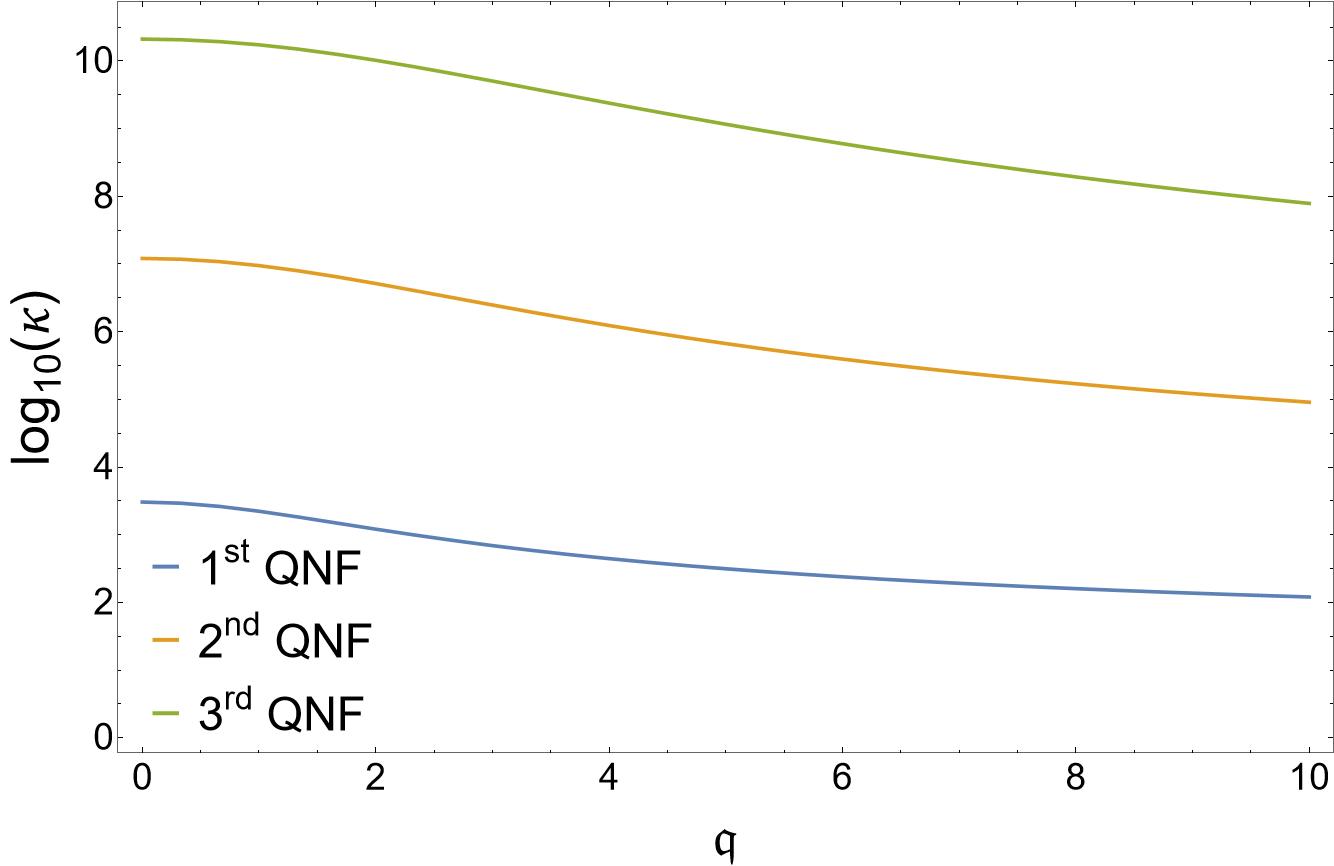}
            \captionsetup{justification=centering}
            \caption{Condition numbers for $m^2l^2=-3$.}
            \label{fig:ConditionNumbersMassiveScalar}
        \end{subfigure}
        \caption{Momentum dependence in the energy norm for the scalar fluctuations.}
        \label{fig:qDependenceScalar}
    \end{figure}
    
  Focusing now on the selective pseudospectrum, we note that its structure seems to resemble the structure of the full pseudospectrum. Nonetheless, the scales clearly do not match; the selective pseudospectrum shows significantly more stability. Hence, we can conclude that most of the instability arises from perturbations which cannot be interpreted as local potential perturbations.

  Regarding the mass and momentum dependence, stability decreases with mass and increases with momentum (see figures \ref{fig:massDependenceScalar} and \ref{fig:qDependenceScalar}). Such behavior can be understood in the context of the instability being related to the distance to the real axis: the imaginary part of the QNFs is reduced as mass (momentum) decreases (increases), and thus we observe more stability. For the QFT this implies that high-momentum fluctuations of operators with small scaling dimension are more stable.

     \begin{figure}[h!]
        \centering
        \begin{subfigure}[b]{0.49\linewidth}
            \centering
            \includegraphics[width=\linewidth]{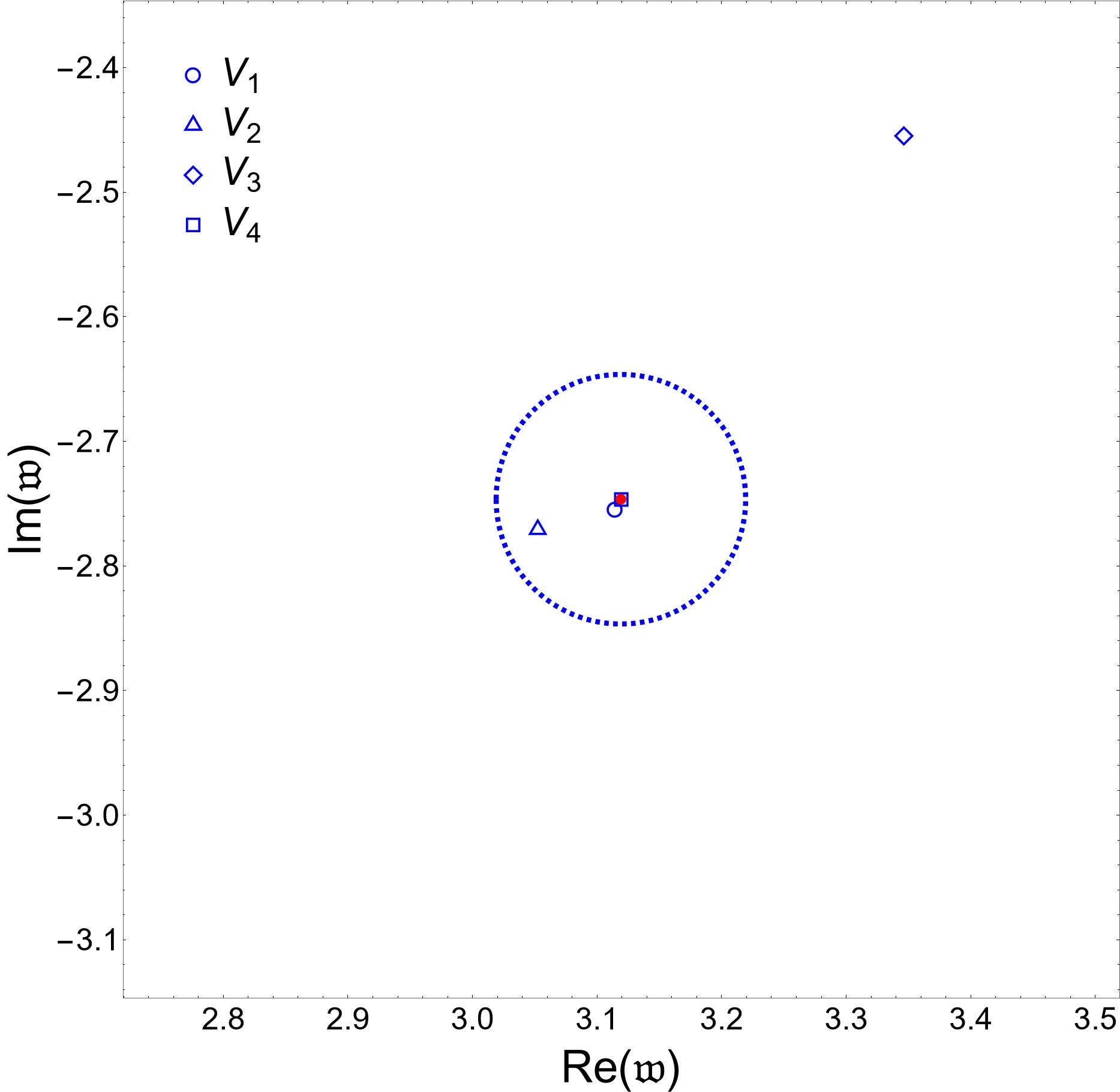}
            \captionsetup{justification=centering}
            \caption{$m^2l^2=0$, $\mathfrak{q}=0$.}
            \label{fig:CloseupDeterministicMasslessScalarq0}
        \end{subfigure}\hfill
        \begin{subfigure}[b]{0.49\linewidth}
            \centering
            \includegraphics[width=\linewidth]{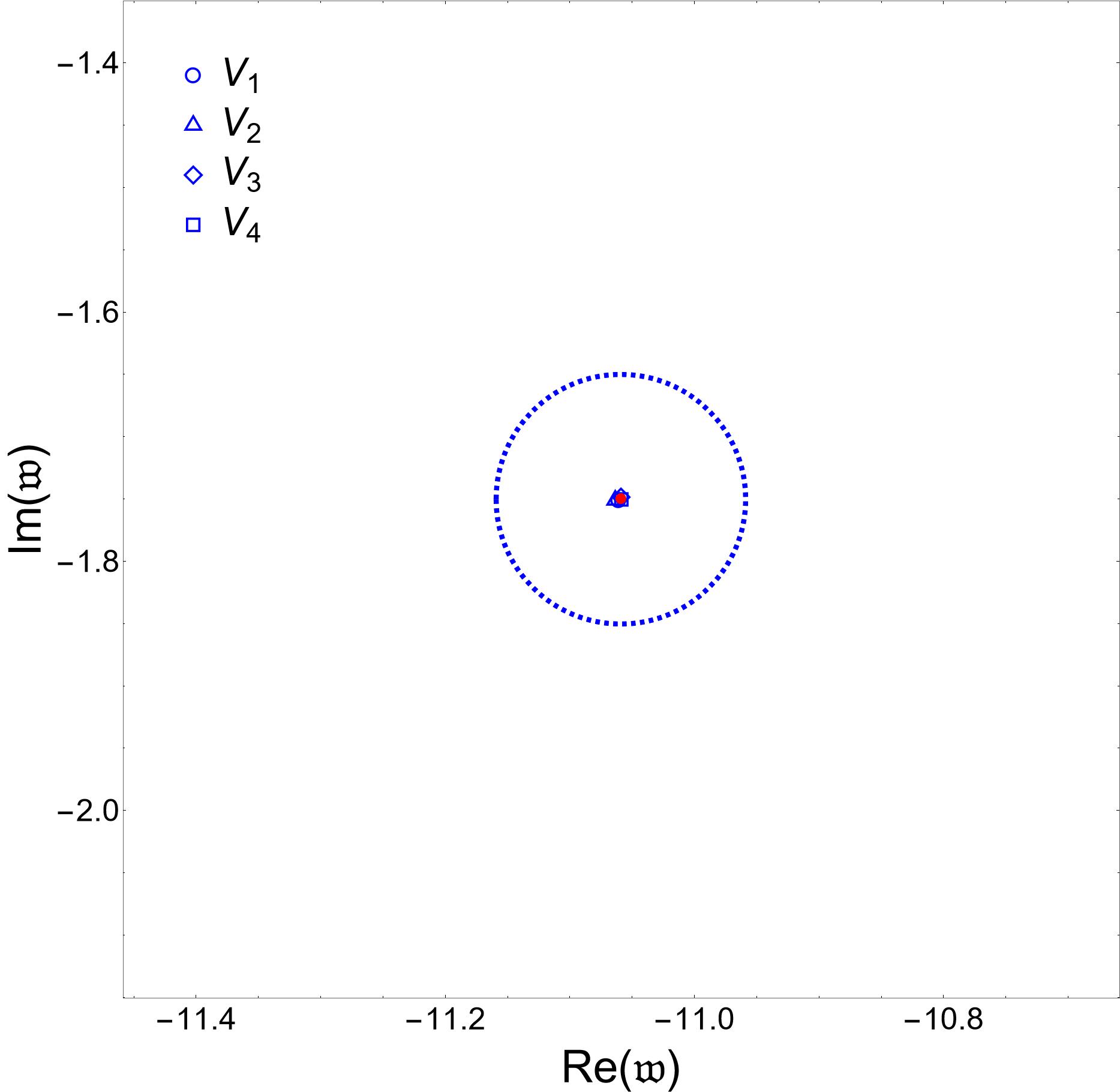}
            \captionsetup{justification=centering}
            \caption{$m^2l^2=0$, $\mathfrak{q}=10$.}
            \label{fig:CloseupDeterministicMasslessScalarq10}
        \end{subfigure}
        \begin{subfigure}[b]{0.49\linewidth}
            \centering
            \includegraphics[width=\linewidth]{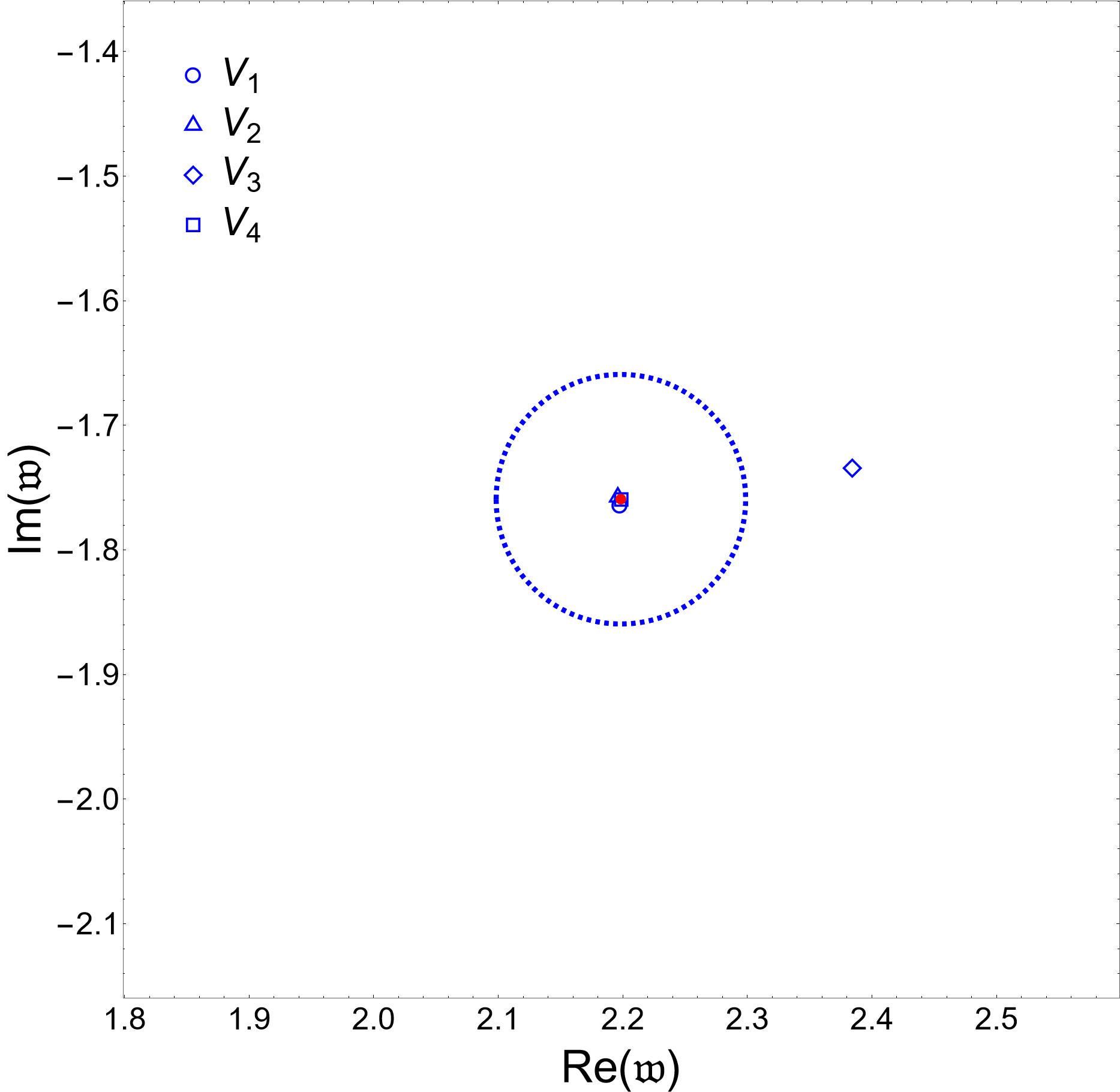}
            \captionsetup{justification=centering}
            \caption{$m^2l^2=-3$, $\mathfrak{q}=0$.}
            \label{fig:CloseupDeterministicMassiveScalarq0}
        \end{subfigure}\hfill
        \begin{subfigure}[b]{0.49\linewidth}
            \centering
            \includegraphics[width=\linewidth]{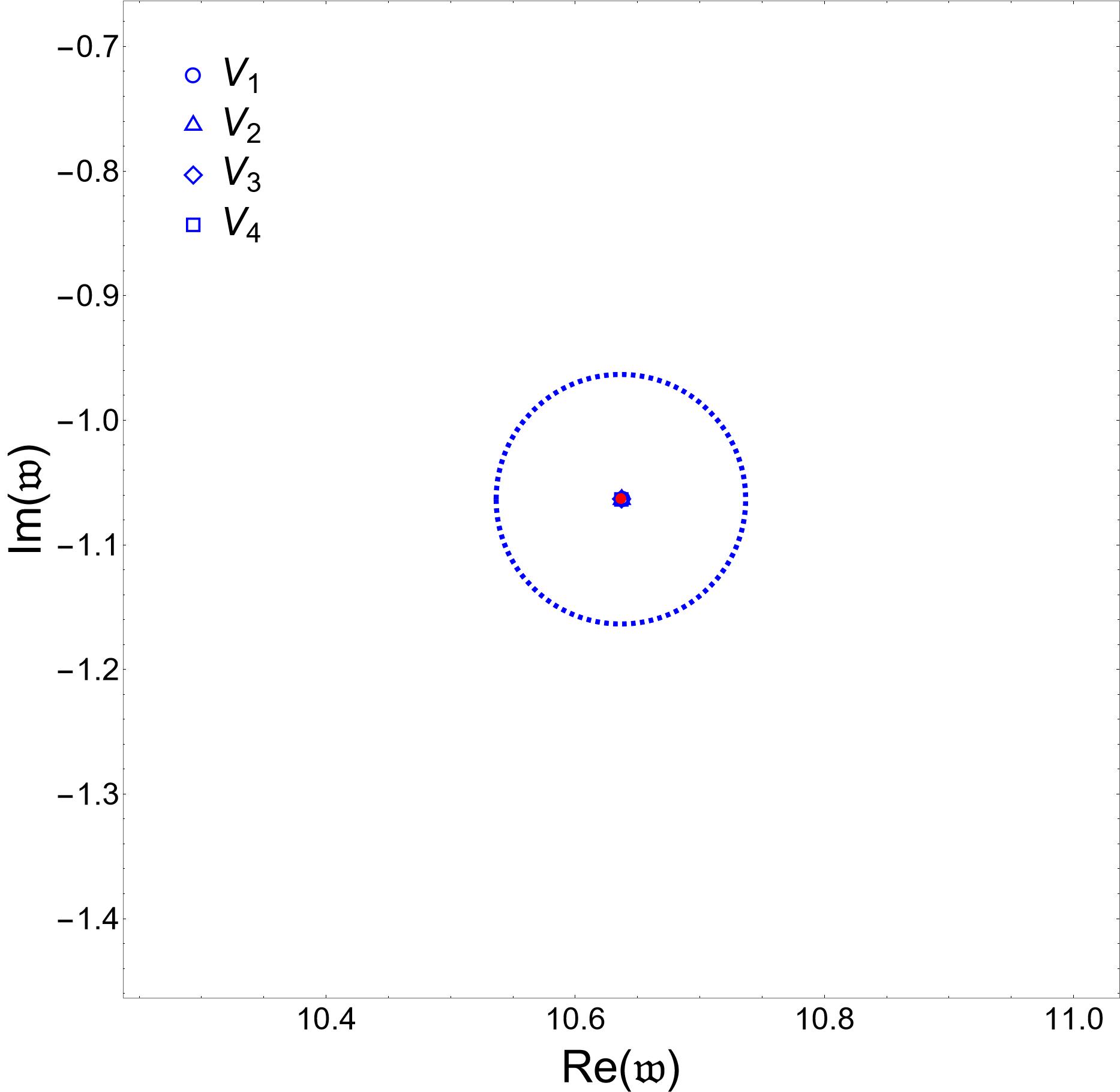}
            \captionsetup{justification=centering}
            \caption{$m^2l^2=-3$, $\mathfrak{q}=10$.}
            \label{fig:CloseupDeterministicMassiveScalarq10}
        \end{subfigure}
        \caption{Effect on the first scalar QNF  of the deterministic perturbations %
        \eqref{eq:deterministicpots}
        with size $\norm{V_i}_E=10^{-1}$. The unperturbed QNF is shown in red, while the perturbed QNFs are depicted in blue. The dashed blue line represents the circle of radius $10^{-1}$ centered in the unperturbed QNF. Remarkably, only for $\mathfrak{q}=0$ do we observe instability under near-horizon perturbations ($V_3$).}
        \label{fig:CloseupDeterministicScalar}
    \end{figure}   

    \begin{figure}[h!]
        \centering
        \begin{subfigure}[b]{0.44\linewidth}
            \centering
            \includegraphics[width=\linewidth]{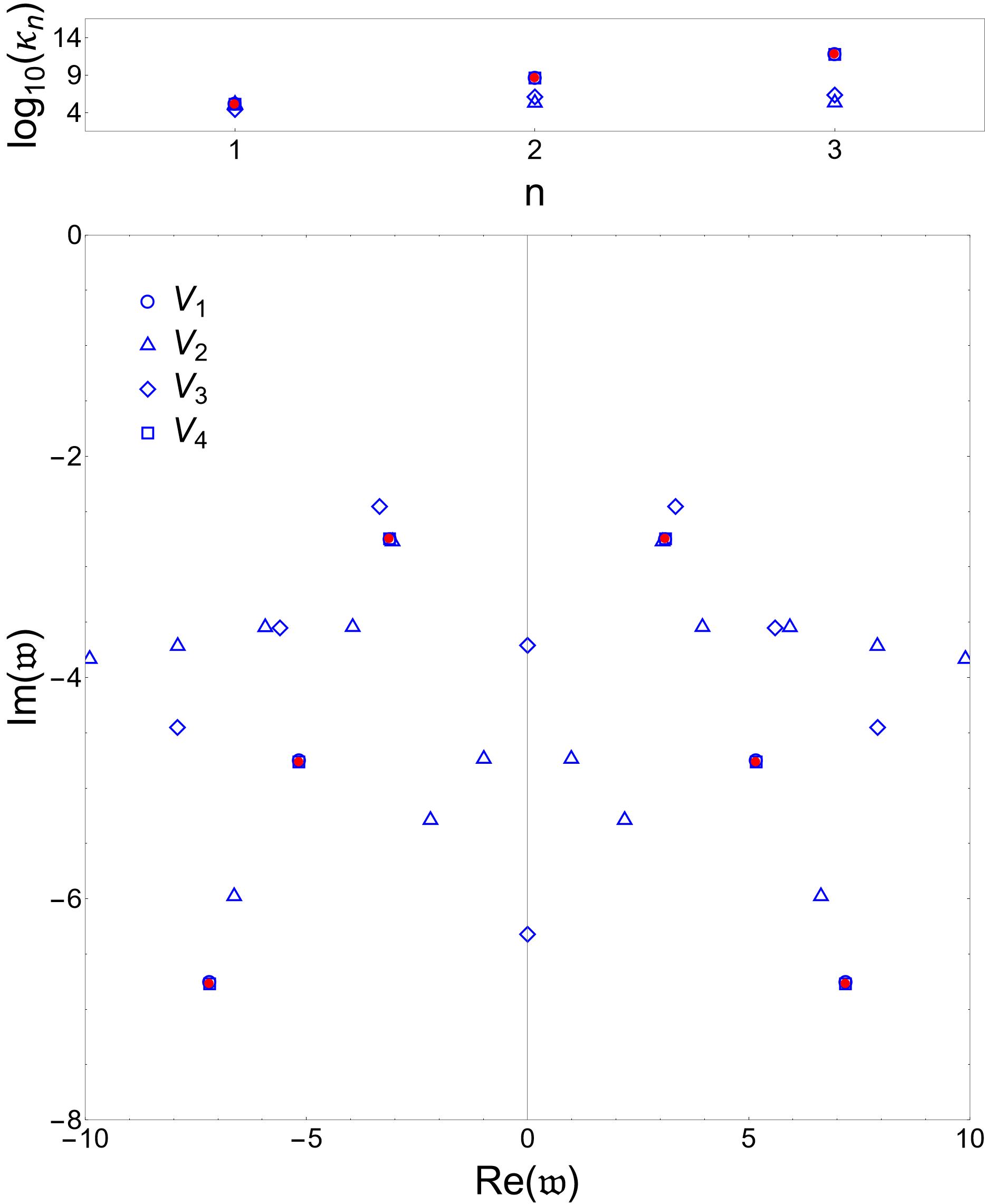}
            \captionsetup{justification=centering}
            \caption{$m^2l^2=0$, $\mathfrak{q}=0$.}
            \label{fig:LargeDeterministicMasslessScalarq0}
        \end{subfigure}\hfill
        \begin{subfigure}[b]{0.44\linewidth}
            \centering
            \includegraphics[width=\linewidth]{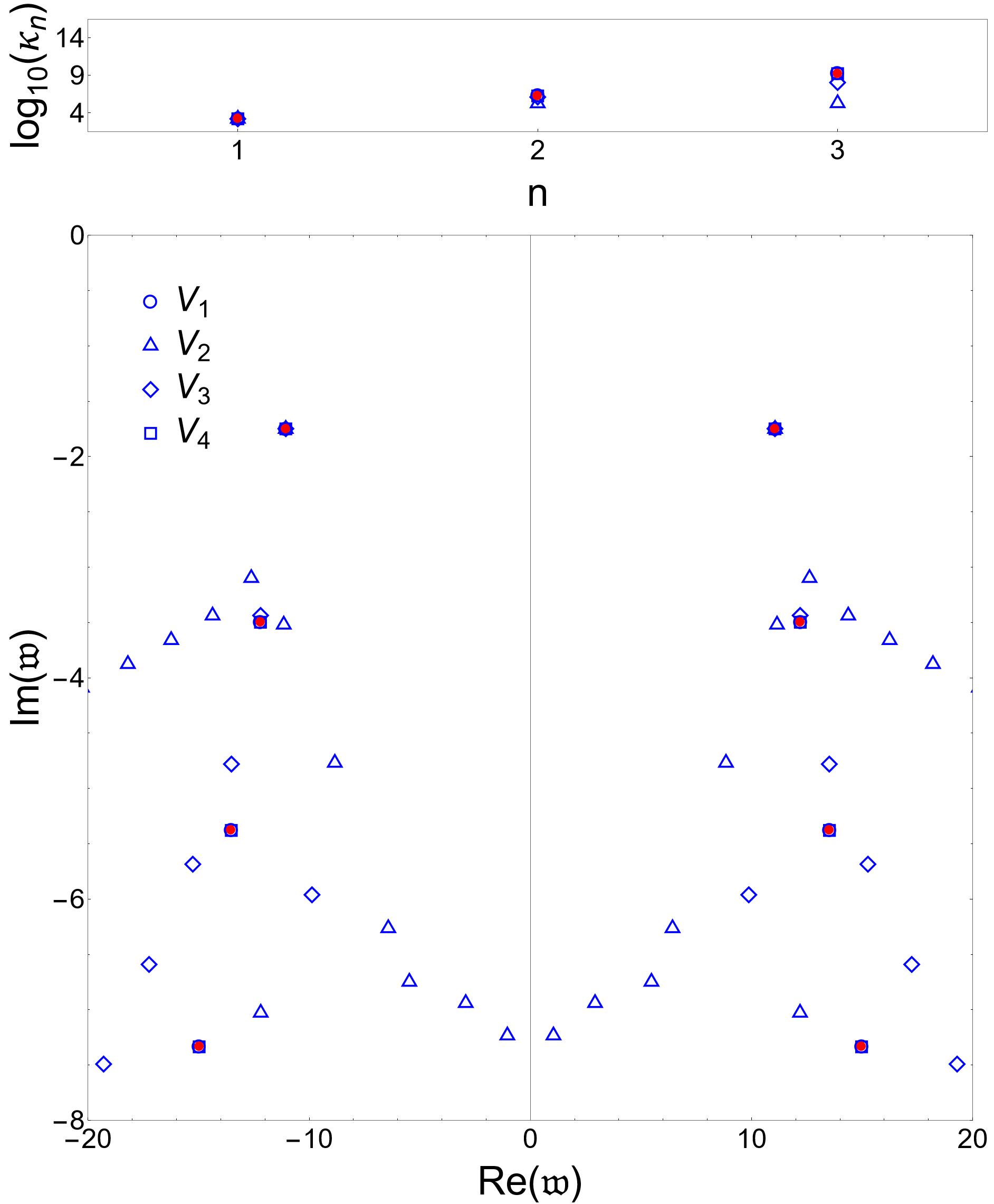}
            \captionsetup{justification=centering}
            \caption{$m^2l^2=0$, $\mathfrak{q}=10$.}
            \label{fig:LargeDeterministicMasslessScalarq10}
        \end{subfigure}
        \begin{subfigure}[b]{0.44\linewidth}
            \centering
            \includegraphics[width=\linewidth]{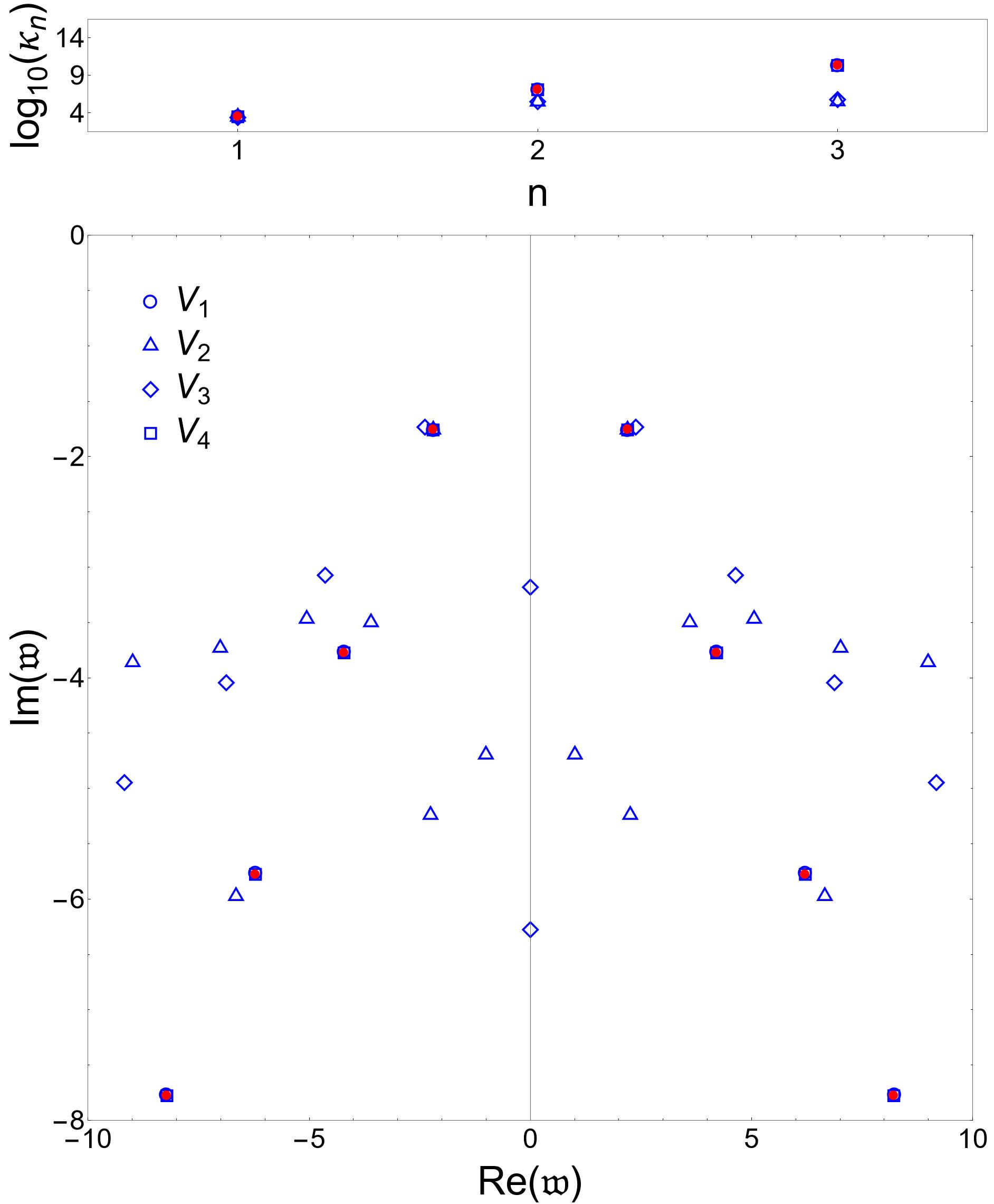}
            \captionsetup{justification=centering}
            \caption{$m^2l^2=-3$, $\mathfrak{q}=0$.}
            \label{fig:LargeDeterministicMassiveScalarq0}
        \end{subfigure}\hfill
        \begin{subfigure}[b]{0.44\linewidth}
            \centering
            \includegraphics[width=\linewidth]{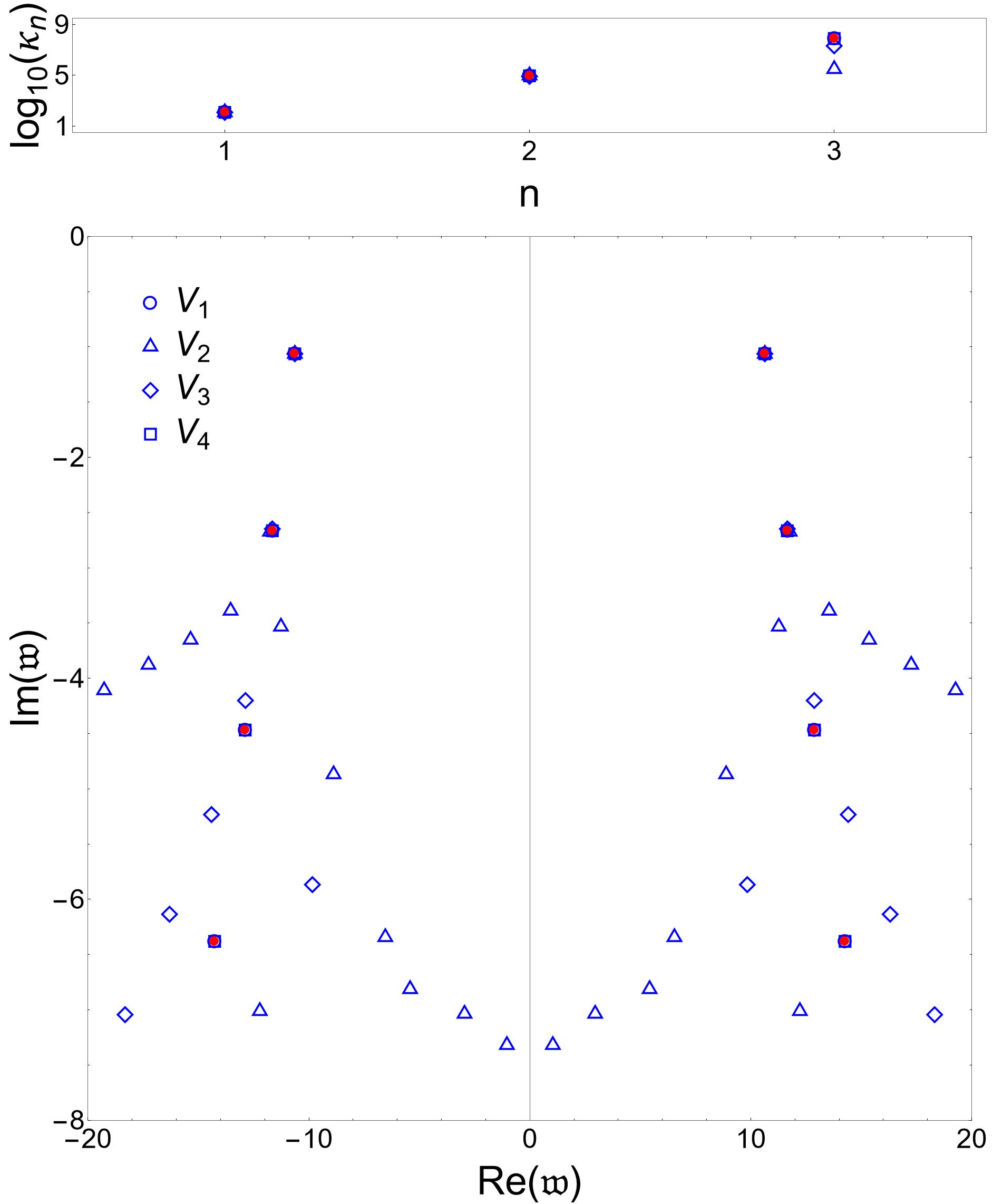}
            \captionsetup{justification=centering}
            \caption{$m^2l^2=-3$, $\mathfrak{q}=10$.}
            \label{fig:LargeDeterministicMassiveScalarq10}
        \end{subfigure}
        \caption{Effect on the spectrum of the scalar of the deterministic perturbations %
        \eqref{eq:deterministicpots}
        with size $\norm{V_i}_E=10^{-1}$. In the lower panels we present the spectra and in the upper ones the condition numbers for the lowest QNFs. The unperturbed QNFs are shown in red, while the perturbed ones are depicted in blue. The plotted region of the spectra is stable under long $\rho$-wavelength perturbations ($V_1$) and near-boundary perturbations ($V_4$). Note also that, as indicated by the condition numbers, the perturbed QNFs associated with $V_2$ and $V_3$ are more stable than the unperturbed QNFs.}
        \label{fig:LargeDeterministicScalar}
    \end{figure}

 We now focus on 
 the first QNF (see figure \ref{fig:CloseupPseudospectraScalar}), which has the smallest imaginary part and thus dominates the low energy spectrum of the QFT.
 Remarkably, we find that for $m^2l^2=0$ at $\mathfrak{q}=0$
 that QNF is unstable under local potential perturbations as the selective $\varepsilon$-pseudospectra extends beyond the circle of radius $\varepsilon$ centered around the QNF. To further explore the nature of this instability, in figure \ref{fig:CloseupDeterministicScalar} we analyze the effect of the deterministic potential perturbations
 \eqref{eq:deterministicpots}
 on the first QNF. We observe that the instability is only present for near-horizon perturbations, concluding that it is mainly associated with deformations of the IR of the dual QFT. Moreover, we also find that, although we did not have statistical evidence in the selective pseudospectra, the first QNF is actually unstable under local potential perturbations for $m^2l^2=-3$ (figure \ref{fig:CloseupDeterministicMassiveScalarq0}). At $\mathfrak{q}=10$ we find that the QNM is stable, as expected by the enhanced stability at large $\mathfrak{q}$.

\clearpage
    
    Continuing our exploration of the effect of the deterministic potentials, in figure \ref{fig:LargeDeterministicScalar} we present the spectrum and the condition numbers for the first 3 perturbed QNFs. We observe that large $\rho$-wavelength ($V_1$) and near-boundary ($V_4$) perturbations have very little effect on the spectrum. On the other hand, short $\rho$-wavelength ($V_2$) and near-horizon perturbations ($V_3$) displace the QNFs significantly, generating new branching structures. Remarkably, these perturbed QNFs are more stable than the unperturbed ones as indicated by the condition numbers plotted in the upper panels of figure \ref{fig:LargeDeterministicScalar}. Thus, the spectra of poles of Green's functions in strongly coupled holographic field theories might have some preference for branching structures over the unperturbed ``Christmas Tree'' configuration.

We end this section by briefly addressing the results for the pseudospectrum of the scalar field in the $L^2$-norm.
We refer the reader to Appendix~\ref{app:l2norm} and here 
we simply point out that the pseudospectra are qualitatively similar to those in the energy norm. However, in the $L^2$-norm we find that the stability is enhanced under local potential perturbations and reduced under generic perturbations.

    \subsection{Transverse Gauge Field in \texorpdfstring{$\text{SAdS}_{4+1}$}{SAdS4+1}}

In this section we present the pseudospectrum of the fluctuations of the transverse gauge field introduced in section~\ref{ssec:gaugefield}.
As we detail in the following, we observe the same generic features as for the real scalar. In particular we find the QNFs of the gauge field to be unstable. 
    
\begin{figure}[h]
        \centering
        \begin{subfigure}[b]{0.48\linewidth}
            \centering
            \includegraphics[width=\linewidth]{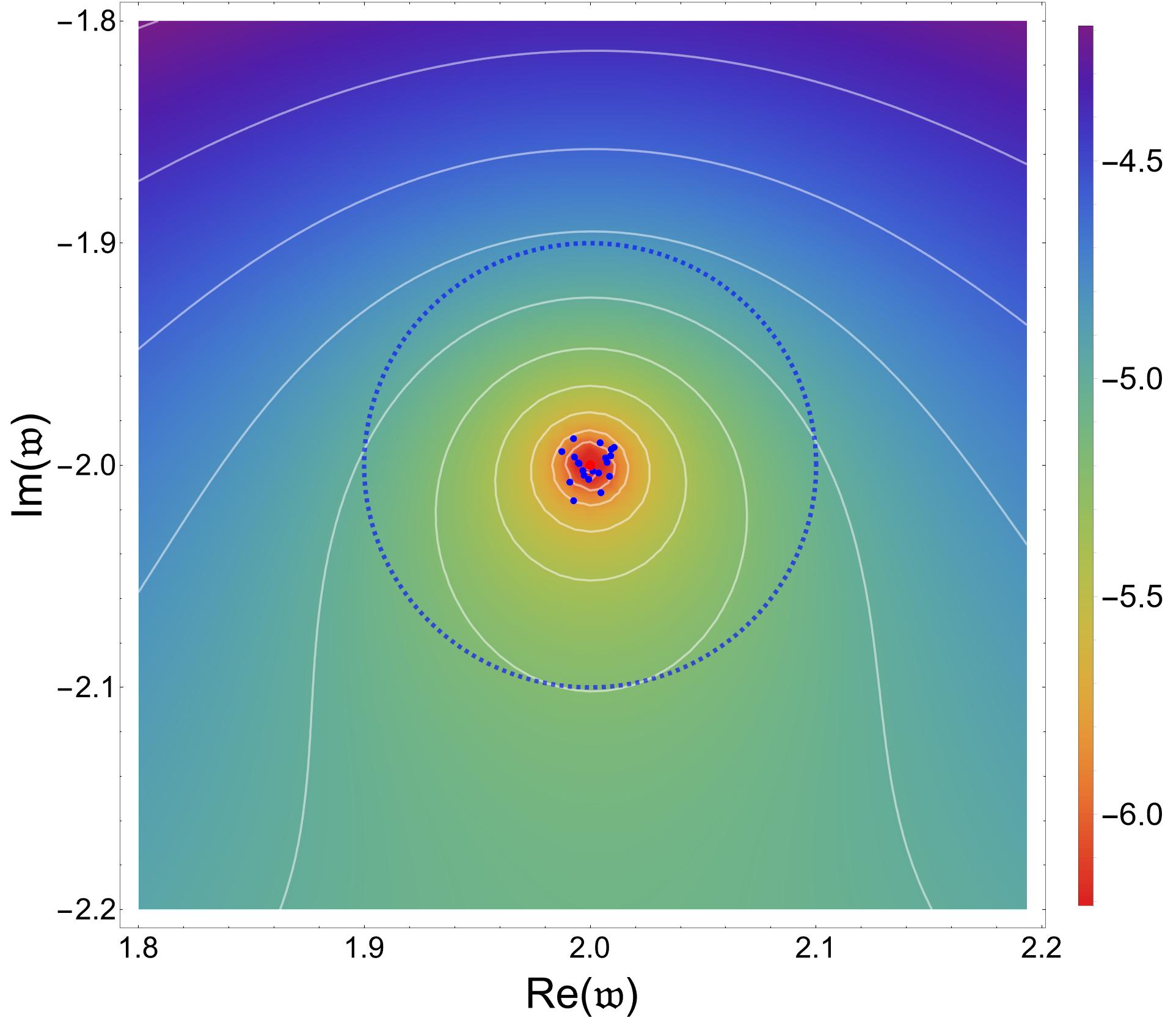}
            \captionsetup{justification=centering}
            \caption{$\mathfrak{q}=0$.}
            \label{fig:CloseupPseudoGFq0}
        \end{subfigure}\hfill
        \begin{subfigure}[b]{0.48\linewidth}
            \centering
            \includegraphics[width=\linewidth]{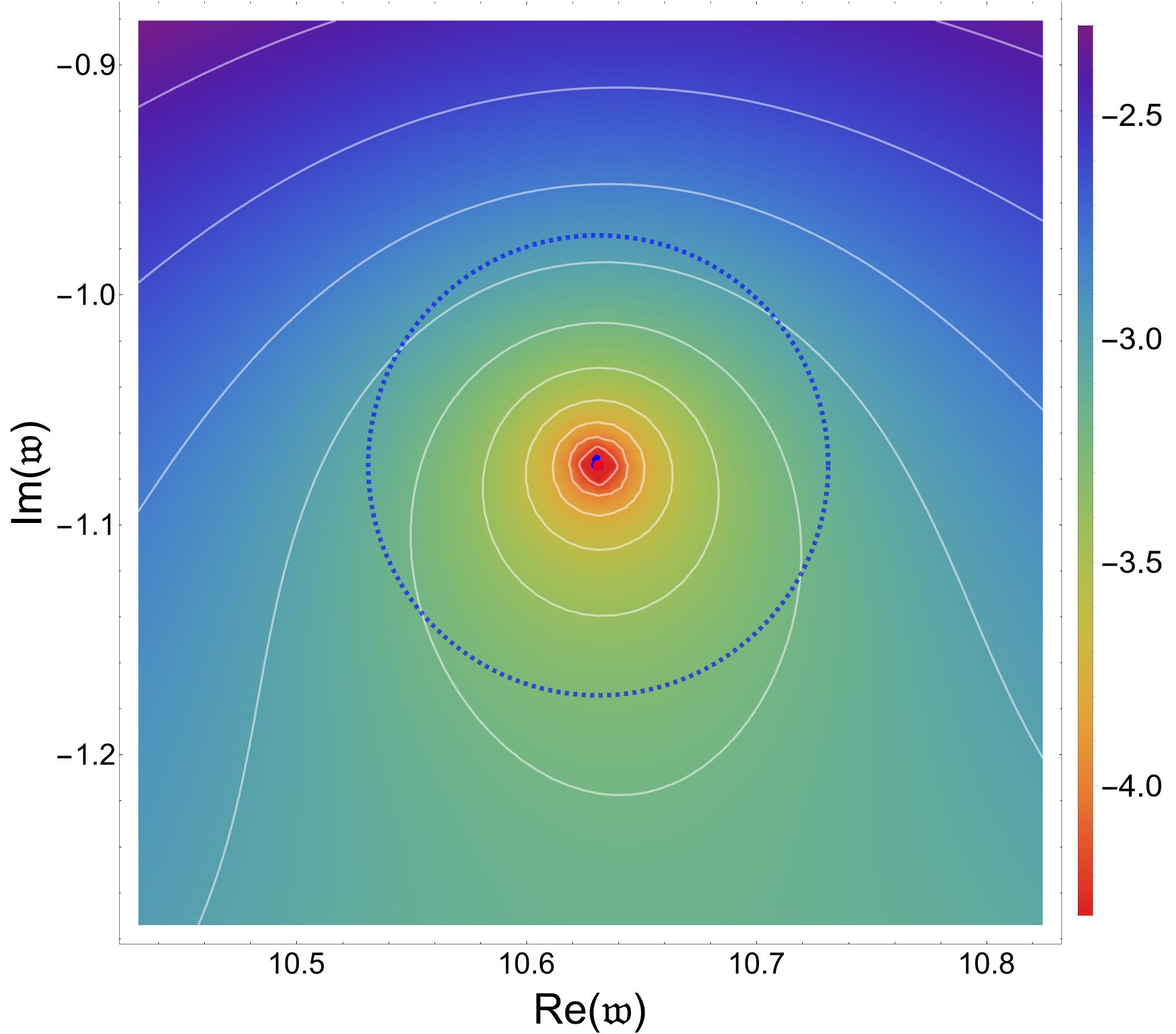}
            \captionsetup{justification=centering}
            \caption{$\mathfrak{q}=10$.}
            \label{fig:CloseupPseudoGFq10}
        \end{subfigure}
        \caption{Close-up of the transverse gauge field pseudospectrum in the energy norm around the first QNF for different values of $\mathfrak{q}$. The red dot corresponds to the QNF, the white lines represent the boundaries of various full $\varepsilon$-pseudospectra, and the dashed blue circle symbolizes a circle with a radius of $10^{-1}$ centered on the QNF. The heat map corresponds to the logarithm in base 10 of the inverse of the resolvent, while the blue dots indicate selective $\varepsilon$-pseudospectra computed with random local potential perturbations.}%
        \label{fig:CloseupPseudospectraGF}
    \end{figure}

    \begin{figure}[htb]
        \centering
        \begin{subfigure}[b]{0.48\linewidth}
            \centering
            \includegraphics[width=\linewidth]{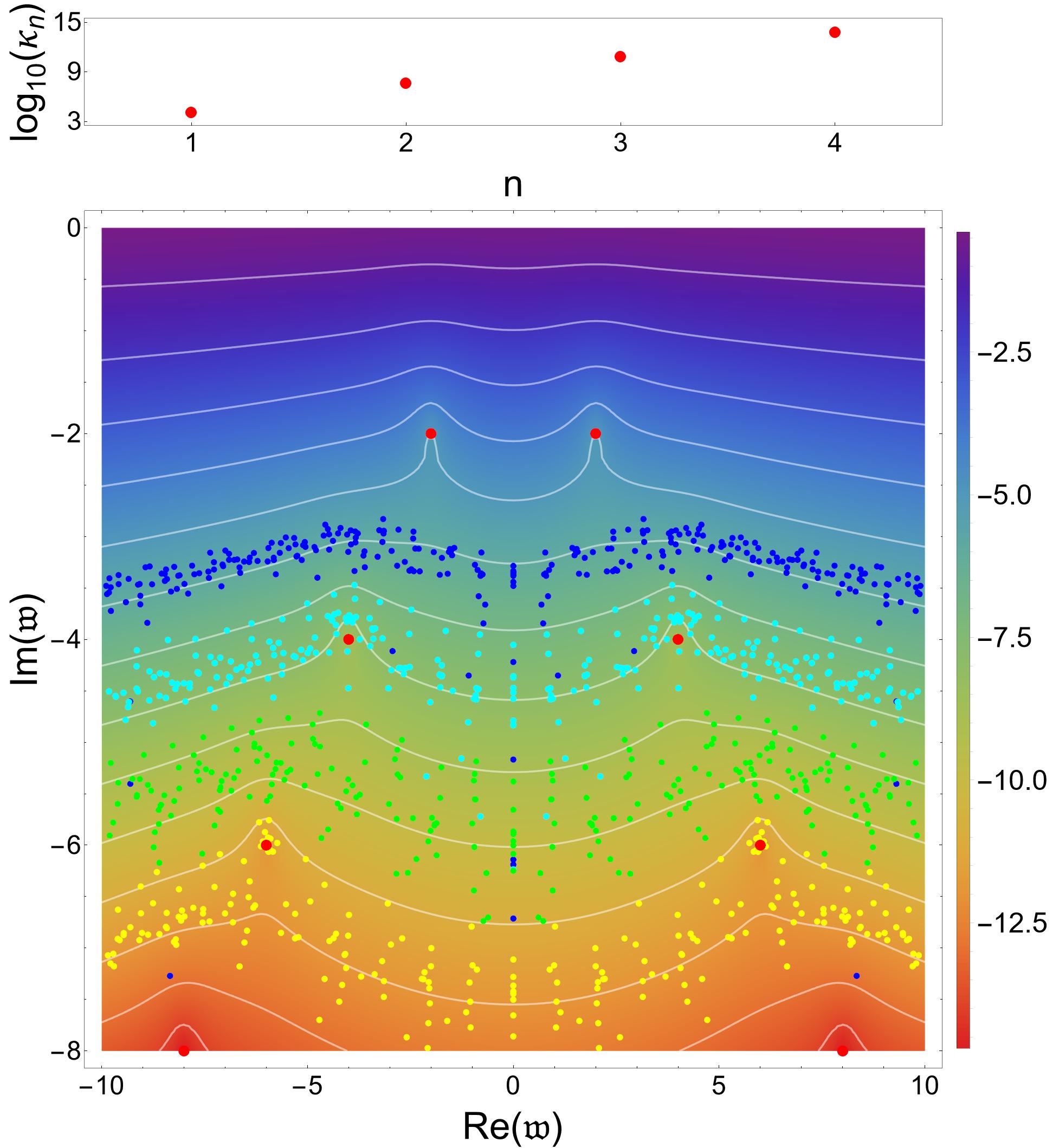}
            \captionsetup{justification=centering}
            \caption{$\mathfrak{q}=0$.}
            \label{fig:LargePseudoGFq0}
        \end{subfigure}\hfill
        \begin{subfigure}[b]{0.48\linewidth}
            \centering
            \includegraphics[width=\linewidth]{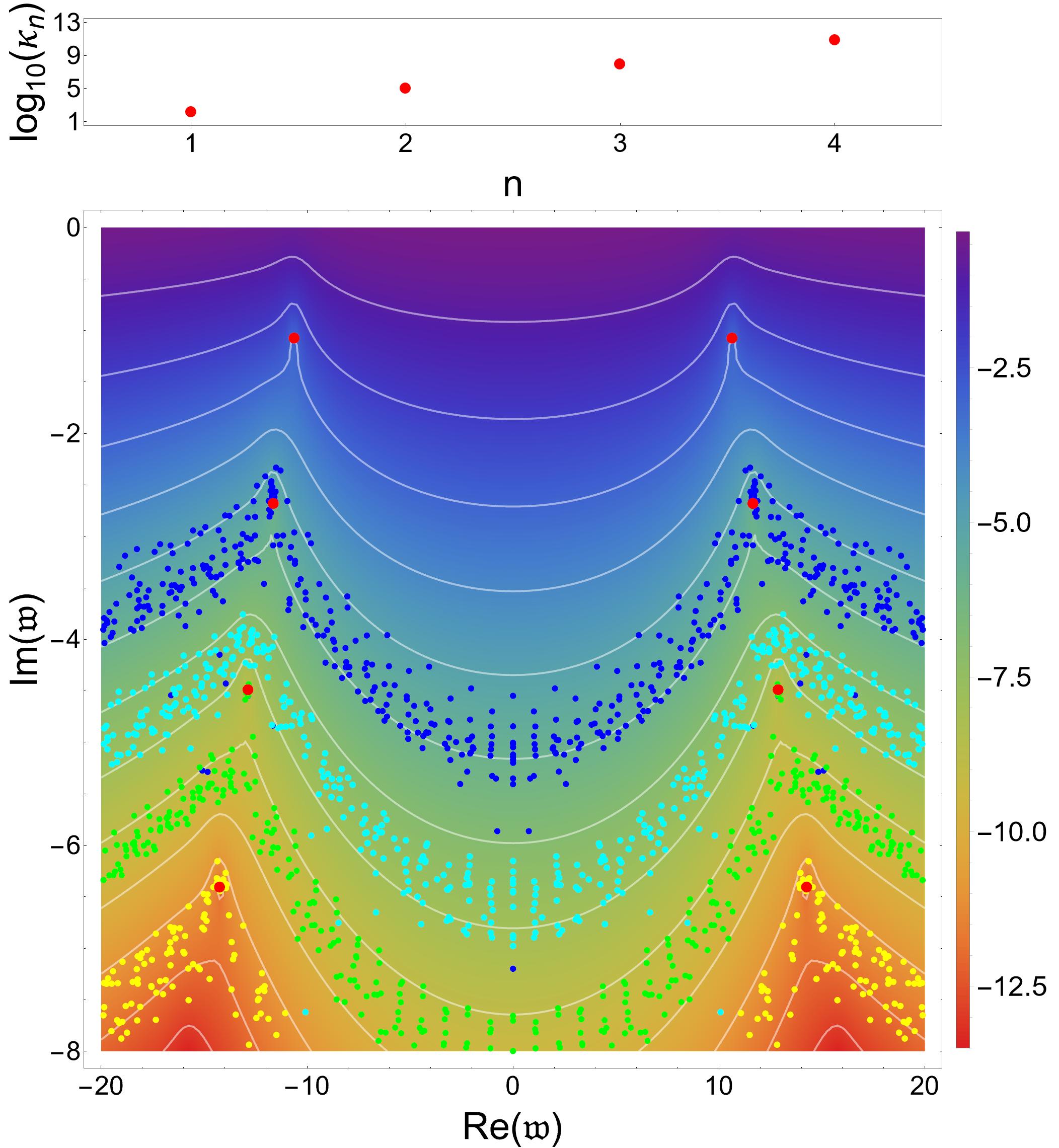}
            \captionsetup{justification=centering}
            \caption{$\mathfrak{q}=10$.}
            \label{fig:LargePseudoGFq10}
        \end{subfigure}
        \caption{Transverse gauge field pseudospectrum in the energy norm for different values of $\mathfrak{q}$. In the lower panels, we present selective and full pseudospectra. The red dots represent the QNFs, and the white lines denote the boundaries of different full $\varepsilon$-pseudospectra. The heat map corresponds to the logarithm in base 10 of the inverse of the resolvent, while the blue, cyan, green, and yellow dots indicate different selective $\varepsilon$-pseudospectra computed with random local potential perturbations of size $10^{-1}$, $10^{-3}$, $10^{-5}$, and $10^{-7}$; respectively. In the upper panels, we represent the condition numbers.}
        \label{fig:LargePseudospectraGF}
    \end{figure}

    \begin{figure}[htb]
        \centering
        \begin{subfigure}[b]{0.49\linewidth}
            \centering
            \includegraphics[width=\linewidth]{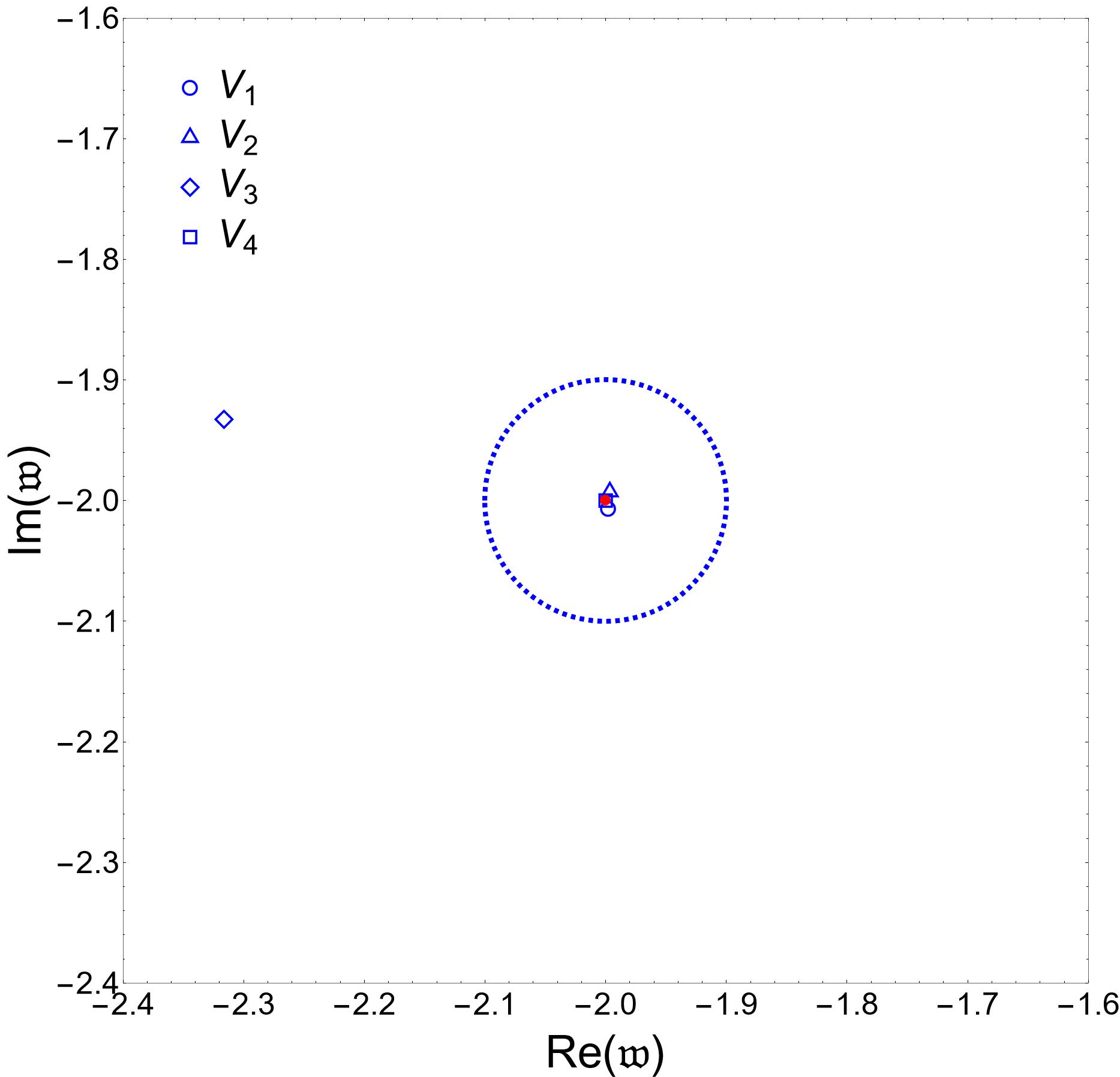}
            \captionsetup{justification=centering}
            \caption{$\mathfrak{q}=0$.}
            \label{fig:CloseupDeterministicGFq0}
        \end{subfigure}\hfill
        \begin{subfigure}[b]{0.49\linewidth}
            \centering
            \includegraphics[width=\linewidth]{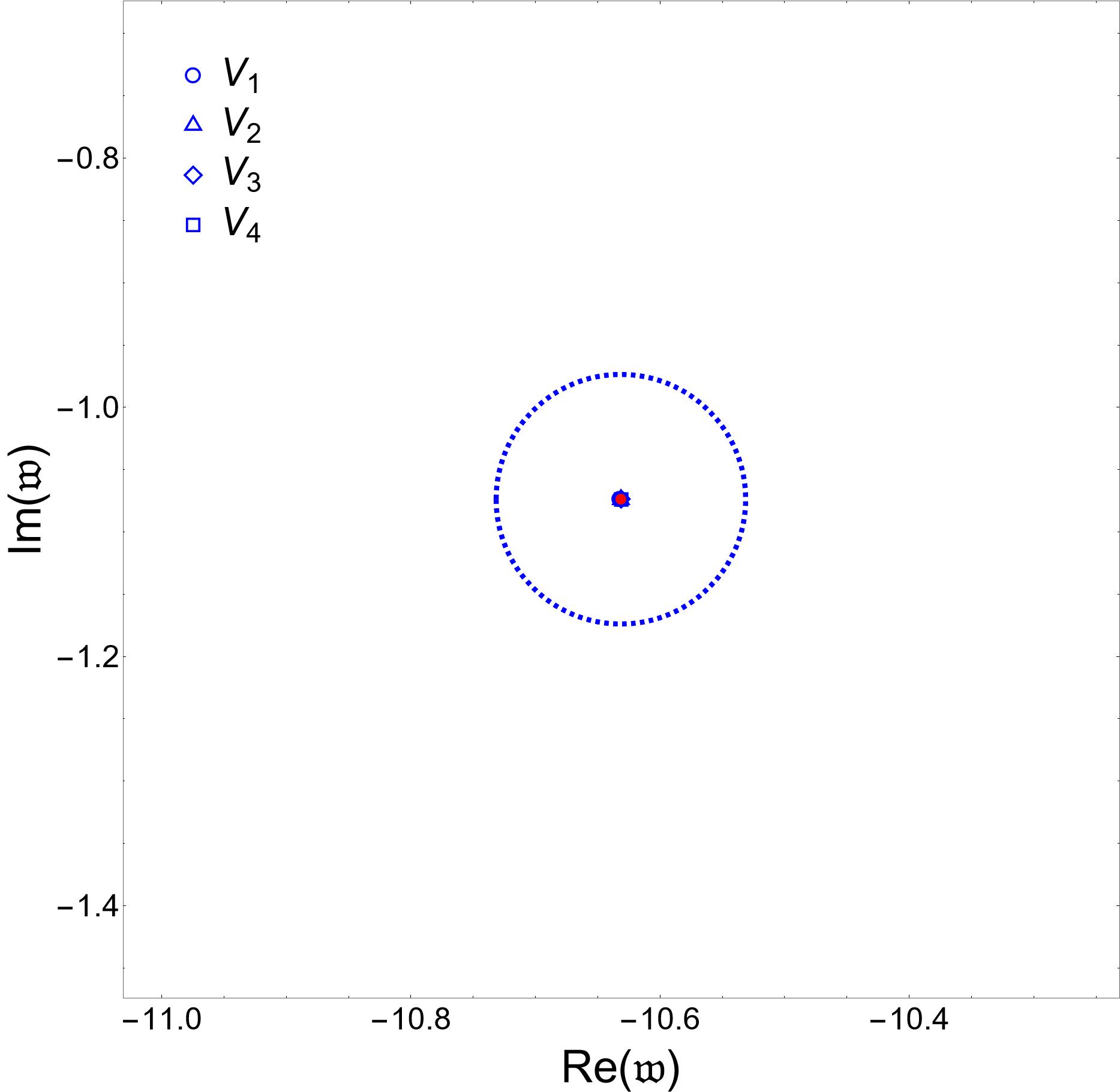}
            \captionsetup{justification=centering}
            \caption{$\mathfrak{q}=10$.}
            \label{fig:CloseupDeterministicGFq10}
        \end{subfigure}
        \begin{subfigure}[b]{0.49\linewidth}
            \centering
            \includegraphics[width=\linewidth]{Images/LargeDeterministicMasslessScalarq0.jpg}
            \captionsetup{justification=centering}
            \caption{$\mathfrak{q}=0$.}
            \label{fig:LargeDeterministicGFq0}
        \end{subfigure}\hfill
        \begin{subfigure}[b]{0.49\linewidth}
            \centering
            \includegraphics[width=\linewidth]{Images/LargeDeterministicMasslessScalarq10.jpg}
            \captionsetup{justification=centering}
            \caption{$\mathfrak{q}=10$.}
            \label{fig:LargeDeterministicGFq10}
        \end{subfigure}
        \caption{Effect on the transverse gauge field spectrum of the deterministic perturbations %
        \eqref{eq:deterministicpots}
        with size $\norm{V_i}_E=10^{-1}$. The unperturbed QNF is shown in red, while the perturbed QNFs are depicted in blue. In (a) and (b) the dashed blue line represents the circle of radius $10^{-1}$ centered in the unperturbed QNF. In the upper panels of (c) and (d) we present the condition numbers for the lowest QNFs.}
        \label{fig:DeterministicGF}
    \end{figure}

    \begin{figure}[htb]
        \centering
        \begin{subfigure}[b]{0.48\linewidth}
            \centering
            \includegraphics[width=\linewidth]{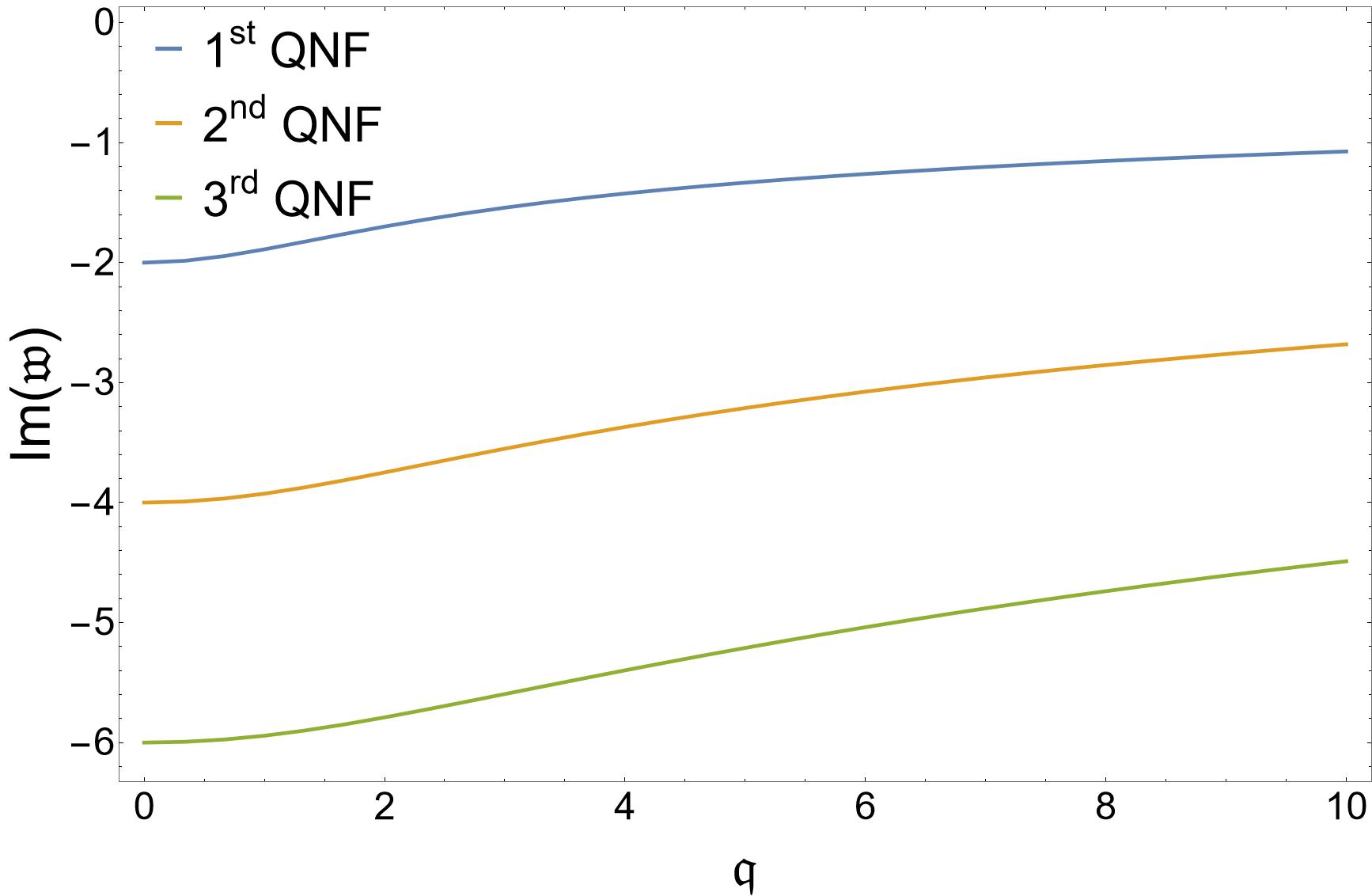}
            \captionsetup{justification=centering}
            \caption{Imaginary part of the QNFs.}
            \label{fig:ImQNFGF}
        \end{subfigure}\hfill
        \begin{subfigure}[b]{0.48\linewidth}
            \centering
            \includegraphics[width=\linewidth]{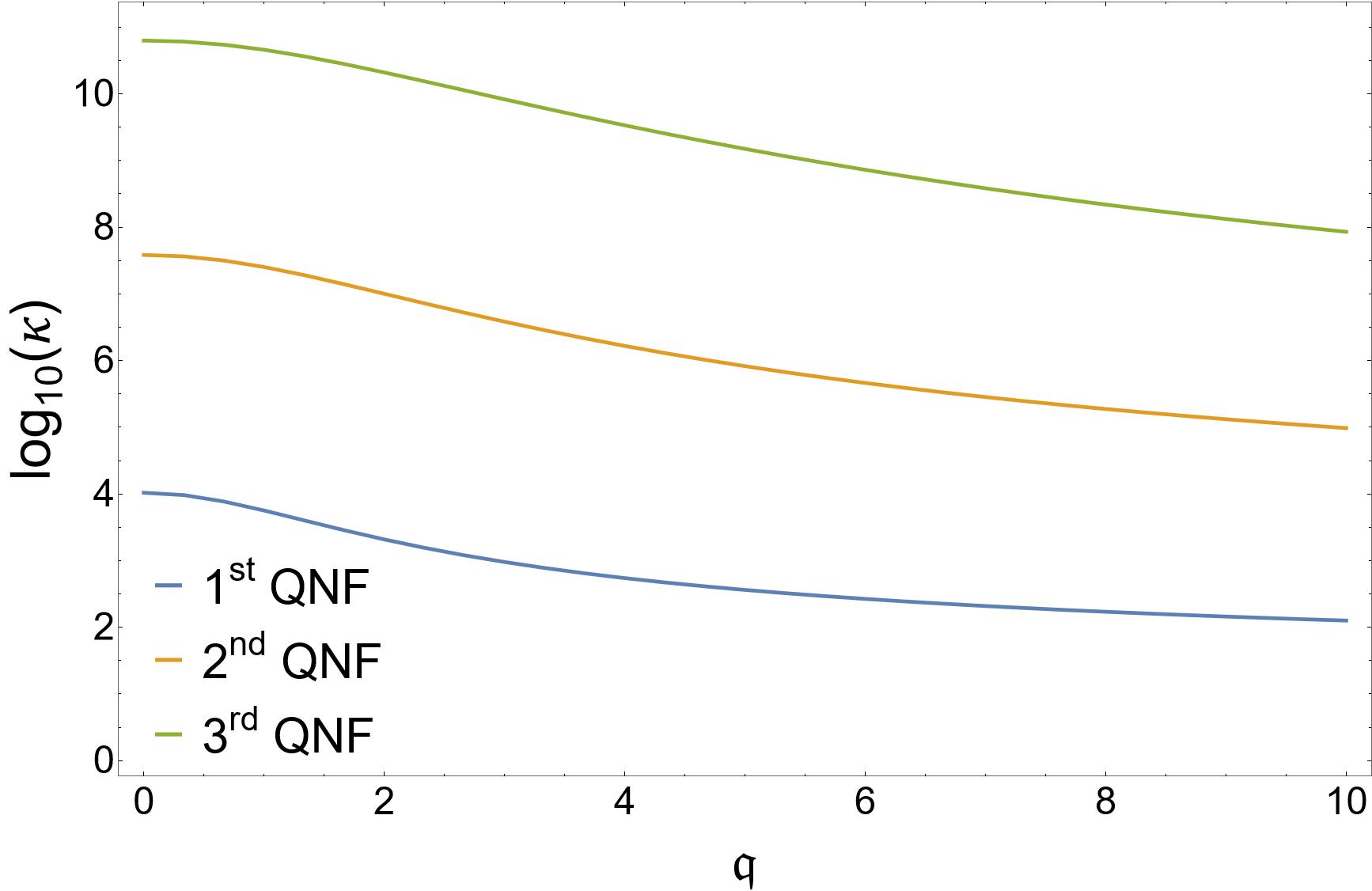}
            \captionsetup{justification=centering}
            \caption{Condition numbers.}
            \label{fig:ConditionNumbersGF}
        \end{subfigure}
        \caption{Momentum dependence in the energy norm for the transverse gauge field.}
        \label{fig:qDependenceGF}
    \end{figure}

First, in figures~\ref{fig:CloseupPseudospectraGF} and~\ref{fig:LargePseudospectraGF} we display the pseudospectrum  of the transverse gauge field. The presence of open contour lines around the QNFs indicates that they are unstable under generic perturbations. 
However, 
for the first QNF we have not found statistical evidence of its instability under random local potential perturbations (see the selective pseudospectra in figure~\ref{fig:CloseupPseudoGFq0}). It is interesting to note that this is similar to what we found for the scalar field of mass $m^2 l^2=-3$ (see figure~\ref{fig:CloseupPseudoMassiveScalarq0}). From the conformal field theory point of view both of these fields correspond to operators of conformal dimension three.

In order to further characterize the nature of the instability, in figure~\ref{fig:DeterministicGF} we show the effect of the deterministic perturbations \eqref{eq:deterministicpots} on the QNF spectrum. 
At zero momentum we find that the spectrum is unstable under the perturbations \eqref{eq:determinisiticV2} and \eqref{eq:determinisiticV3}. In particular in figure~\ref{fig:CloseupDeterministicGFq0} the first QNF is shown to be unstable under near-horizon perturbations \eqref{eq:determinisiticV3}. Moreover, from studying the evolution of the condition numbers shown in that figure, one can conclude that near-horizon and short $\rho$-wavelength perturbations result in less unstable QNFs.

As in the previous section, it is also interesting to analyze the dependence of the stability of the QNFs on the momentum of the fluctuation.
Therefore, in figure~\ref{fig:qDependenceGF} we plot the momentum dependence of the QNFs and their condition numbers. The QNFs get closer to the real axis as  momentum increases and, as for the real scalar, we observe that their condition numbers decrease in the process (although staying above one for the range of momentum studied).

    Despite the qualitative similarities, it is worth noting that the actual values of the pseudospectra do not match the cases studied in the previous subsection. Specifically, we observe that the gauge field is more stable than the massless scalar and less stable than the scalar with mass $m^2l^2=-3$. This agrees with the instability being related to the imaginary part of the QNFs. The gauge field QNFs are closer to the real axis than those of the massless scalar and further away than those of the scalar with mass $m^2l^2=-3$.

Finally, it is illustrative to analyze the pseudospectrum in the $L^2$-norm. We present our results in appendix~\ref{app:l2norm}. Again, we observe that in the $L^2$-norm the stability is enhanced under local potential perturbations and reduced under generic perturbations.

    \section{Conclusions}\label{section:Conclusions}

    In this work we have initiated the study of the spectral stability of quasinormal modes in asymptotically AdS black hole geometries. Gauge/gravity duality maps the frequencies of those modes (QNFs) to the spectrum of collective excitations of a strongly coupled quantum many-body system. Therefore, probing the spectral stability of QNFs is akin to probing the stability of the spectra of strongly coupled holographic field theories.

    Following previous works
    in asymptotically flat \cite{Jaramillo:2020tuu,Destounis:2021lum} and de Sitter spacetimes \cite{Sarkar:2023rhp}, we probed spectral stability through pseudospectra and condition numbers. To construct the pseudospectra we needed to choose a physically motivated norm and cast the equations of motion for the quasinormal modes in the form of an eigenvalue problem. Consequently, in section \ref{section:Quasinormal Modes and Pseudospectra} we proposed a generic approach valid for static black brane backgrounds, a family of spacetimes very prominent in the literature. Our approach relied on two cornerstones. First, the regular coordinates, which enabled us to write the equations of motion as an eigenvalue problem while also translating outgoing boundary conditions to regularity conditions on the horizon. And second, the energy norm, which represents the physically relevant notion of size for the QNMs.

    After discussing the general approach, we proceeded to study the stability of a real scalar field and a transverse gauge field in SAdS$_{4+1}$. As previously observed in asymptotically flat \cite{Jaramillo:2020tuu,Destounis:2021lum} and dS spacetimes \cite{Sarkar:2023rhp}, we found that the QNFs are unstable and that the instability is associated with the existence of a horizon. Moreover, we observed that QNFs are less unstable
    the closer they are to the real axis, a feature also present in the aforementioned spacetimes.

    Regarding the scalar, we found an increase in stability at high momenta and small masses. This implies that in the dual QFT the instability is milder 
    for high-momentum fluctuations of operators with small mass dimension.
     In order to further study the nature of the instability, we probed the spectrum of QNFs with local potential perturbations which could arise as effective interactions representing deviations from the exact SAdS$_{4+1}$ background. Remarkably, we inferred that only a small portion of the total instability is associated with these perturbations.
    Moreover, we concluded that small perturbations cannot drive the QNFs to the upper half of the complex plane.
    We also found that for small momenta, the first QNF is unstable, with the instability arising from localized perturbations near the horizon. 
    Thus, small deformations of the IR can destabilize even the lowest QNF.

    With respect to the gauge field, we concluded that its pseudospectrum is qualitatively very similar to that of the scalar field. This is even more pronounced for the scalar of mass $m^2l^2=-3$. In that case both the scalar and the transverse gauge field correspond to operators with the same conformal dimension. The resemblance between the pseudospectra of the scalar and the gauge field might be due to the fact that the spectra of both fields are very similar. Exploring this potential connection would indicate a degree of universality, which would be intriguing to investigate in future research.

        We have proven that AdS spacetimes also present spectral instability associated with the existence of an event horizon. This supports the physical intuition that the observed instability is a direct consequence of the non-conservative character of black holes and thus is independent of the asymptotic behavior of the spacetime. 
    
     With regard to gauge/gravity duality, our results imply that thermal excitation spectra of strongly coupled quantum field theories dual to SAdS$_{4+1}$ (or some small deviation from it) are unstable, \textit{i.e.}, small perturbations to the original model have a large effect on the spectra. This suggests that a mathematical model would not be able to accurately capture the actual spectra of the real physical system. That said, it is important to remark that the spectral instability decreases with the decay width of the fluctuations. We expect the imprint of the aforementioned instability to be less pronounced for modes near the real axis. In a practical setting it implies that, at most, one might hope to be able to capture the leading order behavior (dominated by the longest-lived excitation). 
    On the other hand, it might not be possible to model the subleading features of decay trustfully as the higher QNFs are increasingly unstable.

   In the realm of gauge/gravity duality, pseudospectral analysis in asymptotic AdS geometries has many applications with potentially deep implications. We list some of them in the following.

    \begin{itemize}
        \item Study the stability of hydrodynamic modes.
        As we have seen, the closer a QNF is to the real axis, the less unstable it becomes. In particular that might indicate that hydrodynamic modes enjoy 
        privileged stability properties in the pseudospectrum. On the other hand, if this effect is not sufficiently pronounced it would be interesting to see whether small local perturbations at zero momentum could drive the hydro mode's QNF to the upper half plane indicating instabilities of flow patterns. In fact such instabilities have been argued to arise in Navier-Stokes equations and explain the early onset of turbulence \cite{trefethen1993hydrodynamic}.
        
        \item Study the stability of the collision between the hydro mode's QNF and the first non-hydro mode's QNF. This would shed a new perspective on the validity of the hydrodynamic approximation. Currently, the hydrodynamic expansion is postulated to be valid up to the energy where the aforementioned collision happens \cite{Withers:2018srf,GrozdanovKovtun:2019,Grozdanov:2019}. Consequently, finding instability would indicate that the limit of validity itself is unstable.
        A similar comment applies to pole skipping points. These are points at which the residue of a pole of a holographic Green's function vanishes \cite{Amado:2007yr,Amado:2008ji}.
        Such points are of particular interest in relation to quantum chaos \cite{Grozdanov:2017ajz, Blake:2017ris, Blake:2018leo}.

        \item Study the spectral stability at the phase transition in the holographic superconductor model \cite{Hartnoll:2008vx}. It would be interesting to see whether the critical temperature is stable under the generic perturbations captured by the pseudospectrum.

        \item Study the spectral stability of AdS Reissner-Nordström black branes. The pseudospectrum analysis of charged black branes might have important consequences for the dual description of quantum critical phases of matter~\cite{zaanen2015holographic,Hartnoll:2018xxg}.
        
        \item Study the spectral stability of other solutions frequently used in holography. Beyond completing the current literature of QNFs in AdS spacetimes, it would be interesting to explore the existence of some degree of universality in the pseudospectrum. In particular, as spectra in AdS typically share the "Christmas Tree" structure observed in the current work (see, for instance, \cite{Kovtun:2005ev,Amado:2009ts}), this study would answer whether the overarching features observed here are fundamental to the structure of the spectrum or depend heavily on the particular setup.
    
        \item It has recently been suggested that quasinormal modes can serve as a signature for having successfully simulated (quantum) gravity on a (quantum) computer \cite{Biggs:2023sqw, Maldacena:2023acv}. In view of our results, this idea suffers from similar problems as the idea of black hole spectroscopy from gravitational wave signals. If such (quantum) simulations can be realized, the task of identifying the quasinormal modes will be subject to similar difficulties as for astrophysical black holes (see \cite{Jaramillo:2021tmt} for a relevant discussion of this issue).

        \item A different way of studying the poles of holographic Green's functions is to look for complex solutions for the momenta upon fixing the frequency to be a real number \cite{Amado:2007pv}. 
        This is particularly important for the diffusive mode and has implications for causality \cite{Landsteiner:2012gn} (see also the recent preprint \cite{Gavassino:2023mad}). It would be interesting to study the pseudospectra of these complex momentum modes.

    \end{itemize}

\section*{Acknowledgements}
We thank J.L. Jaramillo and V. Boyanov for valuable discussions, and the anonymous referee for the constructive comments which led to the inclusion of section \ref{ssec:Spectral and Spacetime Instability}.
The work of D.A and K.L.
is supported through the grants CEX2020-001007-S and PID2021-123017NB-100, PID2021-127726NB-I00 funded by MCIN/AEI/10.13039/501100011033 and by ERDF ``A way of making Europe''. The work of D.G.F. is supported by JAEIntroICU-2022-IFT-02.

\appendix
    
    \section{Computation of \texorpdfstring{$L^\dagger$}{Ldagger}}\label{app:Computation of Ldagger}
    Here we present a detailed computation of the adjoint operator $L^\dagger$ for the real scalar field (equation \eqref{eq:Ldag Scalar}) and transverse gauge field (equation \eqref{eq:Ldag GF}). 
    \subsection{Real Scalar Field}\label{subannex:Computation of Ldagger scalar}
    For the real scalar field, the differential operator $L$ is given by
    \begin{equation}
        L= i\begin{pmatrix}
        0&1 \\
        L_1\left[\partial_\mathscr{\rho}^2,\partial_\mathscr{\rho};\boldsymbol{\mathfrak{q}},\rho\right]&L_2\left[\partial_\rho;\boldsymbol{\mathfrak{q}},\rho\right]  
        \end{pmatrix}\,,
    \end{equation}
    with
    \begin{align}
        L_1\left[\partial_\rho^2,\partial_\rho;\boldsymbol{\mathfrak{q}},\rho\right]&=\left[\mathscr{f}(\rho)-2\right]^{-1}\left[\frac{m^2 l^2}{(1-\rho)^2}+\boldsymbol{\mathfrak{q}}^2-(1-\rho)^3\left(\frac{\mathscr{f}(\rho) }{(1-\rho)^3}\right)'\partial_{\rho} -\mathscr{f}(\rho)\partial_{\rho}^2 \right]\,,\\[\medskipamount]
        L_2\left[\partial_\rho;\boldsymbol{\mathfrak{q}},\rho\right]&=\left[\mathscr{f}(\rho)-2\right]^{-1}\left[(1-\rho)^3\left(\frac{\mathscr{f}(\rho)-1}{(1-\rho)^3}\right)' +2 \left(\mathscr{f}(\rho)-1\right)\partial_{\rho}  \right]\,,
    \end{align}
    and the inner product induced by the energy norm is:
    \begin{align}
        \expval{\phi_1,\phi_2}_E=&\int_0^1\frac{d\rho}{(1-\rho)^3}\Biggl[\left(\frac{m^2l^2}{(1-\rho)^2}+\boldsymbol{\mathfrak{q}}^2 \right)\overline{\phi_1}(\mathscr{t},\mathbf{k},\rho)\phi_2(\mathscr{t},\mathbf{k},\rho)+ \nonumber\\
        &+\mathscr{f}(\rho)\partial_\rho\overline{\phi_1}(\mathscr{t},\mathbf{k},\rho)\partial_\rho\phi_2(\mathscr{t},\mathbf{k},\rho)-\left(\mathscr{f}(\rho)-2\right)\overline{\psi_1}(\mathscr{t},\mathbf{k},\rho)\psi_2(\mathscr{t},\mathbf{k},\rho)\Biggr]\,.
    \end{align}

    For notational convenience, we drop the explicit dependence of all functions and denote $\rho$ derivatives with a prime. 

    Now, recalling the equality:
    \begin{equation}
        \expval{L^\dagger\left[\phi_1\right],\phi_2}_E=\expval{\phi_1,L\left[\phi_2\right]}_E\,,
    \end{equation} 
    we can compute $L^\dagger$ integrating by parts
    
    \begingroup
    \allowdisplaybreaks
    \begin{align}\label{eq:Ldag Scalar 1}
        \expval{\phi_1,L\left[\phi_2\right]}_E=&\,i\int_0^1\frac{d\rho}{(1-\rho)^3}\Biggl[\left(\frac{m^2l^2}{(1-\rho)^2}+\boldsymbol{\mathfrak{q}}^2 \right)\overline{\phi_1}\psi_2+\overline{\phi_1}'\mathscr{f} \psi_2'- \overline{\psi_1}\left\{\left(\frac{m^2 l^2}{(1-\rho)^2}+\boldsymbol{\mathfrak{q}}^2\right)\phi_2\right.\nonumber\\
        &\left. -(1-\rho)^3\left(\frac{\mathscr{f} \phi_2'}{(1-\rho)^3}\right)'+(1-\rho)^3\left(\frac{(\mathscr{f}-1)\psi_2}{(1-\rho)^3}\right)'+\left(\mathscr{f}-1\right)\psi_2' \right\}\Biggr]\nonumber\\
        =&-i\int_0^1\frac{d\rho}{(1-\rho)^3}\Biggl[\left(\frac{m^2l^2}{(1-\rho)^2}+\boldsymbol{\mathfrak{q}}^2 \right)\phi_2\overline{\psi_1}+\phi_2'\mathscr{f}\overline{\psi_1}' - \psi_2\left\{\left(\frac{m^2 l^2}{(1-\rho)^2}+\boldsymbol{\mathfrak{q}}^2\right)\overline{\phi_1}\right. \nonumber\\
        &\left. -(1-\rho)^3\left(\frac{\mathscr{f} \overline{\phi_1}'}{(1-\rho)^3}\right)'+(1-\rho)^3\left(\frac{(\mathscr{f}-1)\overline{\psi_1}}{(1-\rho)^3}\right)'+\left(\mathscr{f}-1\right)\overline{\psi_1}' \right\}\Biggr]\nonumber\\
        &+\left. i\frac{\mathscr{f}\overline{\phi_1}'\psi_2}{(1-\rho)^3}\right|_{\rho=0}^{\rho=1}+\left. i\frac{\mathscr{f}\overline{\psi_1}\phi_2'}{(1-\rho)^3}\right|_{\rho=0}^{\rho=1}-\left. 2i\frac{(\mathscr{f}-1)\overline{\psi_1}\psi_2}{(1-\rho)^3}\right|_{\rho=0}^{\rho=1}\nonumber\\
        =&\expval{L\left[\phi_1\right],\phi_2}_E+\left. i\frac{\mathscr{f}\overline{\phi_1}'\psi_2}{(1-\rho)^3}\right|_{\rho=0}^{\rho=1}+\left. i\frac{\mathscr{f}\overline{\psi_1}\phi_2'}{(1-\rho)^3}\right|_{\rho=0}^{\rho=1}-\left. 2i\frac{(\mathscr{f}-1)\overline{\psi_1}\psi_2}{(1-\rho)^3}\right|_{\rho=0}^{\rho=1}\nonumber\\
        =&\expval{L^\dagger\left[\phi_1\right],\phi_2}_E\,.
    \end{align}
    \endgroup
    
    The boundary terms vanish for $\rho=1$ as the chosen function space satisfies the boundary condition \eqref{eq:function space scalar}. However at $\rho=0$, we have $\mathscr{f}(0)=0$, and we get a non-zero contribution arising from the last term. Then, the final expression for \eqref{eq:Ldag Scalar 1} is given by:
    \begin{align}
        \expval{\phi_1,L\left[\phi_2\right]}_E=&\expval{L\left[\phi_1\right],\phi_2}_E-\left. 2i\frac{(\mathscr{f}-1)\overline{\psi_1}\psi_2}{(1-\rho)^3}\right|_{\rho=0}\nonumber\\
        =&\expval{L\left[\phi_1\right],\phi_2}_E-i\int_0^1\frac{d\rho}{(1-\rho)^3}(\mathscr{f}-2)\left[2\delta(\rho)\frac{\mathscr{f}(\rho)-1}{\mathscr{f}(\rho)-2}\right]\overline{\psi_1}\psi_2\nonumber\\
        =&\expval{L\left[\phi_1\right],\phi_2}_E + \expval{\delta L\left[\phi_1\right],\phi_2}_E =\expval{L^\dagger\left[\phi_1\right],\phi_2}_E\,,
    \end{align}
    from where we recover the expression for the adjoint given in equation \eqref{eq:Ldag Scalar}:
    \begin{equation}
        L^\dagger=L+\delta L=L+\begin{pmatrix}
        0&0\\[\medskipamount]
        0&-2i\delta(\rho)\frac{\mathscr{f}(\rho)-1}{\mathscr{f}(\rho)-2}
    \end{pmatrix}\,.
    \end{equation}

    \subsection{Transverse Gauge Field}\label{subannex:Computation of Ldagger GF}
    For the transverse gauge field, the differential operator $L$ is given by
    \begin{equation}
        L= i\begin{pmatrix}
        0&1 \\
        L_1\left[\partial_\mathscr{\rho}^2,\partial_\mathscr{\rho};\boldsymbol{\mathfrak{q}},\rho\right]&L_2\left[\partial_\rho;\boldsymbol{\mathfrak{q}},\rho\right]  
        \end{pmatrix}\,,
    \end{equation}
    with
    \begin{align}
    L_1\left[\partial_\rho^2,\partial_\rho;\boldsymbol{\mathfrak{q}},\rho\right]&=\left[\mathscr{f}(\rho)-2\right]^{-1}\left[\boldsymbol{\mathfrak{q}}^2-(1-\rho)\left(\frac{\mathscr{f}(\rho) }{1-\rho}\right)'\partial_{\rho} -\mathscr{f}(\rho)\partial_{\rho}^2 \right]\,,\\[\medskipamount]
    L_2\left[\partial_\rho;\boldsymbol{\mathfrak{q}},\rho\right]&=\left[\mathscr{f}(\rho)-2\right]^{-1}\left[(1-\rho)\left(\frac{\mathscr{f}(\rho)-1}{(1-\rho)}\right)' +2 \left(\mathscr{f}(\rho)-1\right)\partial_{\rho}  \right]\,,
    \end{align}
    and the inner product induced by the energy norm is:
    \begin{align}
        \expval{a_1,a_2}_E=&\int_0^1\frac{d\rho}{(1-\rho)}\Bigl[\boldsymbol{\mathfrak{q}}^2 \overline{a_1}(\mathscr{t},\mathbf{k},\rho)a_2(\mathscr{t},\mathbf{k},\rho)+\mathscr{f}(\rho)\partial_\rho\overline{a_1}(\mathscr{t},\mathbf{k},\rho)\partial_\rho a_2(\mathscr{t},\mathbf{k},\rho) \nonumber\\
        &-\left(\mathscr{f}(\rho)-2\right)\overline{\alpha_1}(\mathscr{t},\mathbf{k},\rho)\alpha_2(\mathscr{t},\mathbf{k},\rho)\Bigr]\,.
    \end{align}

    As with the scalar, we drop the explicit dependence of all functions, denote $\rho$ derivatives with a prime and compute $L^\dagger$ integrating by parts:
    \begin{align}\label{eq:Ldag GF 1}
        \expval{a_1,L\left[a_2\right]}_E=&i\int_0^1\frac{d\rho}{(1-\rho)^3}\Biggl[\boldsymbol{\mathfrak{q}}^2 \overline{a_1}\alpha_2+\overline{a_1}'\mathscr{f} \alpha_2'- \overline{\alpha_1}\left\{\boldsymbol{\mathfrak{q}}^2 a_2-(1-\rho)\left(\frac{\mathscr{f} a_2'}{(1-\rho)}\right)'\right. \nonumber\\
        &\left. +(1-\rho)\left(\frac{(\mathscr{f}-1)\alpha_2}{(1-\rho)}\right)'+\left(\mathscr{f}-1\right)\alpha_2' \right\}\Biggr]\nonumber\\
        =&-i\int_0^1\frac{d\rho}{(1-\rho)^3}\Biggl[\boldsymbol{\mathfrak{q}}^2 a_2\overline{\alpha_1}+a_2'\mathscr{f} \alpha_1'- \alpha_2\left\{\boldsymbol{\mathfrak{q}}^2 \overline{a_1}-(1-\rho)\left(\frac{\mathscr{f} \overline{a_1}'}{(1-\rho)}\right)'\right. \nonumber\\
        &\left. +(1-\rho)\left(\frac{(\mathscr{f}-1)\overline{\alpha_1}}{(1-\rho)}\right)'+\left(\mathscr{f}-1\right)\overline{\alpha_1}' \right\}\Biggr]\nonumber\\
        &+\left. i\frac{\mathscr{f}\overline{a_1}'\alpha_2}{(1-\rho)}\right|_{\rho=0}^{\rho=1}+\left. i\frac{\mathscr{f}\overline{\alpha_1}a_2'}{(1-\rho)}\right|_{\rho=0}^{\rho=1}-\left. 2i\frac{(\mathscr{f}-1)\overline{\alpha_1}\alpha_2}{(1-\rho)}\right|_{\rho=0}^{\rho=1}\nonumber\\
        =&\expval{L\left[a_1\right],a_2}_E+\left. i\frac{\mathscr{f}\overline{a_1}'\alpha_2}{(1-\rho)}\right|_{\rho=0}^{\rho=1}+\left. i\frac{\mathscr{f}\overline{\alpha_1}a_2'}{(1-\rho)}\right|_{\rho=0}^{\rho=1}-\left. 2i\frac{(\mathscr{f}-1)\overline{\alpha_1}\alpha_2}{(1-\rho)}\right|_{\rho=0}^{\rho=1}\nonumber\\
        =&\expval{L^\dagger\left[a_1\right],a_2}_E\,.
    \end{align}
The boundary terms vanish for $\rho=1$ as the chosen function space satisfies the boundary condition \eqref{eq:function space GF}. Nonetheless, at $\rho=0$, we have $\mathscr{f}(0)=0$, and we get a non-zero contribution arising from the last term. Then, the final expression for \eqref{eq:Ldag GF 1} is given by:
    \begin{align}
        \expval{a_1,L\left[a_2\right]}_E=&\expval{L\left[a_1\right],a_2}_E-i\int_0^1\frac{d\rho}{(1-\rho)}(\mathscr{f}-2)\left[2\delta(\rho)\frac{\mathscr{f}(\rho)-1}{\mathscr{f}(\rho)-2}\right]\overline{\alpha_1}\alpha_2\nonumber\\
        =&\expval{L\left[a_1\right],a_2}_E + \expval{\delta L\left[a_1\right],a_2}_E =\expval{L^\dagger\left[a_1\right],a_2}_E\,,
    \end{align}
    from where we recover the expression for the adjoint given in equation \eqref{eq:Ldag GF}:
    \begin{equation}
        L^\dagger=L+\delta L=L+\begin{pmatrix}
        0&0\\[\medskipamount]
        0&-2i\delta(\rho)\frac{\mathscr{f}(\rho)-1}{\mathscr{f}(\rho)-2}
    \end{pmatrix}\,.
    \end{equation}

    \section{Discretization in the Chebyshev grid}\label{app:Extra Details on the Discretization.}
    \label{app:discretization}
    In this section we give some more details regarding the discretization process. This discussion is based on the appendix C of \cite{Jaramillo:2020tuu}.
    \subsection{Collocation Method}
    Working in the Chebyshev grid \eqref{eq:Chebyshev grid} with $N+1$ points is equivalent to approximating a function $f(\rho)$ by a power series
    \begin{equation}\label{eq:Chebyshev expansion}
        f(\rho)\approx \sum_{n=0}^{N}c_n T_n(2\rho-1)\,,
    \end{equation}
    where $T_n$ are the Chebyshev polynomials
    \begin{equation}
        T_n(x)=\cos\left(n\arccos\left(x\right)\right)\,.
    \end{equation}

    However, in the numerical method we do not work with the coefficients $\{c_n\}$ but instead with the values of $f(\rho)$ in the points of the grid \eqref{eq:Chebyshev grid}. In order to connect both descriptions, we note that in our grid we have the following orthogonality relation
    \begin{align}
        \int_{0}^{1} \frac{d\rho}{\sqrt{1-(2\rho-1)^2}}\,\,T_m(2\rho-1)T_n(2\rho-1)&=\pi \int_{0}^{1} d \theta\,\,T_m\left(\cos(\theta\pi)\right)T_n\left(\cos(\theta\pi)\right)\nonumber\\ 
        &\approx \frac{\pi}{N}\sum_{j=0}^{N}\frac{T_m\left(\cos\left(\frac{j\pi}{N}\right)\right)T_n\left(\cos\left(\frac{j\pi}{N}\right)\right)}{1+\delta_{j0}+\delta_{jN}}\nonumber\\
        &=\frac{\pi}{N}\sum_{j=0}^{N}\frac{T_m\left(2\rho_j-1\right)T_n\left(2\rho_j-1\right)}{1+\delta_{j0}+\delta_{jN}}\nonumber\\
        &\approx \frac{\pi}{2}\delta_{mn}\left(1+\delta_{n0}+\delta_{nN}\right)\,,
    \end{align}
    which allows us to the approximate the coefficients $\{c_n\}$ as
    \begin{align}\label{eq:cn coefficients Chebyshev}
        c_n&=\frac{2}{1+\delta_{n0}+\delta_{nN}}\int_{0}^{1}d\theta\,\,f\left(\frac{1}{2}-\frac{1}{2}\cos\left(\theta\pi\right)\right) T_n\left(\cos(\theta\pi)\right)\nonumber \\
        &\approx \frac{2/N}{1+\delta_{n0}+\delta_{nN}}\sum_{j=0}^{N}\frac{f\left(\rho_j\right)T_n\left(2\rho_j-1\right)}{1+\delta_{j0}+\delta_{jN}}\,.
    \end{align}

    \subsection{Construction of the \texorpdfstring{$G_E$}{GE} matrix}
    \label{app:gmatrix}
    Knowing how to express the functions in the grid as a sum of Chebyshev polynomials, we can now turn to discussing how to discretize a generic integral
    \begin{equation}\label{eq:generic integral Chebyshev}
        \int_0^{1}d\rho\,\,\overline{f}(\rho)g(\rho)\,.
    \end{equation}

    In order to do so, we first we note that for the Chebyshev polynomials we have
    \begin{equation}
    \int_0^1d\rho\,\,T_n(2\rho-1)=
        \begin{dcases}
            0&  \quad n \text{ odd.}\\[\medskipamount]
            \frac{1}{1-n^2} &   \quad n \text{ even.}
        \end{dcases}
    \end{equation}
    Then, using expression \eqref{eq:cn coefficients Chebyshev}, we can approximate the integral \eqref{eq:generic integral Chebyshev} by
    \begin{align}
        \int_0^{1}d\rho\,\,\overline{f}(\rho)g(\rho)&\approx \sum_{\substack{n=0 \\n \text{ even}}}^N  \frac{1}{1-n^2} \frac{2/N}{1+\delta_{n0}+\delta_{nN}}\sum_{j=0}^{N}\frac{\overline{f}\left(\rho_j\right)g\left(\rho_j\right)T_n\left(2\rho_j-1\right)}{1+\delta_{j0}+\delta_{jN}} \nonumber \\
        &\approx \sum_{j=0}^N \overline{f}\left(\rho_j\right) \left[ \sum_{\substack{n=0 \\n \text{ even}}}^N  \frac{1}{1-n^2} \frac{2/N}{1+\delta_{n0}+\delta_{nN}}\sum_{j=0}^{N}\frac{T_n\left(2\rho_j-1\right)}{1+\delta_{j0}+\delta_{jN}} \right] g\left(\rho_j\right)\nonumber\\
        &= \sum_{i=0}^N \sum_{j=0}^N \mu_{ij} \overline{f}\left(\rho_i\right)  g\left(\rho_k\right)\,,
    \end{align}
    where we have defined the weight matrix $\mu$ as
    \begin{equation}
     \mu_{ij}=\delta_{ij}\sum_{\substack{n=0 \\n \text{ even}}}^N  \frac{1}{1-n^2} \frac{2/N}{1+\delta_{n0}+\delta_{nN}}\sum_{j=0}^{N}\frac{T_n\left(2\rho_j-1\right)}{1+\delta_{j0}+\delta_{jN}}\,. 
    \end{equation}

    With all this, we can now easily construct the matrix $G_E$. Firstly, we factorize the operator $\mathcal{G}$ introduced in equation \eqref{eq:G operator} as a part acting on the left side and another acting on the right side:
    \begin{equation}
        \mathcal{G}\left[\overleftarrow{\partial}_\rho,\overrightarrow{\partial}_\rho;\bold{k},\rho\right]=\begin{pmatrix}                \tilde{\mathcal{H}}_{11}\left[\overleftarrow{\partial}_\rho;\bold{k},\rho\right] &\tilde{\mathcal{H}}_{12}\left[\overleftarrow{\partial}_\rho;\mathbf{k},\rho\right] \\[\medskipamount] \tilde{\mathcal{H}}_{21}\left[\mathbf{k},\rho\right] & \tilde{\mathcal{H}}_{22}\left[\mathbf{k},\rho\right]
        \end{pmatrix}\begin{pmatrix} \mathcal{H}_{11}\left[\overrightarrow{\partial}_\rho;\bold{k},\rho\right] &\mathcal{H}_{12}\left[\mathbf{k},\rho\right] \\[\medskipamount] \mathcal{H}_{21}\left[\overrightarrow{\partial}_\rho;\mathbf{k},\rho\right] & \mathcal{H}_{22}\left[\mathbf{k},\rho\right]
        \end{pmatrix}\,.
    \end{equation}
    Secondly, we disctretize the differential operators $\tilde{\mathcal{H}}_{ab}$ and $\mathcal{H}_{ab}$, obtaining the $(N+1)\times(N+1)$ matrices $\tilde{H}_{ab}$ and $H_{ab}$. And finally, we construct the $G_E$ as
    \begin{equation}
        G_E= \begin{pmatrix} \tilde{H}_{11}&\tilde{H}_{12}\\ \tilde{H}_{21}&\tilde{H}_{22}
        \end{pmatrix}\begin{pmatrix}\mu &0\\[\medskipamount]0&\mu\end{pmatrix}\begin{pmatrix} H_{11}&H_{12}\\ H_{21}&H_{22}
        \end{pmatrix}\,.
    \end{equation}
    Note that, as we wanted, with this definition we first act with $\mathcal{G}$ on the vectors and then integrate (introducing the $\mu$ matrix).

    \subsection{Interpolation Between Grids}
        When integrating as discussed in the previous subsection, we are losing information about the original functions. This can easily be seen from the fact that, despite needing $2(N+1)$ coefficients to describe the approximates of the functions $f(\rho)$ and $g(\rho)$, we have constructed the integral \eqref{eq:generic integral Chebyshev} in a grid of $(N+1)$ points.

        To minimize this effect on $G_E$, we proceed to construct it in a grid with $M+1$ points and then interpolate to the original grid. To achieve this we need to construct the interpolation matrix $\tilde{I}$ which connects both grids. Denoting by $\varrho$ the points of the new grid, $\tilde{I}$ is defined through the following equation
        \begin{equation}
            \sum_{j=0}^{N}\tilde{I}_{ij}f(\rho_j)=f(\varrho_i)=\sum_{n=0}^N \frac{2/N}{1+\delta_{n0}+\delta_{nN}}\sum_{j=0}^{N}\frac{f\left(\rho_j\right)T_n\left(2\rho_j-1\right)}{1+\delta_{j0}+\delta_{jN}} T_n(2\varrho_i-1)\,,
        \end{equation}
        concluding that 
        \begin{equation}
           \tilde{I}_{ij}=\sum_{n=0}^N \frac{2/N}{1+\delta_{n0}+\delta_{nN}}\frac{T_n(2\varrho_i-1) T_n\left(2\rho_j-1\right)}{1+\delta_{j0}+\delta_{jN}}\,.
        \end{equation}

        Then, defining by $G_E^{(M)}$ as the $G_E$ matrix constructed in the grid with $M+1$ points, our final expression for the $G_E$ matrix in the original grid with $(N+1)$ points is:
        \begin{equation}
            G_E= \begin{pmatrix}\tilde{I} &0\\[\medskipamount]0&\tilde{I}\end{pmatrix}^t \,G_E^{(M)}\begin{pmatrix}\tilde{I} &0\\[\medskipamount]0&\tilde{I}\end{pmatrix}\,.
        \end{equation}

 \section{Pseudospectra in the $L^2$-norm}
 \label{app:l2norm}

In this appendix we present results for the pseudospectra of the same models as in the main text, now in the $L^2$-norm
\begin{equation}
    \norm{u}_{L^2}=\int d\rho\,\, u^*(\mathscr{t},\mathbf{k},\rho)\,u(\mathscr{t},\mathbf{k},\rho)\,.
\end{equation}

We shall first consider the real scalar in SAdS$_{4+1}$ introduced in section~\ref{ssec:scalarmodel}.
In figures \ref{fig:CloseupPseudospectraScalarL2} and \ref{fig:LargePseudospectraScalarL2} we present the condition numbers and the full and selective pseudospectra in the $L^2$-norm. We still observe an increasing instability the further away the QNF is from the real axis. The pseudospectra are qualitatively very similar to those observed in the energy norm. However, the overall instability is larger in the $L^2$-norm (in the full pseudospectrum we observe larger regions for the same value of $\varepsilon$). Remarkably, the spectrum is significantly more stable under local potential perturbations. 
    \begin{figure}[!ht]
        \centering
        \begin{subfigure}[b]{0.47\linewidth}
            \centering
            \includegraphics[width=\linewidth]{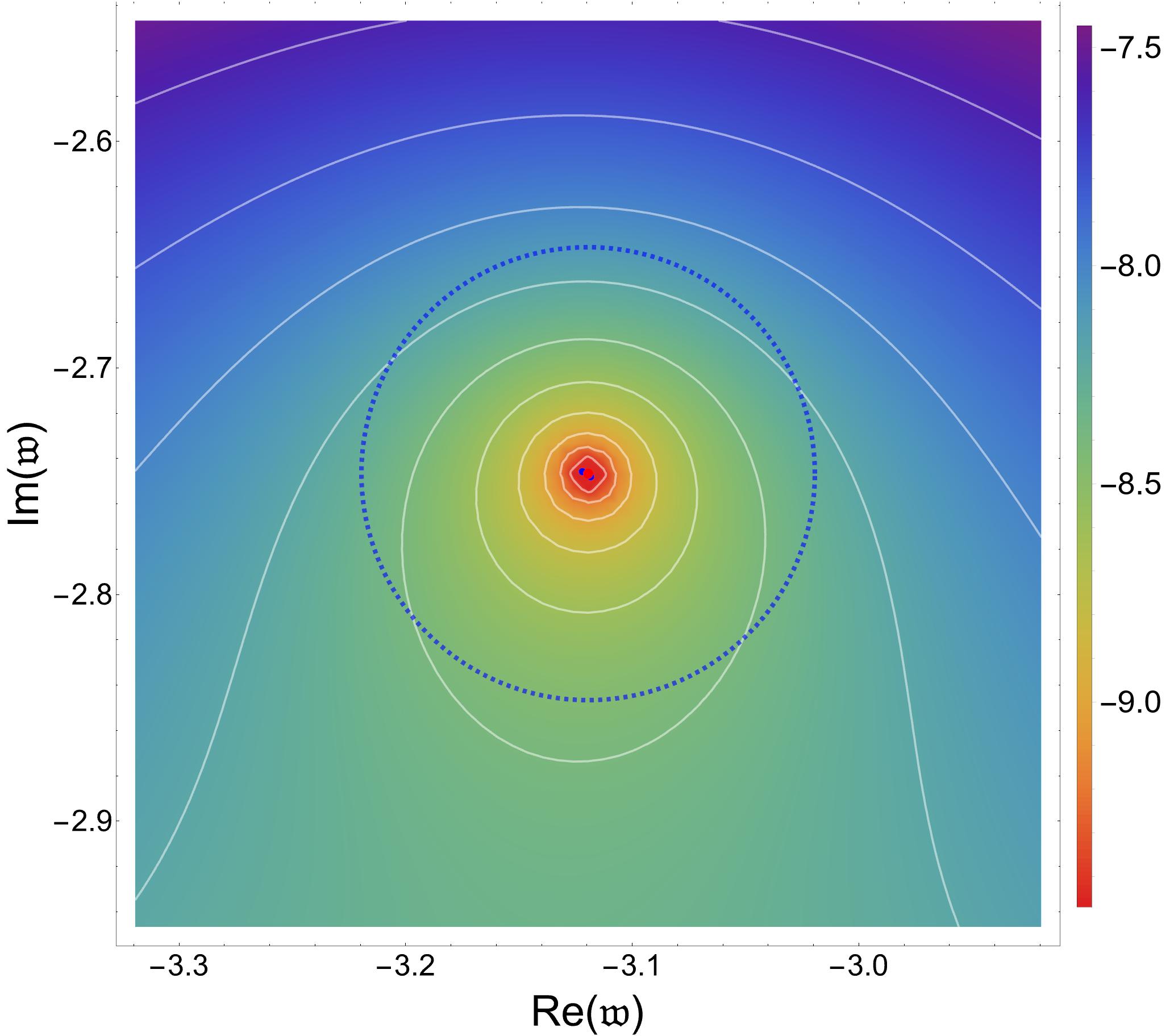}
            \captionsetup{justification=centering}
            \caption{$m^2l^2=0$, $\mathfrak{q}=0$.}
            \label{fig:CloseupPseudoMasslessScalarq0L2}
        \end{subfigure}\hfill
        \begin{subfigure}[b]{0.47\linewidth}
            \centering
            \includegraphics[width=\linewidth]{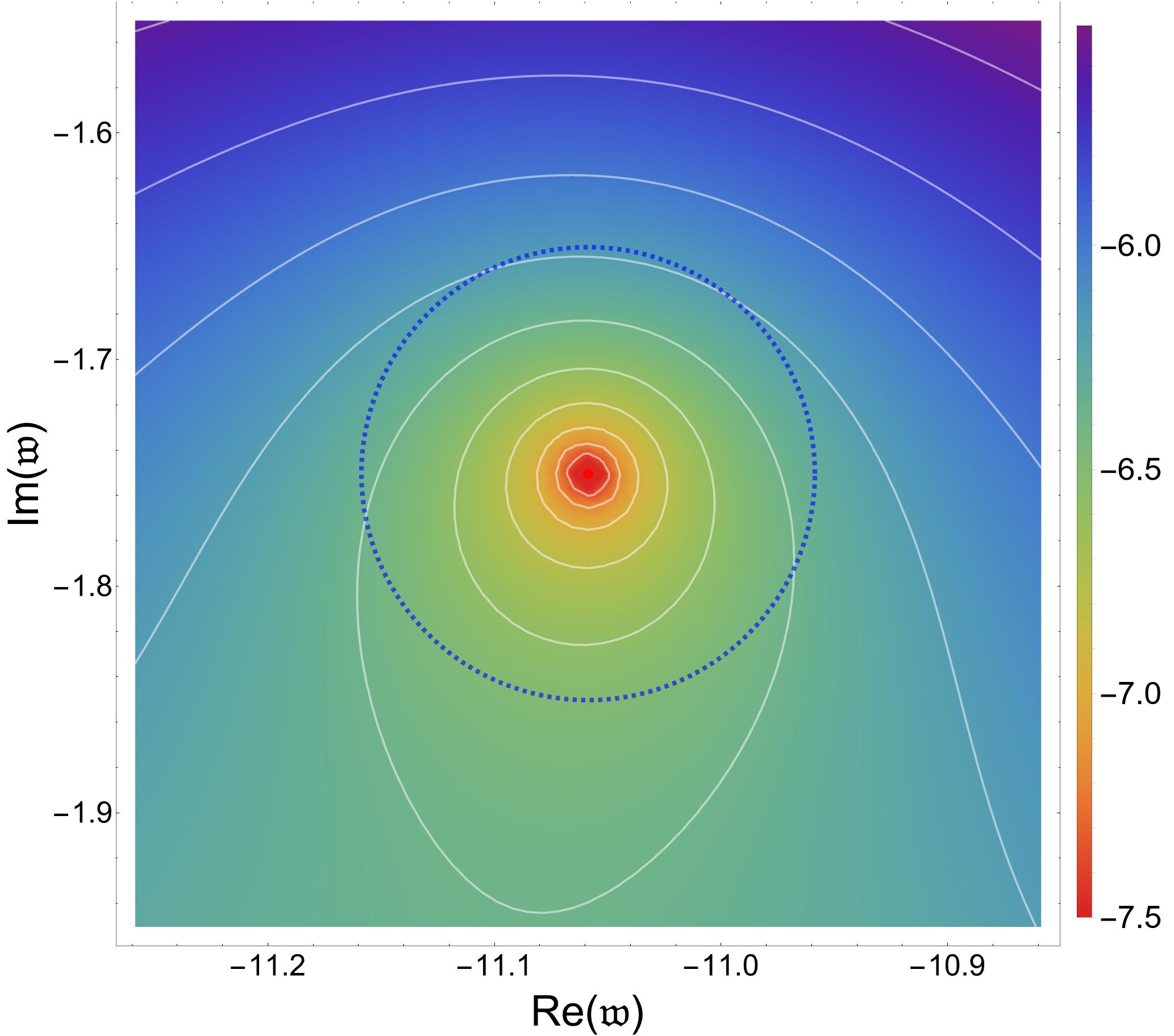}
            \captionsetup{justification=centering}
            \caption{$m^2l^2=0$, $\mathfrak{q}=10$.}
            \label{fig:CloseupPseudoMasslessScalarq10L2}
        \end{subfigure}
        \begin{subfigure}[b]{0.47\linewidth}
            \centering
            \includegraphics[width=\linewidth]{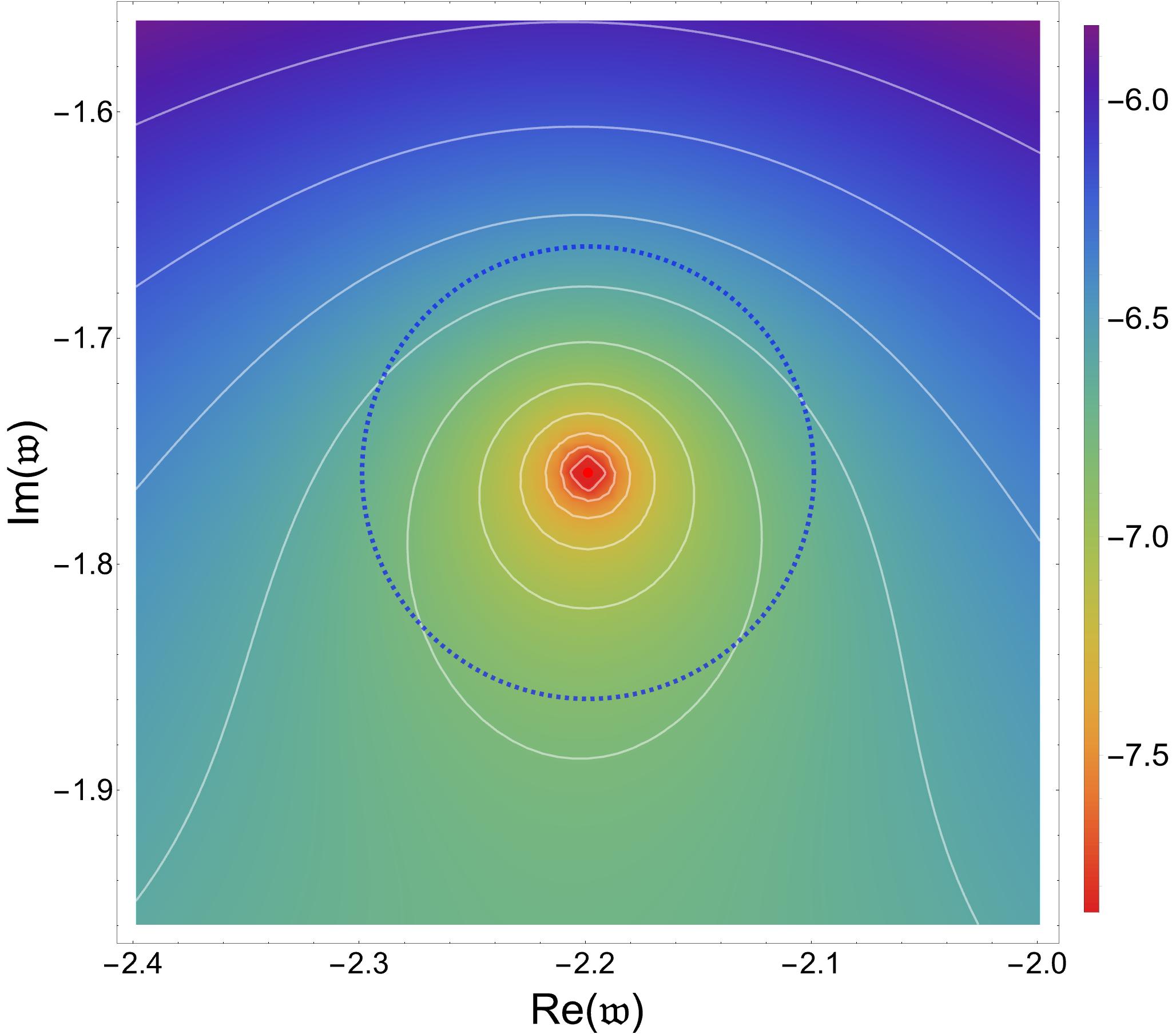}
            \captionsetup{justification=centering}
            \caption{$m^2l^2=-3$, $\mathfrak{q}=0$.}
            \label{fig:CloseupPseudoMassiveScalarq0L2}
        \end{subfigure}\hfill
        \begin{subfigure}[b]{0.47\linewidth}
            \centering
            \includegraphics[width=\linewidth]{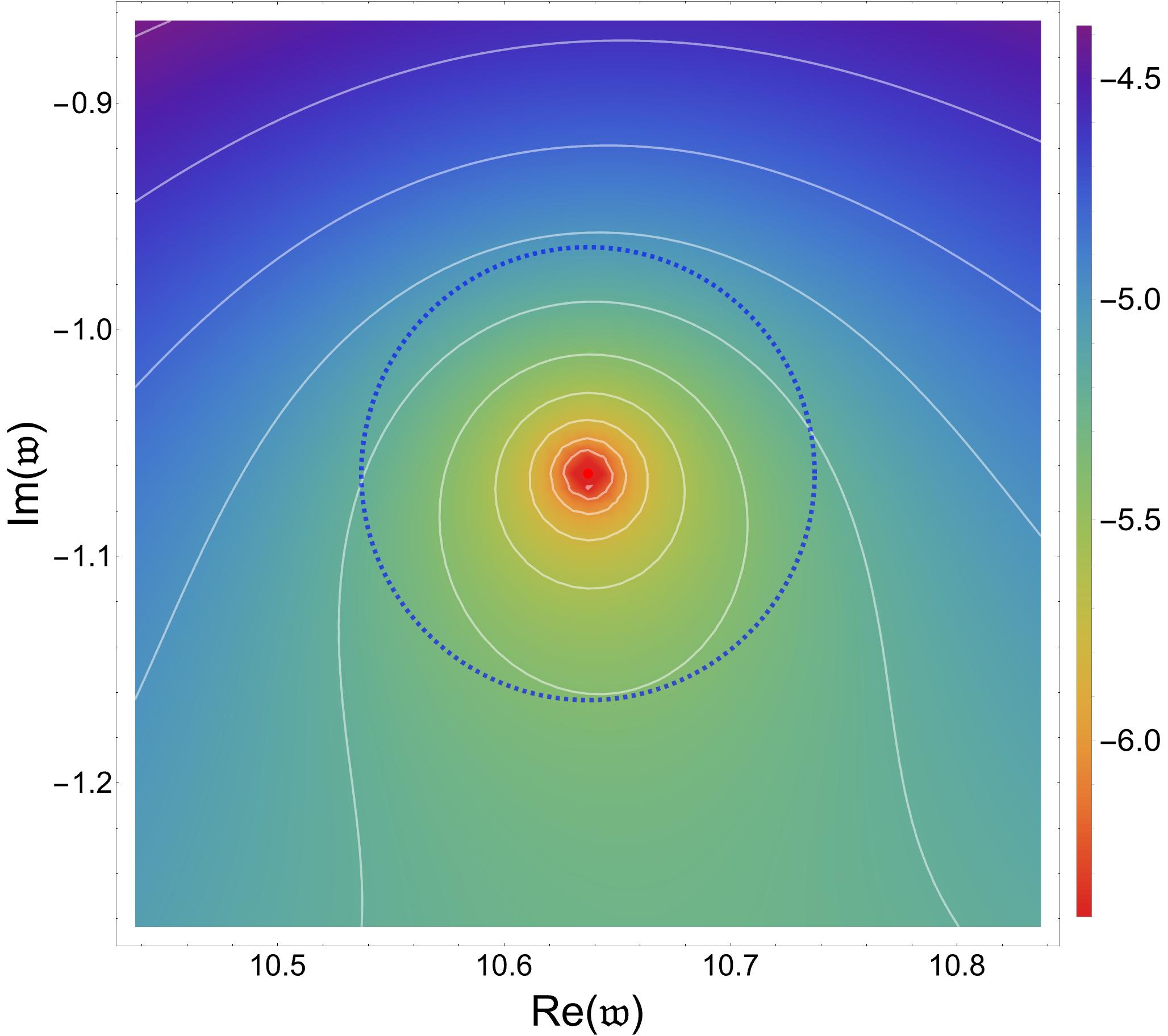}
            \captionsetup{justification=centering}
            \caption{$m^2l^2=-3$, $\mathfrak{q}=10$.}
            \label{fig:CloseupPseudoMassiveScalarq10L2}
        \end{subfigure}
        \caption{Close-up of the scalar pseudospectrum in the $L^2$-norm around the first QNF for different values of $\mathfrak{q}$ and $m^2l^2$. The red dot corresponds to the QNF, the white lines represent the boundaries of various full $\varepsilon$-pseudospectra, and the dashed blue circle symbolizes a circle with a radius of $10^{-1}$ centered on the QNF. The heat map corresponds to the logarithm in base 10 of the inverse of the resolvent. Here the selective pseudospectra computed with random local potential perturbations of size $10^{-1}$ is denoted by blue dots hidden behind the QNF.}
        \label{fig:CloseupPseudospectraScalarL2}
    \end{figure}

    \begin{figure}[!htb]
        \centering
        \begin{subfigure}[b]{0.49\linewidth}
            \centering
            \includegraphics[width=\linewidth]{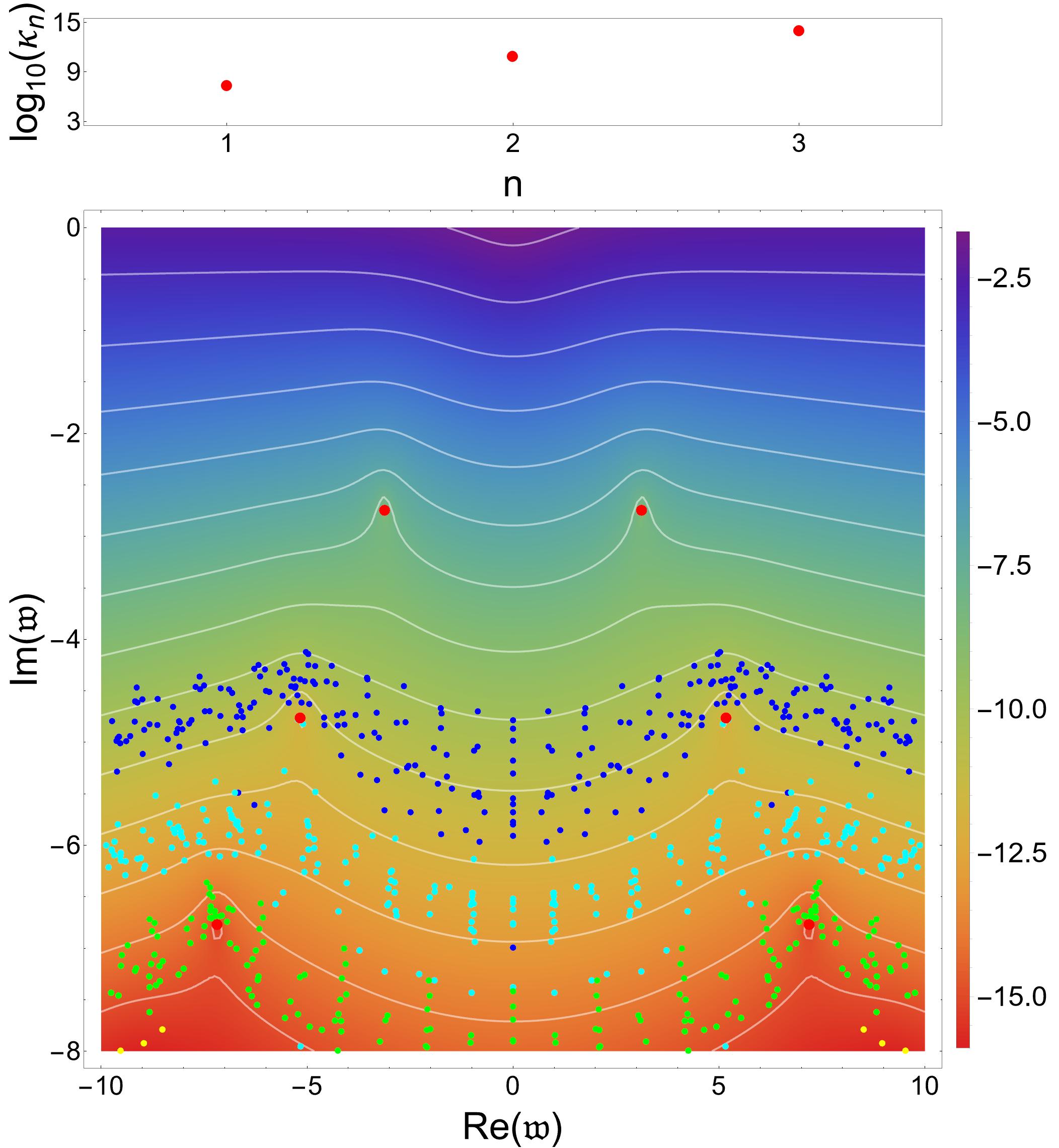}
            \captionsetup{justification=centering}
            \caption{$m^2l^2=0$, $\mathfrak{q}=0$.}
            \label{fig:LargePseudoMasslessScalarq0L2}
        \end{subfigure}\hfill
        \begin{subfigure}[b]{0.49\linewidth}
            \centering
            \includegraphics[width=\linewidth]{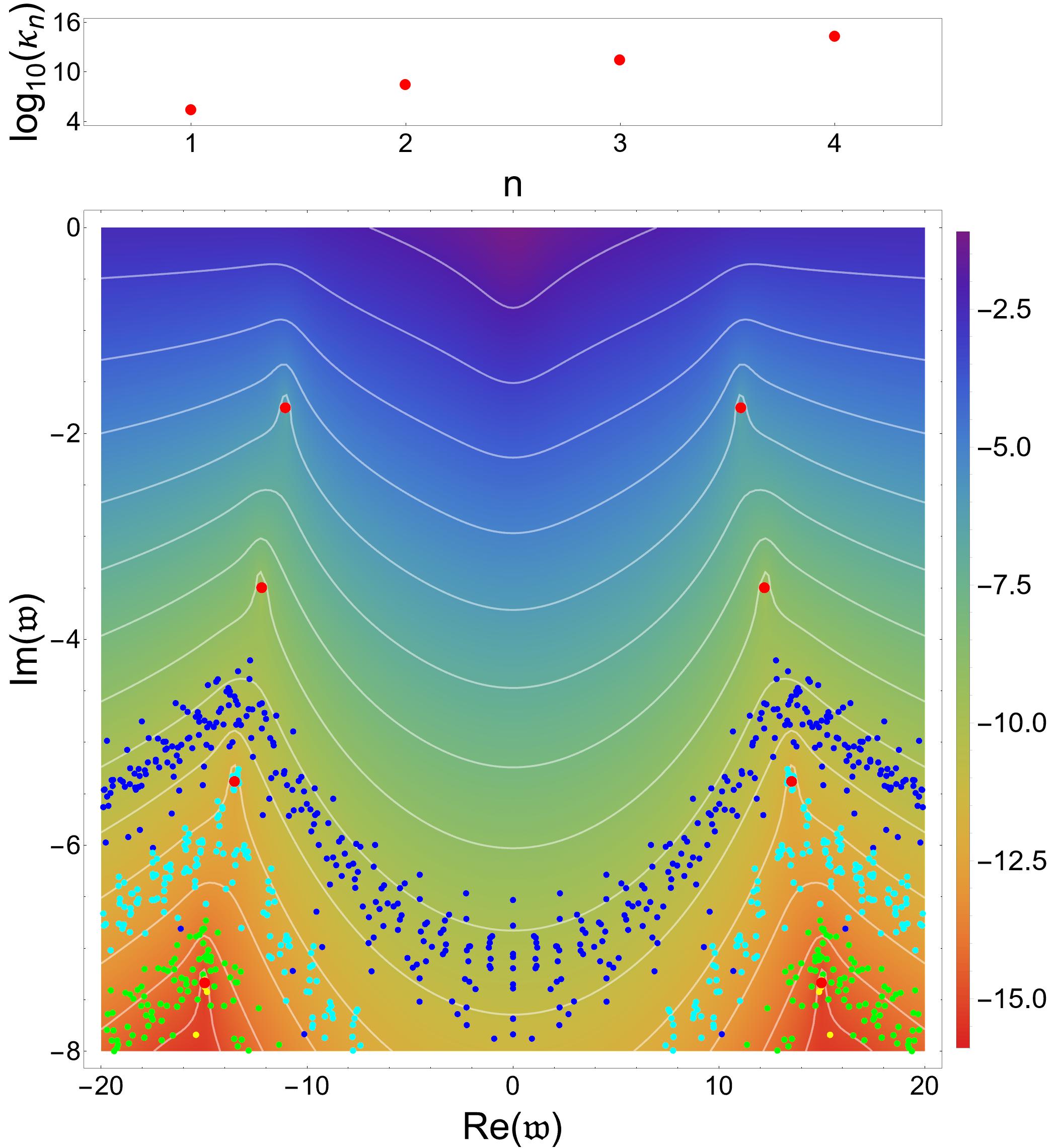}
            \captionsetup{justification=centering}
            \caption{$m^2l^2=0$, $\mathfrak{q}=10$.}
            \label{fig:LargePseudoMasslessScalarq10L2}
        \end{subfigure}
        \begin{subfigure}[b]{0.49\linewidth}
            \centering
            \includegraphics[width=\linewidth]{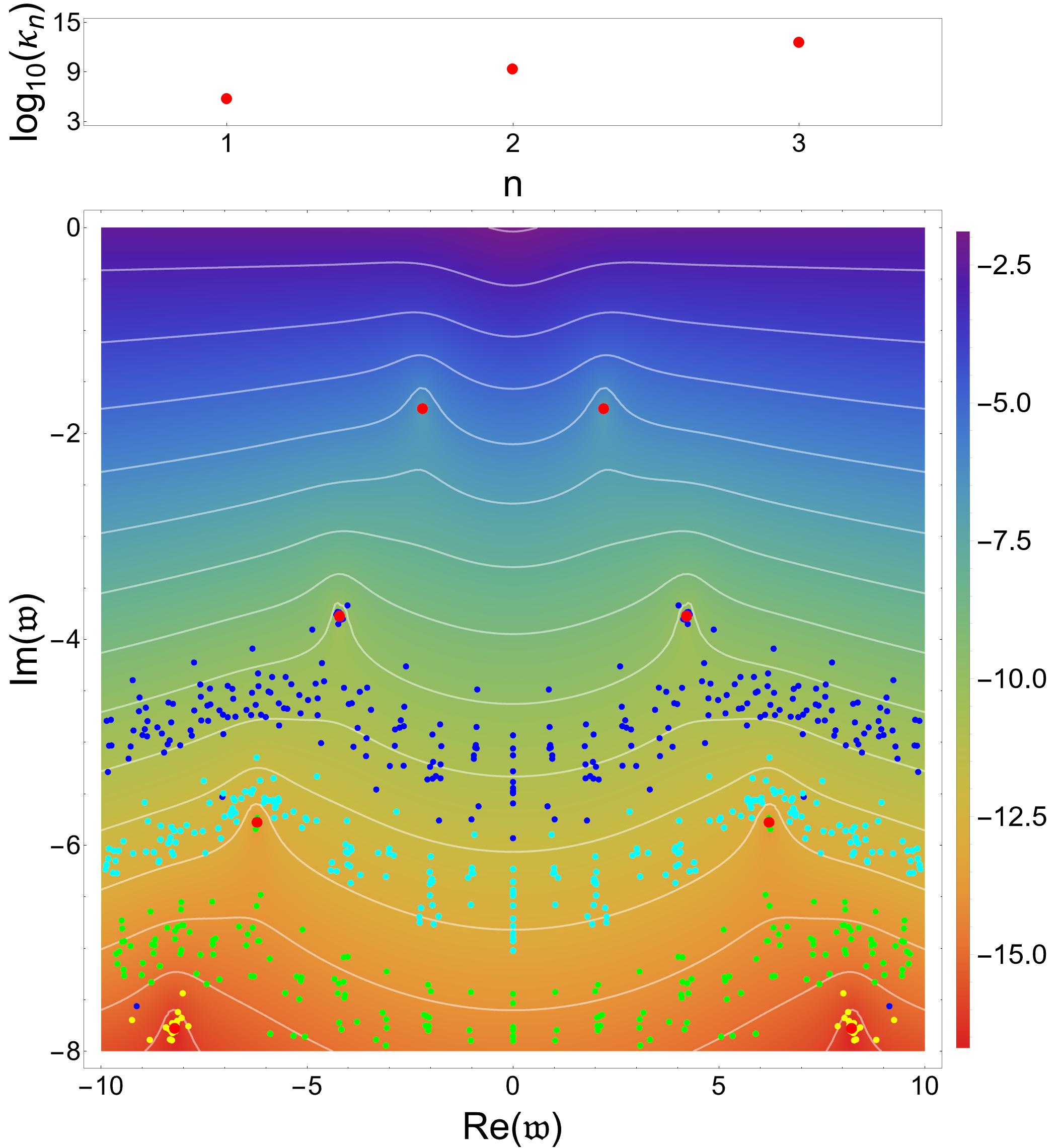}
            \captionsetup{justification=centering}
            \caption{$m^2l^2=-3$, $\mathfrak{q}=0$.}
            \label{fig:LargePseudoMassiveScalarq0L2}
        \end{subfigure}\hfill
        \begin{subfigure}[b]{0.49\linewidth}
            \centering
            \includegraphics[width=\linewidth]{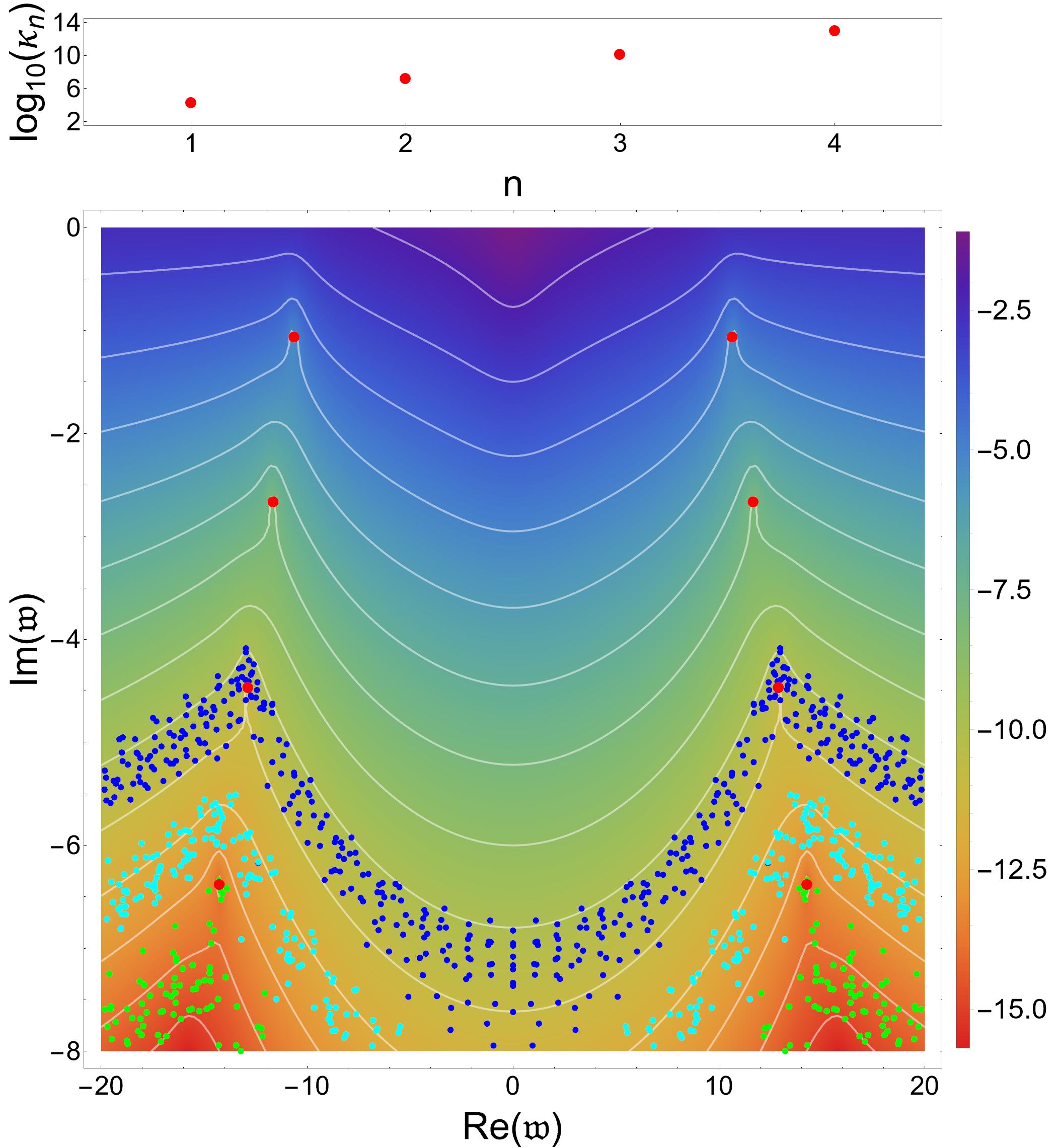}
            \captionsetup{justification=centering}
            \caption{$m^2l^2=-3$, $\mathfrak{q}=10$.}
            \label{fig:LargePseudoMassiveScalarq10L2}
        \end{subfigure}
        \caption{Scalar pseudospectrum in the $L^2$-norm for different values of $\mathfrak{q}$ and $m^2l^2$. In the lower panels, we present selective and full pseudospectra. The red dots represent the QNFs, and the white lines denote the boundaries of different full $\varepsilon$-pseudospectra. The heat map corresponds to the logarithm in base 10 of the inverse of the resolvent, while the blue, cyan, green, and yellow dots indicate different selective $\varepsilon$-pseudospectra computed with random local potential perturbations of size $10^{-1}$, $10^{-3}$, $10^{-5}$, and $10^{-7}$; respectively. In the upper panels, we represent the condition numbers.}
        \label{fig:LargePseudospectraScalarL2}
    \end{figure}

Next we move on to the transverse gauge field of section~\ref{ssec:gaugefield}
and display the results for its  pseudospectrum in the $L^2$-norm.
In figure~\ref{fig:CloseupPseudospectraGFL2} we zoom in on the first QNF, while in figure~\ref{fig:LargePseudospectraGFL2} we show the pseudospectrum down to the position of the fourth QNF. We again observe open contour lines indicating instability under generic perturbations.
As for the scalar field, the qualitative shape of the pseudospectra is very similar to what we found in the energy norm. However, we once again find that in the $L^2$-norm the spectrum is more unstable under generic perturbations and more stable under local potential perturbations.
This becomes clear upon comparing the pseudospectra in figure~\ref{fig:LargePseudospectraGFL2} to those in figure~\ref{fig:LargePseudospectraGF} corresponding to the energy norm.

\begin{figure}[htb!]
        \centering
        \begin{subfigure}[b]{0.49\linewidth}
            \centering
            \includegraphics[width=\linewidth]{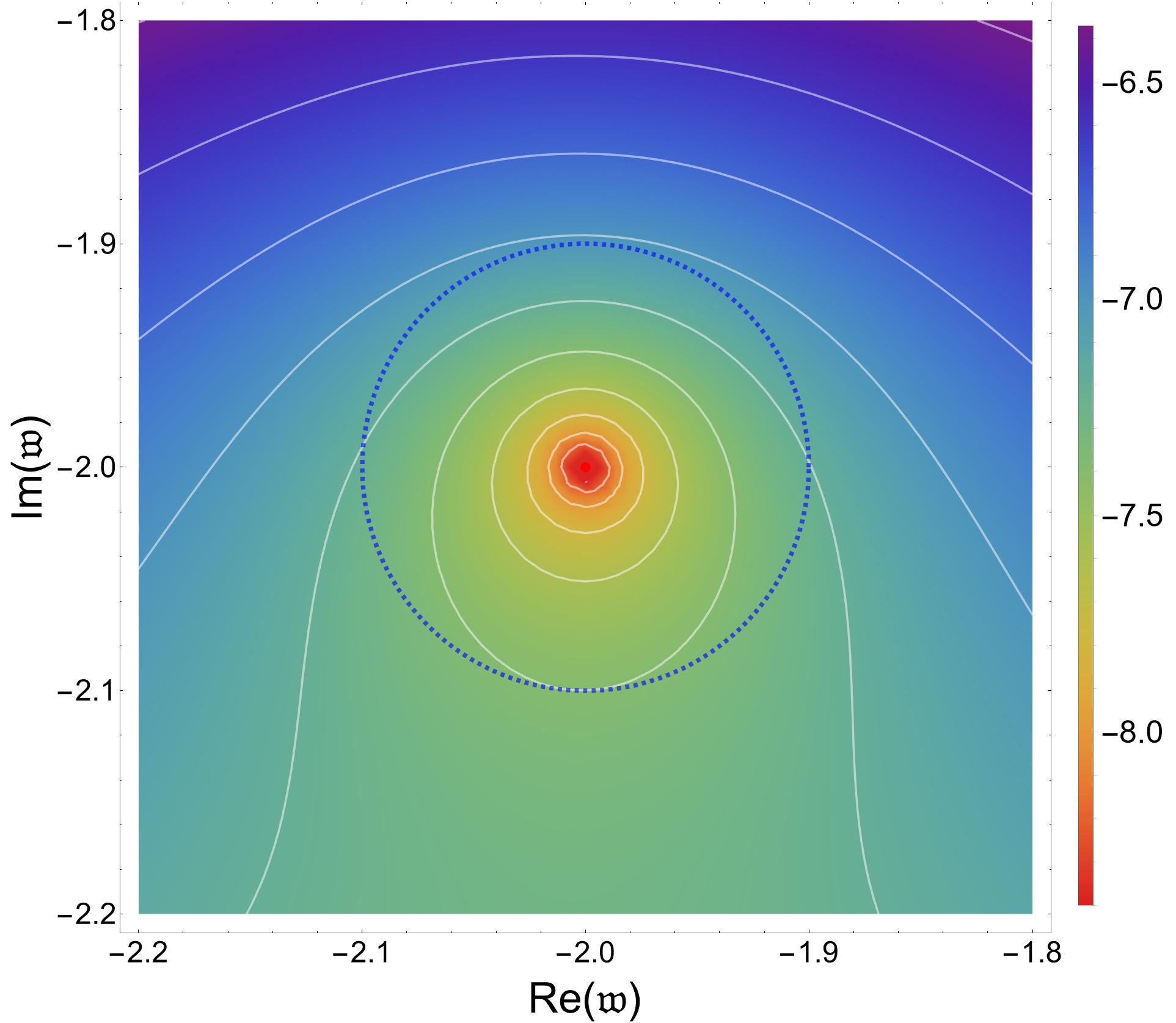}
            \captionsetup{justification=centering}
            \caption{$\mathfrak{q}=0$.}
            \label{fig:CloseupPseudoGFq0L2}
        \end{subfigure}\hfill
        \begin{subfigure}[b]{0.49\linewidth}
            \centering
            \includegraphics[width=\linewidth]{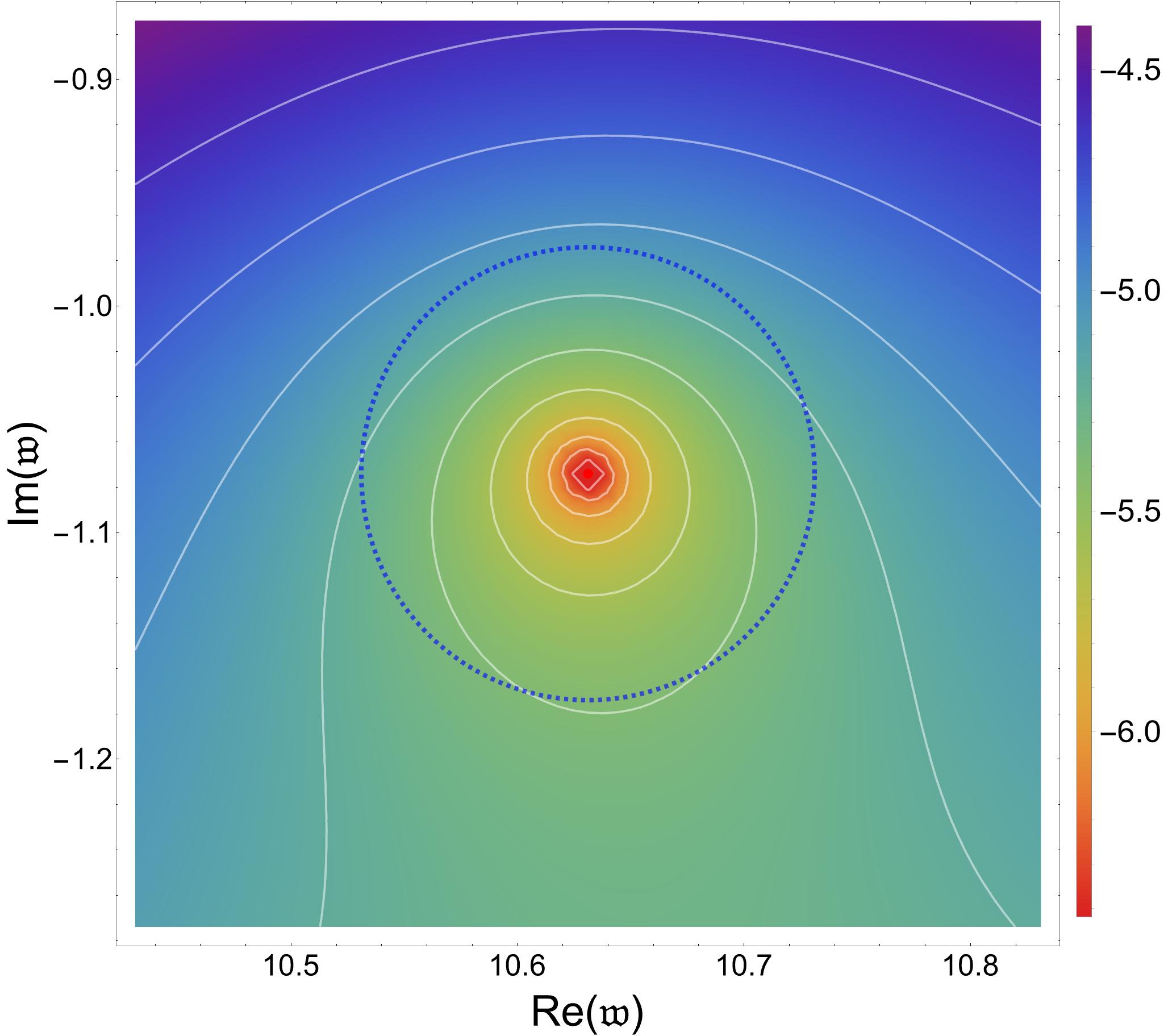}
            \captionsetup{justification=centering}
            \caption{$\mathfrak{q}=10$.}
            \label{fig:CloseupPseudoGFq10L2}
        \end{subfigure}
        \caption{Close-up of the transverse gauge field pseudospectrum in the $L^2$-norm around the first QNF for different values of $\mathfrak{q}$. The red dot corresponds to the QNF, the white lines represent the boundaries of various full $\varepsilon$-pseudospectra, and the dashed blue circle symbolizes a circle with a radius of $10^{-1}$ centered on the QNF. The heat map corresponds to the logarithm in base 10 of the inverse of the resolvent. Here the selective pseudospectra computed with random local potential perturbations of size $10^{-1}$ is hidden by the QNF.}
        \label{fig:CloseupPseudospectraGFL2}
    \end{figure}

 \begin{figure}[!htb]
        \centering
        \begin{subfigure}[b]{0.49\linewidth}
            \centering
            \includegraphics[width=\linewidth]{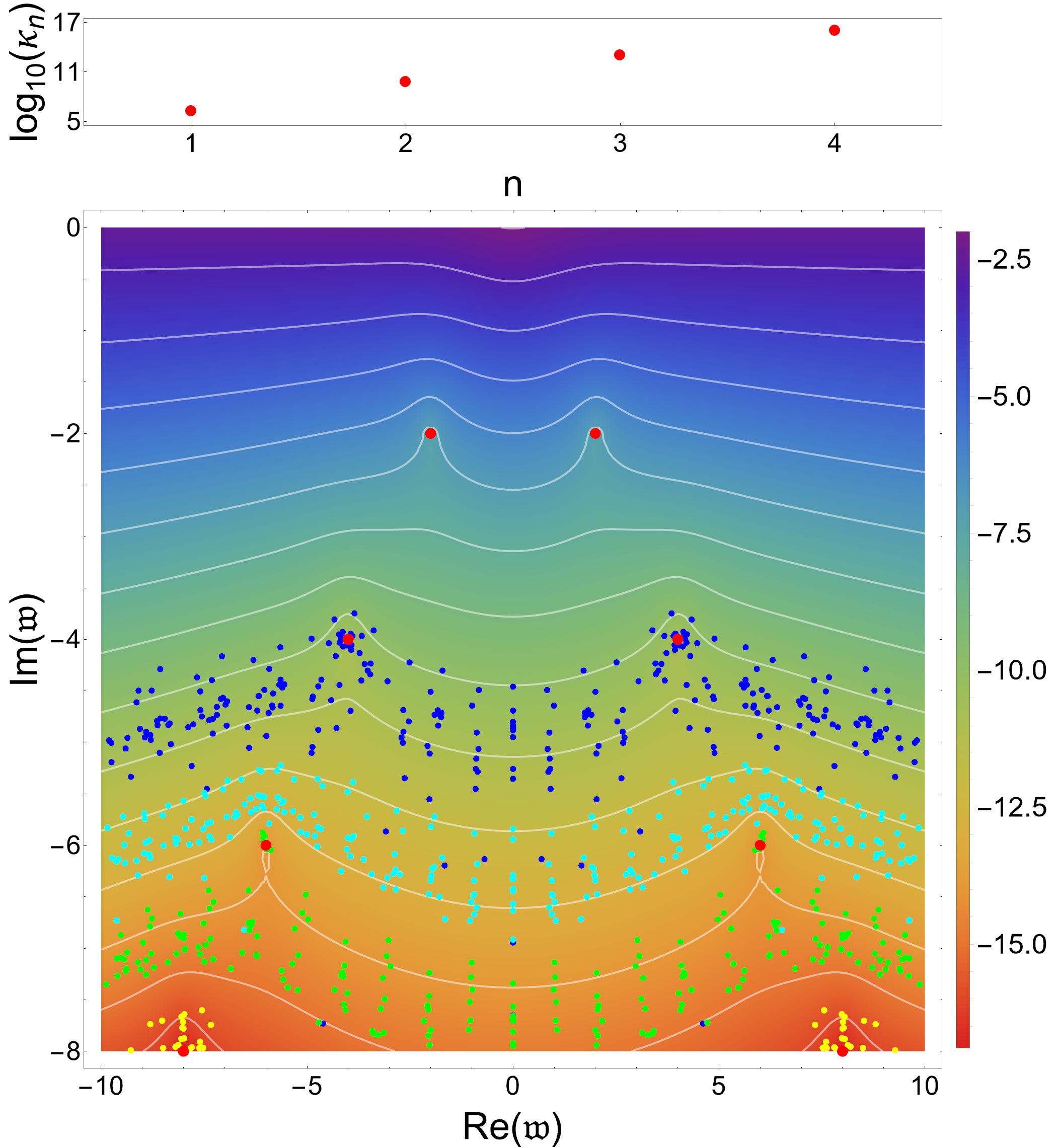}
            \captionsetup{justification=centering}
            \caption{$\mathfrak{q}=0$.}
            \label{fig:LargePseudoGFq0L2}
        \end{subfigure}\hfill
        \begin{subfigure}[b]{0.49\linewidth}
            \centering
            \includegraphics[width=\linewidth]{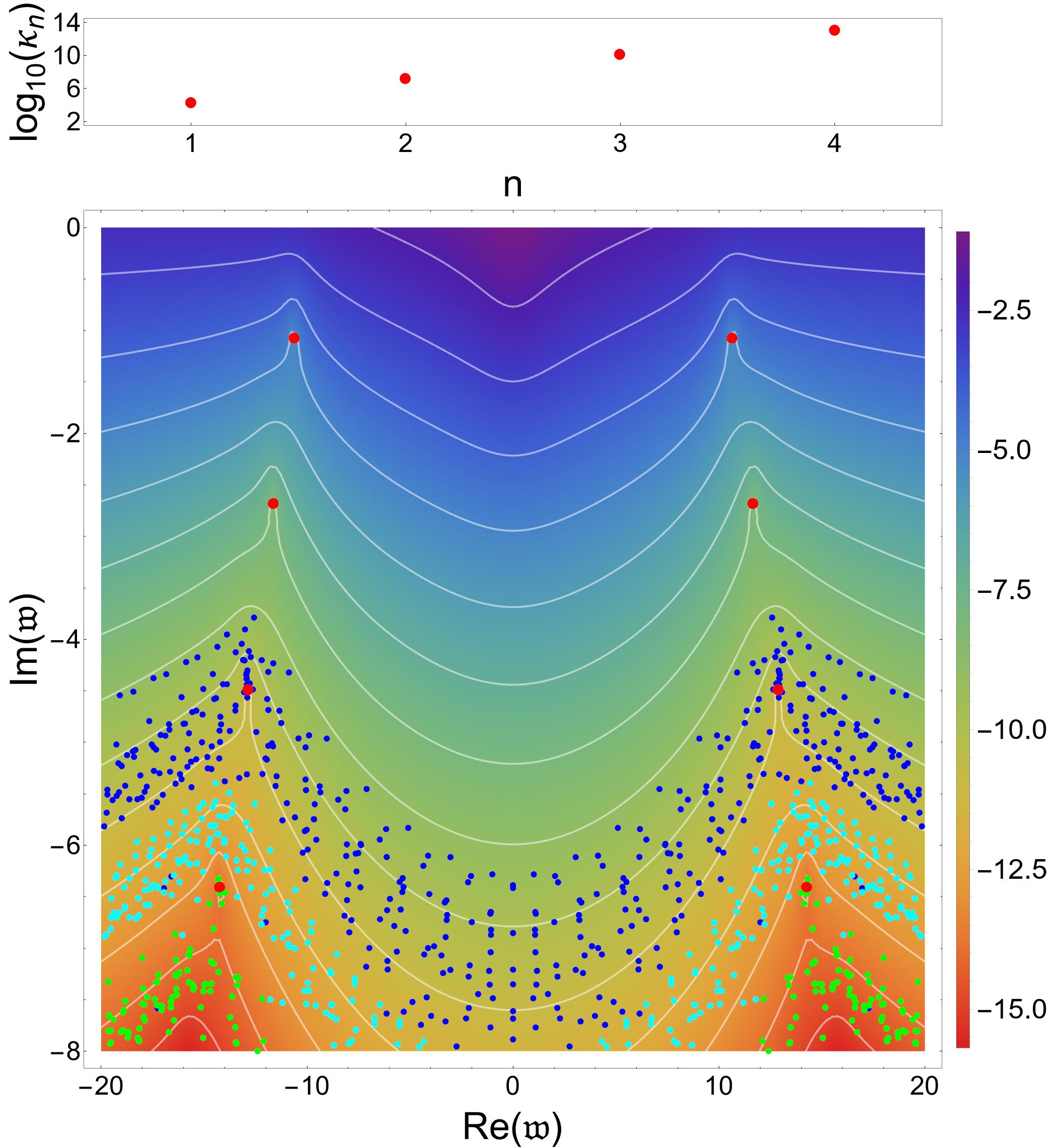}
            \captionsetup{justification=centering}
            \caption{$\mathfrak{q}=10$.}
            \label{fig:LargePseudoGFq10L2}
        \end{subfigure}
        \caption{Transverse gauge field pseudospectrum in the $L^2$-norm for different values of $\mathfrak{q}$. In the lower panels, we present selective and full pseudospectra. The red dots represent the QNFs, and the white lines denote the boundaries of different full $\varepsilon$-pseudospectra. The heat map corresponds to the logarithm in base 10 of the inverse of the resolvent, while the blue, cyan, green, and yellow dots indicate different selective $\varepsilon$-pseudospectra computed with random local potential perturbations of size $10^{-1}$, $10^{-3}$, $10^{-5}$, and $10^{-7}$; respectively. In the upper panels, we represent the condition numbers.}
        \label{fig:LargePseudospectraGFL2}
	\end{figure}

We end this section by pointing out that while qualitatively similar, the pseudospectra in the 
energy and $L^2$-norm are quantitatively different.
Indeed, although the shape of the contour map for the same model is very similar in both norms
the quantitative value of the pseudospectrum as illustrated by the color code varies markedly from the energy to the $L^2$-norm.
This observation further stresses the importance of properly defining a physically motivated norm: the discrepancy in the definition of size of the potential perturbations between the $L^2$ and energy norms results in quantitatively different phenomenology.

\clearpage

\section{Numerical Values of the QNFs}\label{app: Numerical Values of the QNFs}
In this appendix we provide the numerical values of the first 10 quasinormal frequencies for the models discussed in the main text. For purposes of presentation we limit the precision to 15 significant figures. 

\begin{table}[!htb]
    \centering
    \begin{tabular}{|c|c|c|}
        \hline
         $n$&$\Re(\mathfrak{w}_n)$&$\Im(\mathfrak{w}_n)$  \\
         \hline
         \hline
         $1$& $\pm3.11945159116590$ & $-2.74667574434061$  \\
         \hline
         $2$& $\pm5.16952097576223$ & $-4.76357011692046$ \\
         \hline
         $3$& $\pm7.18793082599961$ & $-6.76956490907620$ \\
         \hline
         $4$& $\pm9.19719929048412$ & $-8.77248110302029$ \\
         \hline
         $5$& $\pm11.2026757899702$ & $-10.7741623965097$ \\
         \hline
         $6$& $\pm13.2062465176047$ & $-12.7752389828404$ \\
         \hline
         $7$& $\pm15.2087360641333$ & $-14.7759793025747$ \\
         \hline
         $8$& $\pm17.2105584199107$ & $-16.7765153586509$ \\
         \hline
         $9$& $\pm19.2119427139029$ & $-18.7769189939199$ \\
         \hline
         $10$& $\pm21.2130253291193$ & $-20.7772323845860$ \\
         \hline
    \end{tabular}
    \caption{Real scalar QNFs for $m^2l^2=0$ and $\mathfrak{q}=0$.}
\end{table}

\hfill\break

\hfill\break

\begin{table}[!htb]
    \centering
    \begin{tabular}{|c|c|c|}
        \hline
         $n$&$\Re(\mathfrak{w}_n)$&$\Im(\mathfrak{w}_n)$  \\
         \hline
         \hline
         $1$& $\pm11.0586349863470$ & $-1.75038566826434$ \\
         \hline
         $2$& $\pm12.2041502935249$ & $-3.49827856173063$ \\
         \hline
         $3$& $\pm13.5234927685557$ & $-5.37996997551963$ \\
         \hline
         $4$& $\pm14.9758148001587$ & $-7.33629305380262$ \\
         \hline
         $5$& $\pm16.5292996000810$ & $-9.33351552595414$ \\
         \hline
         $6$& $\pm18.1599632430830$ & $-11.3527046127907$ \\
         \hline
         $7$& $\pm19.8500241826712$ & $-13.3832519413241$ \\
         \hline
         $8$& $\pm21.5863602228304$ & $-15.4192020702742$ \\
         \hline
         $9$& $\pm23.3592481371511$ & $-17.4572088679779$ \\
         \hline
         $10$& $\pm25.1614156884878$ & $-19.4954019803536$ \\
         \hline
    \end{tabular}
    \caption{Real scalar QNFs for $m^2l^2=0$ and $\mathfrak{q}=10$.}
\end{table}

\clearpage

\hfill\break

\begin{table}[!htb]
    \centering
    \begin{tabular}{|c|c|c|}
        \hline
         $n$&$\Re(\mathfrak{w}_n)$&$\Im(\mathfrak{w}_n)$  \\
         \hline
         \hline
         $1$& $\pm2.19881456585250$ & $-1.75953462713300$ \\
         \hline
         $2$& $\pm4.21189720328773$ & $-3.77488823578666$ \\
         \hline
         $3$& $\pm6.21554314901884$ & $-5.77725701316514$ \\
         \hline
         $4$& $\pm8.21716723825394$ & $-7.77808021954300$ \\
         \hline
         $5$& $\pm10.2180612360290$ & $-9.77847388400658$ \\
         \hline
         $6$& $\pm12.2186177048135$ & $-11.7786974714808$ \\
         \hline
         $7$& $\pm14.2189931785234$ & $-13.7788389290514$ \\
         \hline
         $8$& $\pm16.2192614139935$ & $-15.7789352848867$ \\
         \hline
         $9$& $\pm18.2194613734973$ & $-17.7790045343160$ \\
         \hline
         $10$& $\pm20.2196154334378$ & $-19.7790563669305$ \\
         \hline
    \end{tabular}
    \caption{Real scalar QNFs for $m^2l^2=-3$ and $\mathfrak{q}=0$.}
\end{table}

\hfill\break

\hfill\break

\begin{table}[!htb]
    \centering
    \begin{tabular}{|c|c|c|}
        \hline
         $n$&$\Re(\mathfrak{w}_n)$&$\Im(\mathfrak{w}_n)$  \\
         \hline
         \hline
         $1$& $\pm10.6370087422563$ & $-1.06356241400480$ \\
         \hline
         $2$& $\pm11.6542293745917$ & $-2.66476173979828$ \\
         \hline
         $3$& $\pm12.8816366495948$ & $-4.46855950305745$ \\
         \hline
         $4$& $\pm14.2652641255372$ & $-6.38140408881256$ \\
         \hline
         $5$& $\pm15.7667370212845$ & $-8.35383981758053$ \\
         \hline
         $6$& $\pm17.3576290472434$ & $-10.3587501137841$ \\
         \hline
         $7$& $\pm19.0169202769993$ & $-12.3810472715686$ \\
         \hline
         $8$& $\pm20.7291315800900$ & $-14.4122658126651$ \\
         \hline
         $9$& $\pm22.4828389281553$ & $-16.4476266560906$ \\
         \hline
         $10$& $\pm24.2695453187766$ & $-18.4844261072115$ \\
         \hline
    \end{tabular}
    \caption{Real scalar QNFs for $m^2l^2=-3$ and $\mathfrak{q}=10$.}
\end{table}

\clearpage

\hfill\break

\begin{table}[!htb]
    \centering
    \begin{tabular}{|c|c|c|}
        \hline
         $n$&$\Re(\mathfrak{w}_n)$&$\Im(\mathfrak{w}_n)$  \\
         \hline
         \hline
         $1$& $\pm2.00000000000000$ & $-2.00000000000000$ \\
         \hline
         $2$& $\pm4.00000000000000$ & $-4.00000000000000$ \\
         \hline
         $3$& $\pm6.00000000000000$ & $-6.00000000000000$ \\
         \hline
         $4$& $\pm8.00000000000000$ & $-8.00000000000000$ \\
         \hline
         $5$& $\pm10.0000000000000$ & $-10.0000000000000$ \\
         \hline
         $6$& $\pm12.0000000000000$ & $-12.0000000000000$ \\
         \hline
         $7$& $\pm14.0000000000000$ & $-14.0000000000000$ \\
         \hline
         $8$& $\pm16.0000000000000$ & $-16.0000000000000$ \\
         \hline
         $9$& $\pm18.0000000000000$ & $-18.0000000000000$ \\
         \hline
         $10$& $\pm20.0000000000000$ & $-20.0000000000000$ \\
         \hline
    \end{tabular}
    \caption{Transverse gauge field QNFs for $\mathfrak{q}=0$.}
\end{table}

\hfill\break

\hfill\break

\begin{table}[!htb]
    \centering
    \begin{tabular}{|c|c|c|}
        \hline
         $n$&$\Re(\mathfrak{w}_n)$&$\Im(\mathfrak{w}_n)$  \\
         \hline
         \hline
         $1$& $\pm10.6311305880202$ & $-1.07401086341959$ \\
         \hline
         $2$& $\pm11.6456703226270$ & $-2.68058057268396$ \\
         \hline
         $3$& $\pm12.8710360299982$ & $-4.48891090932708$ \\
         \hline
         $4$& $\pm14.2530488610374$ & $-6.40563035311599$ \\
         \hline
         $5$& $\pm15.7531754716734$ & $-8.38144582889587$ \\
         \hline
         $6$& $\pm17.3428934333756$ & $-10.3893551546947$ \\
         \hline
         $7$& $\pm19.0011254785217$ & $-12.4143536422258$ \\
         \hline
         $8$& $\pm20.7123581254599$ & $-14.4480366538192$ \\
         \hline
         $9$& $\pm22.4651463361787$ & $-16.4856700172425$ \\
         \hline
         $10$& $\pm24.2509798299875$ & $-18.5245835573417$ \\
         \hline
    \end{tabular}
    \caption{Transverse gauge field QNFs for $\mathfrak{q}=10$.}
\end{table}

\clearpage

\section{Pseudospectrum Algorithm}

   \label{app:algorithm}
    In this appendix we give a more explicit description of the algorithm employed when computing the full pseudospectrum and condition numbers using theorem  \ref{th:psuedospectrum in different norms}. For the sake of clarity, we do not present here the code but instead the following illustrative flowchart:
    \hfill\break
    \begin{figure}[H]
        \centering
        \includegraphics[width=0.89\linewidth]{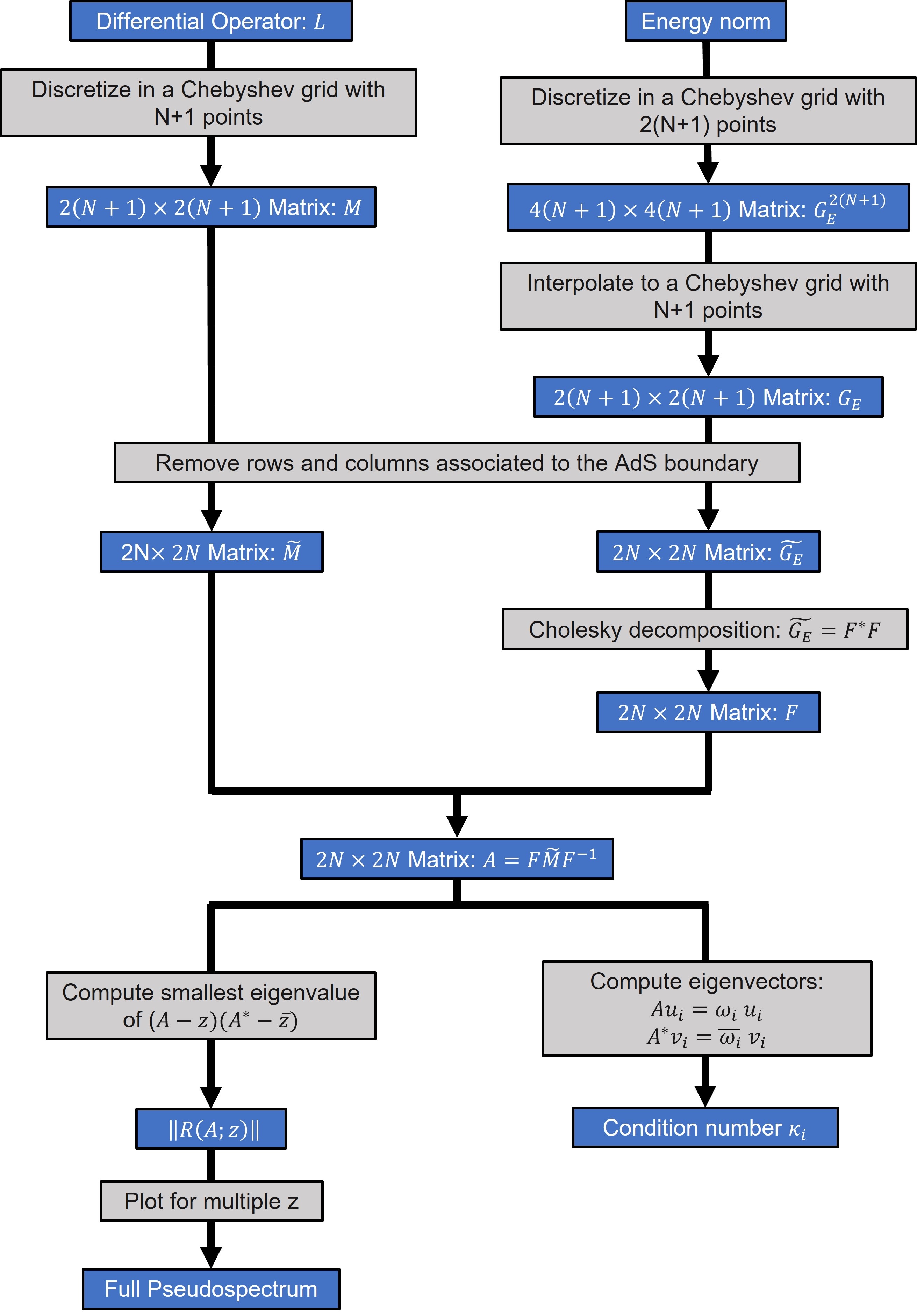}
    \end{figure}
\clearpage

\bibliographystyle{JHEP}
\bibliography{biblio}

\end{document}